\documentclass[journal, twocolumn]{IEEEtran}

% *** GRAPHICS RELATED PACKAGES ***
%
\ifCLASSINFOpdf
  % \usepackage[pdftex]{graphicx}
  % declare the path(s) where your graphic files are
  % \graphicspath{{../pdf/}{../jpeg/}}
  % and their extensions so you won't have to specify these with
  % every instance of \includegraphics
  % \DeclareGraphicsExtensions{.pdf,.jpeg,.png}
\else
  % or other class option (dvipsone, dvipdf, if not using dvips). graphicx
  % will default to the driver specified in the system graphics.cfg if no
  % driver is specified.
  % \usepackage[dvips]{graphicx}
  % declare the path(s) where your graphic files are
  % \graphicspath{{../eps/}}
  % and their extensions so you won't have to specify these with
  % every instance of \includegraphics
  % \DeclareGraphicsExtensions{.eps}
\fi

% *** MATH PACKAGES ***
%
\usepackage[cmex10]{amsmath}

\usepackage{amssymb,amsthm}
\usepackage{color}
\usepackage{graphics}

\usepackage{multirow}
\usepackage{hhline}

\usepackage{tikz}
\usetikzlibrary{angles,calc,intersections,quotes,arrows.meta}
\usetikzlibrary{calc, arrows,backgrounds,positioning,fit}
\usetikzlibrary{decorations.markings}
\tikzset{middlearrow/.style={
        decoration={markings,
            mark= at position 0.5 with {\arrow{#1}} ,
        },
        postaction={decorate}
    }
}

\allowdisplaybreaks

\usepackage[noadjust]{cite}

\makeatletter
\newtheorem*{rep@theorem}{\rep@title}
\newcommand{\newreptheorem}[2]{%
\newenvironment{rep#1}[1]{%
 \def\rep@title{#2 \ref{##1}}%
 \begin{rep@theorem}}%
 {\end{rep@theorem}}}
\makeatother

%% I suggest numbering each environment separately -- FK
\theoremstyle{definition}
\newtheorem{definition}{Definition}

\theoremstyle{plain}
\newtheorem{theorem}{Theorem}
\newreptheorem{theorem}{Theorem}
\newtheorem{proposition}{Proposition}
\newtheorem{lemma}{Lemma}
\newtheorem{remark}{Remark}

\DeclareMathOperator{\Rk}{Rk}
\DeclareMathOperator{\diag}{diag}
\DeclareMathOperator{\Row}{Row}
\DeclareMathOperator{\Col}{Col}
\DeclareMathOperator{\dd}{d}
\DeclareMathOperator{\wt}{wt}
\DeclareMathOperator{\Id}{Id}

% correct bad hyphenation here
%\hyphenation{op-tical net-works semi-conduc-tor}

\begin{document}
%
% paper title
% Titles are generally capitalized except for words such as a, an, and, as,
% at, but, by, for, in, nor, of, on, or, the, to and up, which are usually
% not capitalized unless they are the first or last word of the title.
% Linebreaks \\ can be used within to get better formatting as desired.
% Do not put math or special symbols in the title.
\title{Reliable and Secure Multishot Network Coding using Linearized Reed-Solomon Codes}
%
%
% author names and IEEE memberships
% note positions of commas and nonbreaking spaces ( ~ ) LaTeX will not break
% a structure at a ~ so this keeps an author's name from being broken across
% two lines.
% use \thanks{} to gain access to the first footnote area
% a separate \thanks must be used for each paragraph as LaTeX2e's \thanks
% was not built to handle multiple paragraphs
%
\author{Umberto~Mart{\'i}nez-Pe\~{n}as,~\IEEEmembership{Member,~IEEE,} %\IEEEmembership{Student Member,~IEEE,}
		and Frank~R.~Kschischang,~\IEEEmembership{Fellow,~IEEE,}%~\IEEEmembership{Member,~IEEE}% <-this % stops a space
\thanks{This work is supported by The Independent Research Fund Denmark under Grant No. DFF-7027-00053B. }
\thanks{Parts of this paper were presented at the 56th Annual Allerton Conference on Communication, Control, and Computing, Monticello, IL, USA, 2018.}
\thanks{U. Mart{\'i}nez-Pe\~{n}as and F.~R.~Kschischang are with The Edward
S. Rogers Sr. Department of Electrical and Computer Engineering, University
of Toronto, Toronto, ON M5S 3G4, Canada. (e-mail: umberto@ece.utoronto.ca;
frank@ece.utoronto.ca)}}

%Copyright (c) 2014 IEEE. Personal use of this material is permitted. However, permission to use this material for any other purposes must be obtained from the IEEE by sending a request to pubs-permissions@ieee.org.}}% <-this % stops a space
% \thanks{Manuscript received April 19, 2005; revised September 17, 2014.}}

% note the % following the last \IEEEmembership and also \thanks - 
% these prevent an unwanted space from occurring between the last author name
% and the end of the author line. i.e., if you had this:
% 
% \author{....lastname \thanks{...} \thanks{...} }
%                     ^------------^------------^----Do not want these spaces!
%
% a space would be appended to the last name and could cause every name on that
% line to be shifted left slightly. This is one of those "LaTeX things". For
% instance, "\textbf{A} \textbf{B}" will typeset as "A B" not "AB". To get
% "AB" then you have to do: "\textbf{A}\textbf{B}"
% \thanks is no different in this regard, so shield the last } of each \thanks
% that ends a line with a % and do not let a space in before the next \thanks.
% Spaces after \IEEEmembership other than the last one are OK (and needed) as
% you are supposed to have spaces between the names. For what it is worth,
% this is a minor point as most people would not even notice if the said evil
% space somehow managed to creep in.

% The paper headers
\markboth{}%
{}
% The only time the second header will appear is for the odd numbered pages
% after the title page when using the twoside option.
% 
% *** Note that you probably will NOT want to include the author's ***
% *** name in the headers of peer review papers.                   ***
% You can use \ifCLASSOPTIONpeerreview for conditional compilation here if
% you desire.

% If you want to put a publisher's ID mark on the page you can do it like
% this:
%\IEEEpubid{0000--0000/00\$00.00~\copyright~2014 IEEE}
% Remember, if you use this you must call \IEEEpubidadjcol in the second
% column for its text to clear the IEEEpubid mark.

% use for special paper notices
%\IEEEspecialpapernotice{(Invited Paper)}

% make the title area
\maketitle

% As a general rule, do not put math, special symbols or citations
% in the abstract or keywords.

%% I changed the abstract to passive voice...  FK

\begin{abstract}
Multishot network coding is considered in a worst-case adversarial
setting in which an omniscient adversary with unbounded computational
resources may inject erroneous packets in up to $t$ links, erase up to
$\rho$ packets, and wire-tap up to $\mu$ links, all throughout $\ell$
shots of a linearly-coded network.  Assuming no knowledge of
the underlying linear network code (in particular, the 
network topology and underlying linear code may be random and change with time), a 
coding scheme achieving zero-error
communication and perfect secrecy is obtained based on linearized
Reed-Solomon codes. The scheme achieves the maximum possible secret
message size of $ \ell n^\prime
- 2t - \rho - \mu $ packets for coherent communication, where $ n^\prime $ is the number of
  outgoing links at the source, for any packet length $ m \geq n^\prime $ (largest
possible range). By lifting this construction, coding schemes for
non-coherent communication are obtained with information rates close to
optimal for practical instances. The required field size is $ q^m $, where
$ q > \ell $, thus $ q^m \approx \ell^{n^\prime} $, which is always smaller than that of
a Gabidulin code tailored for $ \ell $ shots, which would be at least $ 2^{\ell n^\prime} $.  A 
Welch-Berlekamp sum-rank decoding
algorithm for linearized Reed-Solomon codes is provided, having
quadratic complexity in the total length $n = \ell n^\prime $, and which
can be adapted to handle not only errors, but also erasures, wire-tap
observations and non-coherent communication. Combined with the 
obtained field size, the given decoding complexity is of 
$ \mathcal{O}(n^{\prime 4} \ell^2 \log(\ell)^2) $ operations in $ \mathbb{F}_2 $, whereas
the most efficient known decoding algorithm for a Gabidulin code has
a complexity of $ \mathcal{O}(n^{\prime 3.69} \ell^{3.69} \log(\ell)^2) $ operations in $ \mathbb{F}_2 $, assuming a multiplication in a finite field $ \mathbb{F} $ costs about $ \log(|\mathbb{F}|)^2 $ operations in $ \mathbb{F}_2 $.
\end{abstract}

% Note that keywords are not normally used for peerreview papers.
\begin{IEEEkeywords}
Linearized Reed-Solomon codes, multishot network coding, network
error-correction, sum-rank metric, sum-subspace codes, wire-tap channel.
\end{IEEEkeywords}

% For peer review papers, you can put extra information on the cover
% page as needed:
% \ifCLASSOPTIONpeerreview
% \begin{center} \bfseries EDICS Category: 3-BBND \end{center}
% \fi
%
% For peerreview papers, this IEEEtran command inserts a page break and
% creates the second title. It will be ignored for other modes.
\IEEEpeerreviewmaketitle

\section{Introduction} \label{sec intro}
% The very first letter is a 2 line initial drop letter followed
% by the rest of the first word in caps.
% 
% form to use if the first word consists of a single letter:
% \IEEEPARstart{A}{demo} file is ....
% 
% form to use if you need the single drop letter followed by
% normal text (unknown if ever used by IEEE):
% \IEEEPARstart{A}{}demo file is ....
% 
% Some journals put the first two words in caps:
% \IEEEPARstart{T}{his demo} file is ....
% 
% Here we have the typical use of a "T" for an initial drop letter
% and "HIS" in caps to complete the first word.
\IEEEPARstart{L}{inear} network coding over a finite field $ \mathbb{F}_{q_0} $, introduced in \cite{ahlswede, linearnetwork, Koetter2003}, permits maximum information flow from a
source to several sinks simultaneously in one shot (\textit{multicast}).
Moreover, for sufficiently large field size $ q_0 $, the maximum information
flow can be achieved with high probability by a random choice of coding
coefficients at each node, without knowledge of the network topology
\cite{random}. 

Correction of link errors was considered in \cite{cai-yeung,
yeung-partI, yeung-partII, jaggi-resilient, zhang-network,
yang-refined}, and secrecy against link wire-tapping was studied in
\cite{secure-network, cai-secure-network, feldman, jaggi-isit,
wiretapnetworks}. Some of these works assume probabilistic error
correction, and some require knowledge or modification of the linear
network code for the outer code construction.  Error-correcting codes
under an adversarial model without such requirements (thus compatible with random linear network coding) were first given in
\cite{errors-network, error-control} for \textit{non-coherent
communication} (in which the sink has no knowledge of the coding
coefficients of the incoming links), and in \cite{on-metrics} for
\textit{coherent communication}. Coding schemes that provide perfect
secrecy and zero-error communication, without knowledge or modification of the underlying linear network code, were first given in \cite{silva-universal}.
The coherent-case construction in that work (similarly in
\cite{errors-network, error-control, on-metrics}) is based on Gabidulin
codes \cite{gabidulin, roth}. If the \textit{packets} sent through the links
of the network in one shot are vectors in $ \mathbb{F}_{q_0}^m $, then coding schemes
based on Gabidulin codes achieve the maximum secret message
size of $ n^\prime - 2t - \rho - \mu $ packets, where $ n^\prime $ is
the number of outgoing links at the source, and for $ t $ (link) errors, $ \rho $
erasures and $ \mu $ wire-tapped links. Moreover, the packet length $ m $ is
only restricted to $ m \geq n^\prime $, which is the maximum possible
range for $ m $, where $ n^\prime $ is a constraint given by the channel.

\begin{figure*} [!t]
\begin{center}
\begin{tabular}{c@{\extracolsep{1cm}}c}
	\begin{tikzpicture}[line width=1pt, scale=0.77]
		\tikzstyle{every node}=[inner sep=0pt, minimum width=4.5pt]
		%\path[fill=gray] (2,0) -- (0,2) -- (-2,0) -- cycle;
		\draw (-3,0) node (v1) [draw, circle, fill=gray] {};
		\draw (-1.5,2) node (v2) [draw, circle, fill=gray] {};
		\draw (-1.5,-2) node (v3) [draw, circle, fill=gray] {};
		\draw (0,0) node (v4) [draw, circle, fill=gray] {};
		\draw (1.5,0) node (v5) [draw, circle, fill=gray] {};
		\draw (3,2) node (v6) [draw, circle, fill=gray] {};
		\draw (3,-2) node (v7) [draw, circle, fill=gray] {};
		\draw (-3.6,0) node () {\small $S$};
		\draw (3.6,2) node () {\small $T_1$};
		\draw (3.6,-2) node () {\small $T_2$};
		\draw[middlearrow={stealth}, ultra thick] (v1) -- (v2);
		\draw[middlearrow={stealth}, ultra thick] (v2) -- (v6);
		\draw[middlearrow={stealth}, ultra thick] (v5) -- (v6);
		\draw[middlearrow={stealth}, ultra thick] (v4) -- (v5);
		\draw[middlearrow={stealth}, ultra thick] (v2) -- (v4);
		\draw[middlearrow={stealth}, ultra thick] (v1) -- (v3);
		\draw[red, dashed] (v3) -- (v7);
		\draw[middlearrow={stealth}, ultra thick] (v5) -- (v7);
		\draw[middlearrow={stealth}, ultra thick] (v3) -- (v4);
		\draw (-2.7,1.2) node () {\small $a_1$};
		\draw (-2.7,-1.2) node () {\small $b_1$};
		\draw (-1.2,0.7) node () {\small $a_1$};
		\draw (-1.2,-0.7) node () {\small $b_1$};
		\draw (0.8,2.5) node () {\small $a_1$};
		\draw (0.8,-2.5) node () {\small $b_1$};
		\draw (0.8,0.6) node () {\small $a_1+b_1$};
		\draw (1.5,1.2) node () {\small $a_1+b_1$};
		\draw (1.5,-1.2) node () {\small $a_1+b_1$};
	\end{tikzpicture}
	
	\begin{tikzpicture}[line width=1pt, scale=0.77]
		\tikzstyle{every node}=[inner sep=0pt, minimum width=4.5pt]
		%\path[fill=gray] (2,0) -- (0,2) -- (-2,0) -- cycle;
		\draw (-3,0) node (v1) [draw, circle, fill=gray] {};
		\draw (-1.5,2) node (v2) [draw, circle, fill=gray] {};
		\draw (-1.5,-2) node (v3) [draw, circle, fill=gray] {};
		\draw (0,0) node (v4) [draw, circle, fill=gray] {};
		\draw (1.5,0) node (v5) [draw, circle, fill=gray] {};
		\draw (3,2) node (v6) [draw, circle, fill=gray] {};
		\draw (3,-2) node (v7) [draw, circle, fill=gray] {};
		\draw (-3.6,0) node () {\small $S$};
		\draw (3.6,2) node () {\small $T_1$};
		\draw (3.6,-2) node () {\small $T_2$};
		\draw[middlearrow={stealth}, ultra thick] (v1) -- (v2);
		\draw[middlearrow={stealth}, ultra thick] (v2) -- (v6);
		\draw[middlearrow={stealth}, ultra thick] (v5) -- (v6);
		\draw[middlearrow={stealth}, ultra thick] (v4) -- (v5);
		\draw[middlearrow={stealth}, ultra thick] (v2) -- (v4);
		\draw[middlearrow={stealth}, ultra thick] (v1) -- (v3);
		\draw[middlearrow={stealth}, ultra thick] (v3) -- (v7);
		\draw[middlearrow={stealth}, ultra thick] (v5) -- (v7);
		\draw[middlearrow={stealth}, ultra thick] (v3) -- (v4);
		\draw (-2.7,1.2) node () {\small $a_2$};
		\draw (-2.7,-1.2) node () {\small $b_2$};
		\draw (-1.2,0.7) node () {\small $a_2$};
		\draw (-1.2,-0.7) node () {\small $b_2$};
		\draw (0.8,2.5) node () {\small $a_2$};
		\draw (0.8,-2.5) node () {\small $b_2$};
		\draw (0.8,0.6) node () {\small $a_2$};
		\draw (1.5,1.2) node () {\small $a_2$};
		\draw (1.5,-1.2) node () {\small $a_2$};
	\end{tikzpicture}
	
	\begin{tikzpicture}[line width=1pt, scale=0.77]
		\tikzstyle{every node}=[inner sep=0pt, minimum width=4.5pt]
		%\path[fill=gray] (2,0) -- (0,2) -- (-2,0) -- cycle;
		\draw (-3,0) node (v1) [draw, circle, fill=gray] {};
		\draw (-1.5,2) node (v2) [draw, circle, fill=gray] {};
		\draw (-1.5,-2) node (v3) [draw, circle, fill=gray] {};
		\draw (0,0) node (v4) [draw, circle, fill=gray] {};
		\draw (1.5,0) node (v5) [draw, circle, fill=gray] {};
		\draw (3,2) node (v6) [draw, circle, fill=gray] {};
		\draw (3,-2) node (v7) [draw, circle, fill=gray] {};
		\draw (-3.6,0) node () {\small $S$};
		\draw (3.6,2) node () {\small $T_1$};
		\draw (3.6,-2) node () {\small $T_2$};
		\draw[middlearrow={stealth}, ultra thick] (v1) -- (v2);
		\draw[middlearrow={stealth}, ultra thick] (v2) -- (v6);
		\draw[red, dotted] (v5) -- (v6);
		\draw[red, dashed] (v4) -- (v5);
		\draw[middlearrow={stealth}, ultra thick] (v2) -- (v4);
		\draw[middlearrow={stealth}, ultra thick] (v1) -- (v3);
		\draw[red, dashed] (v3) -- (v7);
		\draw[red, dotted] (v5) -- (v7);
		\draw[middlearrow={stealth}, ultra thick] (v3) -- (v4);
		\draw (-2.7,1.2) node () {\small $a_3$};
		\draw (-2.7,-1.2) node () {\small $b_3$};
		\draw (-1.2,0.7) node () {\small $a_3$};
		\draw (-1.2,-0.7) node () {\small $b_3$};
		\draw (0.8,2.5) node () {\small $a_3$};
		\draw (0.8,-2.5) node () {\small $b_3$};
		\draw (0.8,0.6) node () {\small $a_3+b_3$};
		\draw (1.5,1.2) node () {\small $a_3+b_3$};
		\draw (1.5,-1.2) node () {\small $a_3+b_3$};
	\end{tikzpicture}
\end{tabular}
\end{center}

\caption{A pattern of link errors in multishot linear network coding.
Similarly for link observations by a wire-tapper. Dashed lines denote
links directly affected by the adversary, and dotted lines denote links
that suffer the effect indirectly. The network code may change with time. In this example, a rank deficiency results in an erasure for the first sink in the second shot. Although not depicted here, the network topology may also change with time. If the linear network code is defined over $ \mathbb{F}_{q_0} = \mathbb{F}_2 $, linearized Reed-Solomon codes in this case are defined over $ \mathbb{F}_{q^{n^\prime}} = \mathbb{F}_{q^2} $, $ 2 | q $, and allow us to use $ \ell < q $ shots of the network, thus $ | \mathbb{F}_{q^2} | \approx \ell^2 $, which is quadratic in $ \ell $. Observe that Gabidulin codes would require the field $ \mathbb{F}_{2^{2 \ell}} $, which is exponential in $ \ell $.}
\label{fig multishot network}
\end{figure*}
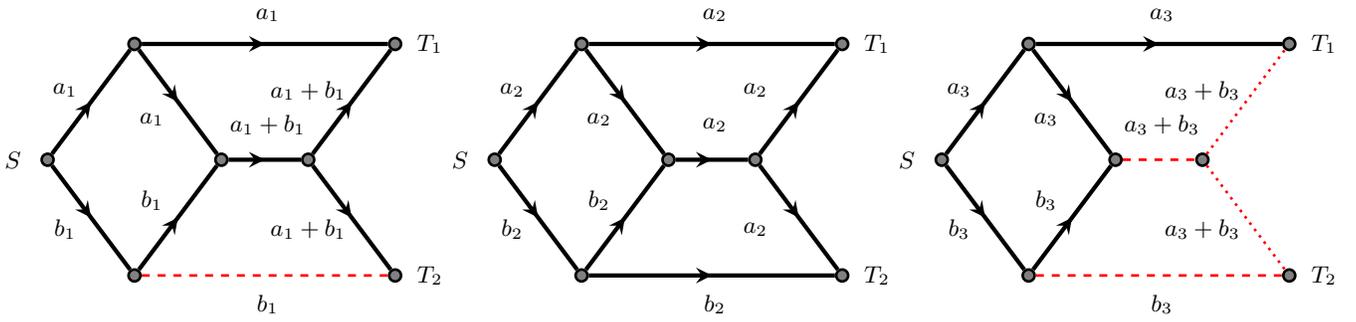

\begin{figure*} [!t]
\begin{center}
\begin{tabular}{c@{\extracolsep{1cm}}c}
	\begin{tikzpicture}[line width=1pt, scale=0.77]
		\tikzstyle{every node}=[inner sep=0pt, minimum width=4.5pt]
		%\path[fill=gray] (2,0) -- (0,2) -- (-2,0) -- cycle;
		\draw (-3,0) node (v1) [draw, circle, fill=gray] {};
		\draw (-1.5,0) node (v2) [draw, circle, fill=gray] {};
		\draw (0,-1) node (v3) [draw, circle, fill=gray] {};
		\draw (0,1) node (v4) [draw, circle, fill=gray] {};
		\draw (1.5,-1) node (v45) [draw, circle, fill=gray] {};		
		\draw (1.5,1) node (v5) [draw, circle, fill=gray] {};
		\draw (3,1) node (v6) [draw, circle, fill=gray] {};
		\draw (3,-1) node (v7) [draw, circle, fill=gray] {};
		\draw (-3.6,0) node () {\small $S$};
		\draw (3.6,1) node () {\small $T_1$};
		\draw (3.6,-1) node () {\small $T_2$};
		\draw[middlearrow={stealth}, ultra thick] (v1) -- (v2);
		\draw[middlearrow={stealth}, ultra thick] (v2) -- (v3);
		\draw[middlearrow={stealth}, ultra thick] (v2) -- (v4);
		\draw[middlearrow={stealth}, ultra thick] (v3) -- (v45);
		\draw[middlearrow={stealth}, ultra thick] (v4) -- (v5);			
		\draw[middlearrow={stealth}, ultra thick] (v45) -- (v7);
		\draw[middlearrow={stealth}, ultra thick] (v5) -- (v6);	
		\draw (-2.3,0.5) node () {\small $a_1$};
		\draw (-1,1) node () {\small $a_1$};
		\draw (-1,-1) node () {\small $a_1$};
		\draw (0.8,0.5) node () {\small $a_1$};
		\draw (0.8,-0.5) node () {\small $a_1$};
		\draw (2.3,0.5) node () {\small $a_1$};
		\draw (2.3,-0.5) node () {\small $a_1$};
	\end{tikzpicture}
	
	\begin{tikzpicture}[line width=1pt, scale=0.77]
		\tikzstyle{every node}=[inner sep=0pt, minimum width=4.5pt]
		%\path[fill=gray] (2,0) -- (0,2) -- (-2,0) -- cycle;
		\draw (-3,0) node (v1) [draw, circle, fill=gray] {};
		\draw (-1.5,0) node (v2) [draw, circle, fill=gray] {};
		\draw (0,-1) node (v3) [draw, circle, fill=gray] {};
		\draw (0,1) node (v4) [draw, circle, fill=gray] {};
		\draw (1.5,-1) node (v45) [draw, circle, fill=gray] {};		
		\draw (1.5,1) node (v5) [draw, circle, fill=gray] {};
		\draw (3,1) node (v6) [draw, circle, fill=gray] {};
		\draw (3,-1) node (v7) [draw, circle, fill=gray] {};
		\draw (-3.6,0) node () {\small $S$};
		\draw (3.6,1) node () {\small $T_1$};
		\draw (3.6,-1) node () {\small $T_2$};
		\draw[middlearrow={stealth}, ultra thick] (v1) -- (v2);
		\draw[middlearrow={stealth}, ultra thick] (v2) -- (v3);
		\draw[middlearrow={stealth}, ultra thick] (v2) -- (v4);
		\draw[red, dashed] (v3) -- (v45);
		\draw[middlearrow={stealth}, ultra thick] (v4) -- (v5);			
		\draw[red, dotted] (v45) -- (v7);
		\draw[middlearrow={stealth}, ultra thick] (v5) -- (v6);	
		\draw (-2.3,0.5) node () {\small $a_2$};
		\draw (-1,1) node () {\small $a_2$};
		\draw (-1,-1) node () {\small $a_2$};
		\draw (0.8,0.5) node () {\small $a_2$};
		\draw (0.8,-0.5) node () {\small $a_2$};
		\draw (2.3,0.5) node () {\small $a_2$};
		\draw (2.3,-0.5) node () {\small $a_2$};
	\end{tikzpicture}
	
	\begin{tikzpicture}[line width=1pt, scale=0.77]
		\tikzstyle{every node}=[inner sep=0pt, minimum width=4.5pt]
		%\path[fill=gray] (2,0) -- (0,2) -- (-2,0) -- cycle;
		\draw (-3,0) node (v1) [draw, circle, fill=gray] {};
		\draw (-1.5,0) node (v2) [draw, circle, fill=gray] {};
		\draw (0,-1) node (v3) [draw, circle, fill=gray] {};
		\draw (0,1) node (v4) [draw, circle, fill=gray] {};
		\draw (1.5,-1) node (v45) [draw, circle, fill=gray] {};		
		\draw (1.5,1) node (v5) [draw, circle, fill=gray] {};
		\draw (3,1) node (v6) [draw, circle, fill=gray] {};
		\draw (3,-1) node (v7) [draw, circle, fill=gray] {};
		\draw (-3.6,0) node () {\small $S$};
		\draw (3.6,1) node () {\small $T_1$};
		\draw (3.6,-1) node () {\small $T_2$};
		\draw[middlearrow={stealth}, ultra thick] (v1) -- (v2);
		\draw[middlearrow={stealth}, ultra thick] (v2) -- (v3);
		\draw[middlearrow={stealth}, ultra thick] (v2) -- (v4);
		\draw[middlearrow={stealth}, ultra thick] (v3) -- (v45);
		\draw[red, dashed] (v4) -- (v5);			
		\draw[middlearrow={stealth}, ultra thick] (v45) -- (v7);
		\draw[red, dotted] (v5) -- (v6);	
		\draw (-2.3,0.5) node () {\small $a_3$};
		\draw (-1,1) node () {\small $a_3$};
		\draw (-1,-1) node () {\small $a_3$};
		\draw (0.8,0.5) node () {\small $a_3$};
		\draw (0.8,-0.5) node () {\small $a_3$};
		\draw (2.3,0.5) node () {\small $a_3$};
		\draw (2.3,-0.5) node () {\small $a_3$};
	\end{tikzpicture}
\end{tabular}
\end{center}

\caption{Reliable and secure multishot linear network coding naturally extends reliable and secure multicast discrete memoryless channels. In this figure, the latter is depicted. Network coding is not necessary, which means that we may take the network base field $ \mathbb{F}_{q_0} = \mathbb{F}_2 $. Since $ n^\prime = 1 $, linearized Reed-Solomon codes coincide in this case with classical and generalized Reed-Solomon codes \cite{reed-solomon, delsarte}, which are defined over $ \mathbb{F}_q $ and allow us to use $ \ell < q $ shots, thus $ | \mathbb{F}_q | \approx \ell $, which is linear in $ \ell $. In this case, Gabidulin codes would require the field $ \mathbb{F}_{2^\ell} $, which is again exponential in $ \ell $ .}
\label{fig multishot network DMC}
\end{figure*}
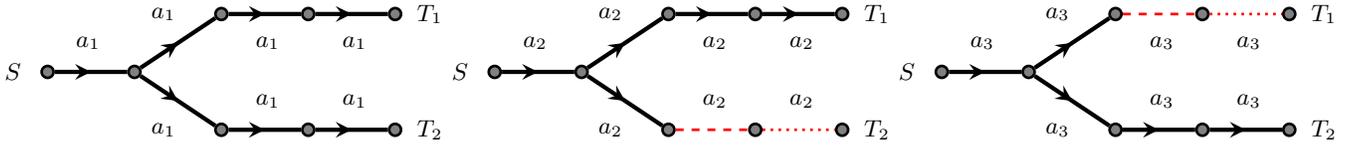

All of the works noted above make use of only \textit{one shot} of the
linearly-coded network. Correction of link errors in \textit{multishot
network coding} (see Fig.~\ref{fig multishot network}) was first
investigated in \cite{multishot2009, multishot}.  As noted in these
works,  the $ \ell $-shot case can be treated as a $ 1 $-shot case with
number of outgoing links at the source $ n = \ell n^\prime $. An $ \ell $-shot
Gabidulin code, that is, a Gabidulin code tailored for $ \ell n^\prime $ outgoing 
links at the source, yields the maximum message size of $ \ell n^\prime - 2t - \rho $
packets, but would require packet lengths $ m \geq n = \ell n^\prime $
instead of $ m \geq n^\prime $, which may be
impractically large. More importantly, decoding an $ \ell $-shot Gabidulin code using \cite{decoding-loidreau}
would require $ \mathcal{O}(n^2) $ operations over a field of size
$ q_0^{\ell n^\prime} $ (even faster decoders \cite{subquadratic,
silva-fast} would be quite expensive for such field sizes,
see Table \ref{table comparison}). To
circumvent these issues, the authors of \cite{multishot2009, multishot}
provide a \textit{multilevel} construction that improves the error-correction
capability of trivial
concatenations of $ 1 $-shot Gabidulin codes, with lower decoding complexity
than an $ \ell $-shot Gabidulin code, although without achieving the maximum secret message size. 

Later, \textit{convolutional} rank-metric codes were studied in
\cite{wachter, wachter-convolutional, mahmood-convolutional,
ashish-streaming, mrd-convolutional}, and a concatenation of an outer
Hamming-metric convolutional code with an inner rank-metric block code
was given in \cite{napp-sidorenko}. See \cite{cost-action-multishot} for
a survey. Convolutional techniques yield codes achieving the streaming
capacity for the burst rank loss networks described in
\cite{ashish-streaming}, where recovery of a given packet is required
before a certain delay. By truncating the convolutional codes in 
\cite{mahmood-convolutional, ashish-streaming}
up to their memory, one may obtain block codes whose message size is close to
the upper bound $ \ell n^\prime - 2t - \rho $, but their field sizes are in general
still exponential in $ n^\prime $ and $ \ell $. 

To the best of our knowledge, schemes that achieve zero-error communication and perfect secrecy
without knowledge of the underlying network code as in
\cite{silva-universal} have not yet been investigated in the multishot
case.

In this work, we provide such a family of coding schemes for
coherent communication with maximum secret message size of $ \ell
n^\prime - 2t - \rho - \mu $ packets, whose packet length is only
restricted to $ m \geq n^\prime $ (maximum possible range) in contrast
to $ m \geq \ell n^\prime $. The original base field of the 
network is embedded $ \mathbb{F}_{q_0} \subseteq \mathbb{F}_q $ in a 
field of size $ q = q_0^s $, for suitable $ s $ satisfying $ q > \ell $, 
and the obtained coding schemes are defined over the field $ \mathbb{F}_{q^m} $,
whose size is approximately $ \max \{ \ell, q_0 \}^{n^\prime} $, and in most real scenarios ($ \ell > q_0 $), it is $ \ell^{n^\prime} $. We do not know
if the range of values $ q > \ell $, for $ q $ given $ \ell $, is
maximum, but we conjecture that $ q $ must be $ \Omega(\ell) $ (equivalently, $ \ell = \mathcal{O}(q) $) as is the case for MDS codes (the case $ n^\prime = 1 $). By lifting our
construction, we adapt it to non-coherent communication with nearly
optimal information rates for practical instances. 

Our coding schemes are based on \textit{linearized Reed-Solomon codes},
introduced in \cite{linearizedRS} and closely connected to \textit{skew
Reed-Solomon codes} \cite{skew-evaluation1}. Linearized Reed-Solomon
codes are \textit{maximum sum-rank distance} (MSRD) codes, which is precisely the property that gives the optimality of our coding schemes. We provide a Welch-Berlekamp
sum-rank decoding algorithm for these codes that requires $ \mathcal{O}(n^2) $ operations
over the field of size $ q^m \approx \ell^{n^\prime} $. Hence we obtain a reduction in the decoding complexity
in operations over $ \mathbb{F}_2 $ of more than a degree in the number of shots compared to the
existing decoding algorithms and codes. See Table \ref{table comparison}. We remark that our algorithm is related to a skew-metric
decoding algorithm for skew Reed-Solomon codes recently given in
\cite{boucher-decoding}. As we will see, such an algorithm can be
translated to a sum-rank decoding algorithm for linearized Reed-Solomon
codes, although this is not done in \cite{boucher-decoding}. In addition, the algorithm in \cite{boucher-decoding} has cubic
complexity over the field of size $ q^m \approx \ell^{n^\prime} $ and does not handle erasures, wire-tap observations and non-coherent communication, in contrast to our algorithm. 

Linearized Reed-Solomon codes are natural hybrids between Reed-Solomon codes \cite{reed-solomon, delsarte}
and Gabidulin codes \cite{gabidulin, roth, new-construction}, in the
same way that the sum-rank metric is a hybrid between the Hamming and
rank metrics, and adversarial multishot network coding is a hybrid
between adversarial network coding and adversarial discrete memoryless
channels (e.g., wire-tap channel of type II \cite{ozarow} or secret
sharing \cite{shamir}). See and compare Figs. \ref{fig multishot network} and \ref{fig multishot network DMC}. Furthermore, our Welch-Berlekamp decoding algorithm includes the classical one
\cite{welch-berlekamp} and its rank-metric version
\cite{decoding-loidreau} as particular cases. What happens is that rank-metric and Hamming-metric codes are
two extremal particular cases of sum-rank-metric codes that correspond
to $ \ell = 1 $ and $ n^\prime = 1 $, respectively. With this point of
view, the use of linearized Reed-Solomon codes instead of Gabidulin
codes as MSRD codes extends the idea of using Reed-Solomon codes rather
than Gabidulin codes as MDS codes (see Figs. \ref{fig multishot network} and \ref{fig multishot network DMC}): in both cases, information rates are
equal but Gabidulin codes require a field size exponentiated to the $
\ell $th power, thus providing no advantage unless $ \ell = 1 $ (the
rank-metric case). See Table \ref{table skew ev codes}
in Subsection \ref{subsec skew metrics} for a summarized classification 
of Reed-Solomon-like evaluation codes with maximum distance for certain 
metrics. 

We conclude by giving two other
interpretations of reliable and secure $ \ell $-shot network coding.
First, $ \ell $-shot network coding can be thought of as $ 1 $-shot
network coding where we have knowledge of at least $ \ell $ connected
components in the underlying graph after removing the source and sinks.
Observe that here we may have a different number $ n_i $ of outgoing
links in each component. If each connected component has a single
outgoing link, we recover the (multicast) discrete memoryless channel (see Fig. \ref{fig multishot network DMC}).
Second, we may view $ \ell $-shot network coding as $ 1 $-shot network
coding where packets are divided into $ \ell $ sub-packets.  In this
scenario, errors, erasures and wire-tapper observations occur in certain
sub-packets (rather than over whole packets) in certain links. If the
sub-packets are considered as symbols, this gives another interpretation
of multishot network coding as hybrid between network coding and a
discrete memoryless channel.  The work \cite{hybrid-network} treats a
similar hybrid of symbol and link errors, although they consider that if
a symbol error occurs in one link, it occurs in all links (i.e., all
transmitted subspaces).  This may make error-correction impossible if
the adversary spreads the errors in pessimistic patterns.

%% Moved this paragraph to the conclusion...   FK
%Based on the previous paragraph, it seems natural for future research
%to study convolutional sum-rank-metric codes rather than convolutional
%rank-metric codes. That is, convolutional codes where we consider an $
%\ell $-shot sum-rank metric in each block rather than the rank metric.
%We conjecture that this should be similar to going from convolutional
%codes where each block is simply a field symbol to classical
%Hamming-metric convolutional codes.

\subsection*{Summary and Significance of our Main Contributions} \label{subsec main contribution}

Our coding schemes achieve the maximum secret message size of $ n^\prime \ell - 2t - \rho - \mu $ packets for coherent communication (Theorem \ref{th optimal for coherent comm}) and nearly optimal information rates for noncoherent communication (Theorem \ref{th near optimal coding schemes non-coherent}), as stated above, and satisfy the following:

1) Their field size is $ q^m $, where $ q > \ell $, $ q = q_0^s $ for some $ s \in \mathbb{N} $, and $ m \geq n^\prime $. Thus their field size is approximately
\begin{equation}
\max \{ \ell, q_0 \}^{n^\prime},
\label{eq field size}
\end{equation}
whereas an $ \ell $-shot Gabidulin code always requires the field size $ q_0^{\ell n^\prime} $, which is always considerably larger, since: a) $ \ell^{n^\prime} \ll (q_0^\ell)^{n^\prime} $, if $ q_0 < \ell $, and b) $ q_0^{n^\prime} \ll (q_0^{n^\prime})^\ell $, if $ q_0 \geq \ell $. Notice, moreover, that $ \ell^{n^\prime} $ is polynomial in $ \ell $, whereas $ q_0^{\ell n^\prime} $ is exponential in $ \ell $. See also Figs. \ref{fig multishot network} and \ref{fig multishot network DMC} for instances with $ n^\prime = 2 $ and $ n^\prime = 1 $, respectively, and $ q_0 = 2 $.

\begin{table}[h]
\caption{Decoding complexity in no. of operations over $ \mathbb{F}_2 $}
\label{table comparison}
\centering
\begin{tabular}{c|c|c}
\hline
&&\\[-0.8em]
Algorithm & Code & Complexity  \\[1.6pt]
\hline\hline
&&\\[-0.8em]
Subsec. \ref{subsec quadratic reduction} & Linearized RS \cite{linearizedRS} & $ \mathcal{O}(n^{\prime 4} \ell^2 \log(\ell)^2 ) $ \\[1.6pt]
\hline
&&\\[-0.8em]
\cite{boucher-decoding} & Skew RS \cite{skew-evaluation1} & $ \mathcal{O}(n^{\prime 5} \ell^3 \log(\ell)^2 ) $ \\[1.6pt]
\hline
&&\\[-0.8em]
\cite{subquadratic} & Gabidulin \cite{gabidulin} & $ \mathcal{O}(n^{\prime 3.69} \ell^{3.69} \log(\ell)^2) $ \\[1.6pt]
\hline
&&\\[-0.8em]
\cite{decoding-loidreau} & Gabidulin \cite{gabidulin} & $ \mathcal{O}(n^{\prime 4} \ell^{4}) $ \\[1.6pt]
\hline 
\end{tabular}
\end{table}

2) We provide a decoding algorithm with complexity of $ \mathcal{O}(n^2) $ operations over $ \mathbb{F}_{q^m} $ (recall that $ n = \ell n^\prime $). Assuming that one multiplication in a finite field $ \mathbb{F} $ of characteristic $ 2 $ costs about $ \log(|\mathbb{F}|)^2 $ operations in $ \mathbb{F}_2 $, our decoding algorithm has complexity of
\begin{equation}
\mathcal{O}(n^{\prime 4} \ell^2 \log(\ell)^2 )
\label{eq complexity over F_2}
\end{equation}
operations over $ \mathbb{F}_2 $, as a function of $ \ell $ and $ n^\prime $. Observe also that in most practical situations, $ \ell \gg n^\prime $ and $ n^\prime $ is a constant of the channel. Table \ref{table comparison} shows a comparison between this complexity and those of other algorithms in the literature, for coding schemes that are optimal in the coherent case or near optimal in the noncoherent case. We always assume that a multiplication in $ \mathbb{F} $ costs $ \log(|\mathbb{F}|)^2 $ operations in $ \mathbb{F}_2 $. Note that, setting $ n^\prime = 1 $, linearized Reed-Solomon codes become classical Reed-Solomon codes, and the obtained complexity in (\ref{eq complexity over F_2}) is of $ \mathcal{O}(\ell^2 \log(\ell)^2) $ operations in $ \mathbb{F}_2 $, which is that of the quadratic-complexity Welch-Berlekamp decoding algorithm.

3) By multiplying the message vector by a generator matrix, our coding scheme has encoding complexity of $ \mathcal{O}(kn) $ operations over $ \mathbb{F}_{q^m} $. Since encoding with a linear code has in general complexity of $ \mathcal{O}(kn) $ operations over the corresponding field, we obtain reductions in encoding complexity similar to those shown in Table \ref{table comparison}, in operations over $ \mathbb{F}_2 $.

\subsection*{Organization} \label{subsec organization}

The remainder of this paper is organized as follows. In Section~\ref{sec
formulation multishot networks}, we formulate zero-error communication
and perfect secrecy in adversarial multishot network coding. This is a
natural extension of the formulation in \cite{silva-universal} to the
multishot case. In Section~\ref{sec measures raliable and secure}, we
give sufficient and necessary conditions for coding schemes to correct a
given number of errors and erasures, and for perfect secrecy under a
given number of wire-tapped links. This again is a natural extension of
\cite{on-metrics, silva-universal}, and the sufficient condition for
error correction was already given in \cite{multishot}. In
Section~\ref{sec optimal constructions}, we provide the above-mentioned
constructions based on the linearized Reed-Solomon codes introduced in
\cite{linearizedRS}. We then give Singleton-type bounds, and prove that
our schemes attain them for coherent communication and are close to them
in the non-coherent case. Finally, we give in Section~\ref{sec
Berlekamp-Welch} a Welch-Berlekamp sum-rank decoding algorithm for
linearized Reed-Solomon codes.

\subsection*{Notation and Preliminaries}

Throughout this paper, we will fix a prime power $ q $ and positive
integers $ m $, $ \ell $, $ n = n_1 + n_2 + \cdots + n_\ell $, and $ N =
N_1 + N_2 + \cdots + N_\ell $. The number $ m $ corresponds to the
packet length in each shot, $ \ell $ is the number of shots, and $ n_i $
and $ N_i $ are the number of outgoing links at the source and incoming
links to the sink, respectively, in the $ i $th shot.

We will denote by $ \mathbb{F}_q $ the finite field with $ q $ elements. For a field $ \mathbb{F} $, we will denote by $ \mathbb{F}^{m \times n} $ the set of $ m
\times n $ matrices with entries in $ \mathbb{F} $, and we denote $ \mathbb{F}^n = \mathbb{F}^{1
\times n} $. The field over which we consider linearity, ranks, and
dimensions will be clear from the context. We will implicitly consider
``erased'' matrices, which may be denoted by $ * $. We will define $
\mathbb{F}^{m \times 0} = \{ * \} $ and $ \mathbb{F}^{0 \times n}
= \{ * \} $. Operations with matrices are assumed to be extended to $ * $
in the obvious way. For instance, $ \Rk(*) = 0 $, $ A* = *
$ for $ A \in \mathbb{F}^{n \times m} $,
etc. 

For matrices $ A_i \in \mathbb{F}^{N_i \times n_i} $, for $ i = 1,2, \ldots, \ell $, we define the \textit{block-diagonal matrix}
$$ \diag(A_1, A_2,\ldots, A_\ell) = \left( \begin{array}{cccc}
A_1 & 0 & \ldots & 0 \\
0 & A_2 & \ldots & 0 \\
\vdots & \vdots & \ddots & \vdots \\
0 & 0 & \ldots & A_\ell \\
\end{array} \right) \in \mathbb{F}^{N \times n}. $$
To define $ \diag(A_1, A_2,\ldots, A_\ell) $, for each $ i $ such that $ A_i = * \in \mathbb{F}^{0 \times n_i} $, we only add $ n_i $ zero columns between the $ (i-1) $th and the $ (i+1) $th blocks. For example, if $ A \in \mathbb{F}^{N_1 \times n_1} $, $ * \in \mathbb{F}^{0 \times n_2} $ and $ B \in \mathbb{F}^{N_3 \times n_3} $, we define $ {\rm diag}(A,*,B) \in \mathbb{F}^{(N_1 + N_3) \times (n_1 + n_2 + n_3)} $ as putting $ n_2 $ zero columns between the first $ n_1 $ and the last $ n_3 $ columns in $ {\rm diag}(A,B) \in \mathbb{F}^{(N_1 + N_3) \times (n_1 + n_3)} $.

Fix an ordered basis $ \mathcal{A} = \{ \alpha_1, \alpha_2, \ldots,
\alpha_m \} $ of $ \mathbb{F}_{q^m} $ over $ \mathbb{F}_q $. For any
non-negative integer $ s $, we denote by $ M_\mathcal{A} :
\mathbb{F}_{q^m}^s \longrightarrow \mathbb{F}_q^{m \times s} $ the
corresponding matrix representation map, given by 
\begin{equation}
M_\mathcal{A} \left(  \mathbf{c}  \right) = \left( \begin{array}{cccc}
c_{11} & c_{12} & \ldots & c_{1s} \\
c_{21} & c_{22} & \ldots & c_{2s} \\
\vdots & \vdots & \ddots & \vdots \\
c_{m1} & c_{m2} & \ldots & c_{ms} \\
\end{array} \right),
\label{eq def matrix representation map}
\end{equation}
for $ \mathbf{c} = (c_1, c_2, \ldots, c_s) \in \mathbb{F}_{q^m}^s $, where $ c_{i,1}, c_{i,2}, \ldots, c_{i,m} \in \mathbb{F}_q $ are the unique scalars such that $ c_i = \sum_{j=1}^m \alpha_j c_{i,j} \in \mathbb{F}_{q^m} $, for $ i = 1,2, \ldots, s $. 

Given $ X \in \mathbb{F}_q^{m \times n} $, we denote by $ \Row(X)
\subseteq \mathbb{F}_q^n $ and $ \Col(X) \subseteq \mathbb{F}_q^m $
the vector spaces generated by the rows and the columns of $ X $,
respectively. For $ \mathbf{c} \in \mathbb{F}_{q^m}^n $, we denote $
\Row(\mathbf{c}) = \Row( M_\mathcal{A} (\mathbf{c}) )
\subseteq \mathbb{F}_q^n $ and $ \Col(\mathbf{c}) = \Col(
M_\mathcal{A} (\mathbf{c}) ) \subseteq \mathbb{F}_q^m $. The latter
depends on $ \mathcal{A} $, but we omit this for simplicity.

Throughout the paper, we will use the short notation $ (x_i)_{i=1}^\ell = (x_1,x_2, \ldots, x_\ell) $ for tuples, where $ x_1, x_2, \ldots, x_\ell $ could be any type of element (integers, elements of finite fields, sets, subspaces...).

\section{Reliability and Security in Adversarial Multishot Network Coding} \label{sec formulation multishot networks}

In this section, we will formulate reliability and security in multishot
network coding under a worst-case adversarial model. Our analysis
naturally extends that in \cite{silva-universal} to the multishot case
(see also \cite{errors-network, on-metrics, error-control}).

We will consider $ \ell $ \textit{shots} of a network where 
linear network coding over the finite field $ \mathbb{F}_q $ is
implemented \cite{ahlswede, linearnetwork, Koetter2003}, where we have no knowledge of or control over the network
topology or the encoding coefficients (thus random linear network coding \cite{random} may be implemented). The 
network topology may change with time, the encoding 
coefficients at each node are independent for different shots, and 
we assume no delays. The \textit{packets} sent in each shot through the links
of the network and linearly combined at each node are vectors in $
\mathbb{F}_q^m $. We assume that an adversary is able to inject error
packets in up to $ t $ links, modify transfer matrices to erase up to $
\rho $ encoded packets, and wire-tap up to $ \mu $ links, all
distributed over the $ \ell $ shots of the network as the adversary
wishes (see Fig.~\ref{fig multishot network}). Other than these
restrictions, the adversary is assumed to be omniscient (he or she knows the coding scheme used at the source, the network topology and all encoding coefficients in the network) and have
unlimited computational power. 

Fix $ i = 1,2, \ldots, \ell $. Assume that the $ i $th shot has as input a
matrix $ X_i \in \mathbb{F}_q^{m \times n_i} $. Define $ t_i \geq 0 $ and $ \rho_i \geq 0 $ as the number of link errors and erasures, respectively, that the adversary introduces, and $ \mu_i \geq 0 $ as the number of links that the adversary wire-taps, all in the $ i $th shot. These numbers are thus arbitrary with sums $ t $, $ \rho $ and $ \mu $, respectively, and only these sums are known or estimated by the source or sinks. Following the model in \cite[Sec.~III]{silva-universal}, the output to the receiver is a matrix
\[
Y_i = X_iA_i^T + E_i \in \mathbb{F}_q^{m \times N_i},
\]
for a transfer matrix $ A_i \in \mathbb{F}_q^{N_i \times n_i} $ with $ \Rk(A_i) \geq n_i - \rho_i $, and for an error matrix $ E_i \in \mathbb{F}_q^{m \times N_i} $ with $ \Rk(E_i) \leq t_i $, where equalities can be achieved. The transfer matrix $ A_i $ may be known by the receiver (\textit{coherent communication}) or not (\textit{non-coherent communication}). The non-coherent case is more realistic if random linear network coding is applied. However, we include both since, as we will see, optimal solutions for the coherent case, which is simpler to solve, will give near optimal solutions for the non-coherent case. Following the same model, the adversary obtains the matrix
\[
W_i = X_i B_i^T \in \mathbb{F}_q^{m \times \mu_i},
\]
for a wire-tap transfer matrix $ B_i \in \mathbb{F}_q^{\mu_i \times n_i} $. 

Define now the set of input codewords, the set of output words to the receiver, and the set of output messages to the wire-tapper, respectively, as follows:
\begin{equation*}
\begin{split}
\mathcal{X} & = \mathbb{F}_q^{m \times n_1} \times \mathbb{F}_q^{m \times n_2} \times \cdots \times \mathbb{F}_q^{m \times n_\ell}, \\
\mathcal{Y} & = \mathbb{F}_q^{m \times N_1} \times \mathbb{F}_q^{m \times N_2} \times \cdots \times \mathbb{F}_q^{m \times N_\ell}, \\
\mathcal{W} & = \bigcup_{\substack{0 \leq \mu_i \leq n_i \\ 1 \leq i \leq \ell}}  \mathbb{F}_q^{m \times \mu_1} \times \mathbb{F}_q^{m \times \mu_2} \times \cdots \times \mathbb{F}_q^{m \times \mu_\ell} .
\end{split}
\end{equation*}
Assume that the sent codeword is $ X \in \mathcal{X} $.
With the previous restrictions on the adversary,
in the case of coherent communication with transfer matrices $ A_i \in \mathbb{F}_q^{N_i \times n_i} $, for $ i = 1,2, \ldots, \ell $, such that
\[
\sum_{i=1}^\ell \Rk(A_i) \geq n - \rho,
\]
the output of the $ \ell $-shot network is restricted to the subset
\begin{equation*}
\begin{split}
\mathcal{Y}_X^{A_1, A_2, \ldots, A_\ell}(t) = \{ & Y \in \mathcal{Y} \mid Y_i = X_i A_i^T + E_i, \\
& E_i \in \mathbb{F}_q^{m \times N_i}, \sum_{i=1}^\ell \Rk(E_i) \leq t \}.\\
\end{split}
\end{equation*}
In the case of non-coherent communication,
the output is instead restricted to the subset 
\begin{equation*}
\begin{split}
\mathcal{Y}_X(t, \rho) = \{ & Y \in \mathcal{Y} \mid Y_i = X_i A_i^T + E_i, \\
& A_i \in \mathbb{F}_q^{N_i \times n_i}, \sum_{i=1}^\ell \Rk(A_i) \geq n - \rho, \\
& E_i \in \mathbb{F}_q^{m \times N_i}, \sum_{i=1}^\ell \Rk(E_i)
\leq t \}. \\
\end{split}
\end{equation*}
Finally, the output to the adversary is restricted to the subset
\begin{equation*}
\begin{split}
\mathcal{W}_X(\mu) = \{ & W \in \mathcal{W} \mid W_i = X_i B_i^T, \\
& B_i \in \mathbb{F}_q^{\mu_i \times n_i}, \sum_{i=1}^\ell \mu_i = \mu \}.
\end{split}
\end{equation*}

\begin{remark}
Observe that by the matrix representation map (\ref{eq def matrix
representation map}), we may consider $ \mathcal{X} = \mathbb{F}_{q^m}^n
$ and $ \mathcal{Y} = \mathbb{F}_{q^m}^N $, which are vector spaces over
$ \mathbb{F}_{q^m} $. We may also consider $ \mathcal{W} =
\mathbb{F}_{q^m}^\mu $, but the union symbol above expresses the fact
that we do not know how many links the wire-tapper observes in each
shot.
\end{remark}

In this work, we will consider coding schemes as follows. 

\begin{definition}[\textbf{Coding schemes}] \label{def coding schemes}
Let $ \mathcal{S} $ be the set of secret messages and let $ S
\in \mathcal{S} $ be a random variable in $ \mathcal{S} $. A coding
scheme is a randomized function $ F : \mathcal{S} \longrightarrow
\mathcal{X} $. For unique decoding, we assume that $ \mathcal{X}_S \cap
\mathcal{X}_T = \varnothing $ if $ S,T \in \mathcal{S} $ and $ S \neq T
$, where 
\[
\mathcal{X}_S = \{ X \in \mathcal{X} \mid P(X \mid S) > 0 \},
\]
for $ S \in \mathcal{S} $. We define the support scheme of $ F $ as
the collection of disjoint sets $ {\rm Supp}(F) = \{ \mathcal{X}_S
\}_{S \in \mathcal{S}} $.
\end{definition}

As in \cite{silva-universal}, we will require zero-error communication
and perfect secrecy. We can adapt the definition of universal reliable
and secure coding schemes from \cite{silva-universal} to multishot
network coding as follows. 

\begin{definition} \label{def error correction and security}
With notation as in the previous definition and for integers $ t \geq 0
$, $ \rho \geq 0 $ and $ \mu \geq 0 $, we say that a coding scheme $ F :
\mathcal{S} \longrightarrow \mathcal{X} $ is 
\begin{enumerate}
\item $ t $-error and $ \rho $-erasure-correcting for coherent
communication if for all transfer matrices $ A_i \in \mathbb{F}_q^{N_i
\times n_i} $, for $ i = 1,2, \ldots, \ell $, with $ \sum_{i=1}^\ell
\Rk(A_i) \geq n - \rho $, there exists a decoding function $ D :
\mathcal{Y} \longrightarrow \mathcal{S} $, which may depend on $ A_1,
A_2, \ldots, A_\ell $, such that
\[
D(Y) = S,
\]
for all
$ Y \in \bigcup_{X \in \mathcal{X}_S} \mathcal{Y}_X^{A_1, A_2, \ldots,
A_\ell}(t) $ and all $ S \in \mathcal{S} $. 
\item $ t $-error and $ \rho $-erasure-correcting for non-coherent
communication if there exists a decoding function $ D : \mathcal{Y}
\longrightarrow \mathcal{S} $ such that
\[
D(Y) = S,
\]
for all $ Y \in 
\bigcup_{X \in \mathcal{X}_S} \mathcal{Y}_X(t,\rho) $ and all $ S \in
\mathcal{S} $. 
\item Secure under $ \mu $ observations if it holds that
\[
H(S \mid W) = H(S),
\]
or equivalently $ I(S;W) = 0 $, for all $ W \in \bigcup_{X \in
\mathcal{X}_S} \mathcal{W}_X(\mu) $ and all $ S \in \mathcal{S} $.
\end{enumerate}
\end{definition}

We conclude by recalling how to construct coding schemes using nested
linear code pairs. The idea goes back to \cite{ozarow} for the wire-tap
channel of type II.

\begin{definition}[\textbf{Nested coset coding schemes}] \label{definition NLCP}
Given nested linear codes $ \mathcal{C}_2 \subsetneqq \mathcal{C}_1
\subseteq \mathbb{F}_{q^m}^n $ and $ \mathcal{S} = \mathbb{F}_{q^m}^k $,
where $ k = \dim(\mathcal{C}_1) - \dim(\mathcal{C}_2) $, we define its
corresponding coding scheme $ F : \mathcal{S} \longrightarrow
\mathcal{X} $ as follows. Choose a vector space $ \mathcal{V} $ such
that $ \mathcal{C}_1 = \mathcal{C}_2 \oplus \mathcal{V} $, where $
\oplus $ denotes the direct sum of vector spaces, and a vector space
isomorphism $ \psi : \mathbb{F}_{q^m}^k \longrightarrow \mathcal{V} $.
Then we define $ F $ as any randomized function such that $
\mathcal{X}_S = \psi(S) + \mathcal{C}_2 $, for $ S \in
\mathbb{F}_{q^m}^k $. 
\end{definition}

Observe that $ \mathcal{C}_2 = \{ \mathbf{0} \} $ corresponds to using a
single linear code and a linear deterministic coding scheme. This is
suitable for reliability but not for security. The other extreme case, $
\mathcal{C}_1 = \mathbb{F}_{q^m}^n $, is suitable for security but not
for reliability.

\section{Measures of Reliability and Security}
\label{sec measures raliable and secure}

In this section, we will extend the studies in \cite{on-metrics,
silva-universal} regarding what parameter of a coding scheme gives a
necessary and sufficient condition for it to be $ t $-error and $ \rho
$-erasure-correcting, or secure under $ \mu $ observations. In
Subsection~\ref{subsec measuring coherent}, we study error and erasure
correction for coherent communication. In Subsection~\ref{subsec
measuring non-coherent}, we study error and erasure correction for
non-coherent communication. Finally, in Subsection~\ref{subsec measuring
info leakage}, we study security resistance against a wire-tapper. Since
the proofs are natural extensions of those in \cite{on-metrics,
silva-universal}, we have moved them to the appendix.

\subsection{The Sum-rank Metric and Coherent Communication} \label{subsec measuring coherent}

We start by defining the sum-rank metric in $ \mathbb{F}_{q^m}^n $,
which was introduced in \cite{multishot}. It  was  implicitly  considered  earlier
in the space-time coding literature (see \cite[Sec. III]{space-time-kumar}).

\begin{definition} [\textbf{Sum-rank metric \cite{multishot}}]
Let $ \mathbf{c} = (\mathbf{c}^{(1)}, $ $ \mathbf{c}^{(2)}, $ $ \ldots,
$ $ \mathbf{c}^{(\ell)}) \in \mathbb{F}_{q^m}^n $, where $
\mathbf{c}^{(i)} \in \mathbb{F}_{q^m}^{n_i} $, for $ i = 1,2, \ldots,
\ell $. We define the sum-rank weight of $ \mathbf{c} $ as
$$ \wt_{SR}(\mathbf{c}) = \sum_{i=1}^\ell {\rm
Rk}(M_{\mathcal{A}}(\mathbf{c}^{(i)})), $$
where $ M_\mathcal{A} $ is as in (\ref{eq def matrix representation map}), taking $ s = n_i $ for $ \mathbf{c}^{(i)} $. Finally, we define the
sum-rank metric $ \dd_{SR} : (\mathbb{F}_{q^m}^n)^2 \longrightarrow
\mathbb{N} $ as $$ \dd_{SR}(\mathbf{c}, \mathbf{d}) = {\rm
wt}_{SR}(\mathbf{c} - \mathbf{d}), $$ for all $ \mathbf{c}, \mathbf{d}
\in \mathbb{F}_{q^m}^n $.
\end{definition}

Observe that indeed the sum-rank metric is a metric, and in particular, it satisfies the triangle inequality. Observe also that sum-rank weights and metrics depend on the decomposition $ n = n_1 + n_2 + \cdots + n_\ell $ and the subfield $ \mathbb{F}_q \subseteq \mathbb{F}_{q^m} $. However, we do not write this dependency for brevity. 

Next we define minimum sum-rank distances of supports of coding schemes.

\begin{definition} [\textbf{Minimum sum-rank distance}]
Given a coding scheme $ F : \mathcal{S} \longrightarrow \mathcal{X} $,
we define its minimum sum-rank distance as that of its support scheme (see Definition \ref{def coding schemes}):
$$ \dd_{SR}({\rm Supp}(F)) = \min \{ \dd_{SR}(\mathbf{c},
\mathbf{d}) \mid \mathbf{c} \in \mathcal{X}_S, \mathbf{d} \in
\mathcal{X}_T, S \neq T \}. $$ When $ F $ is constructed from nested
linear codes $ \mathcal{C}_2 \subsetneqq \mathcal{C}_1 \subseteq
\mathbb{F}_{q^m}^n $, its minimum sum-rank distance is the relative
minimum sum-rank distance of the codes: $$ \dd_{SR}(\mathcal{C}_1,
\mathcal{C}_2) = \min \{ \wt_{SR}(\mathbf{c}) \mid \mathbf{c} \in
\mathcal{C}_1 \setminus \mathcal{C}_2 \}. $$ The minimum sum-rank
distance of a single linear code $ \mathcal{C} \subseteq
\mathbb{F}_{q^m}^n $ is $ \dd_{SR} (\mathcal{C}) = {\rm
d}_{SR}(\mathcal{C}, \{ \mathbf{0} \}) $.
\end{definition}

We now give a sufficient and necessary condition for coding schemes to
be $ t $-error and $ \rho $-erasure-correcting for coherent
communication, for any non-negative integers $ t $ and $ \rho $. The
sufficient part has already been proven in \cite[Th.~1]{multishot} for a
deterministic code. We will make use of both implications in the proof
of Theorem~\ref{th upper bound general coding schemes}.

\begin{theorem} \label{theorem correction capability}
Given integers $ t \geq 0 $ and $ \rho \geq 0 $, a coding scheme $ F :
\mathcal{S} \longrightarrow \mathcal{X} $ is $ t $-error and $ \rho
$-erasure-correcting for coherent communication if, and only if, $$ 2t +
\rho < \dd_{SR}({\rm Supp}(F)). $$
\end{theorem}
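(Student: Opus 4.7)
The proof plan follows the template of the classical (rank-metric) correction-capability theorem in \cite{silva-universal}, adapted to the block-diagonal structure of the sum-rank metric.

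For the sufficient direction, I would define the decoder $D$ by: on input $Y$, output the unique $S\in\mathcal{S}$ (if it exists) for which there is some $X\in\mathcal{X}_S$ with $\sum_{i=1}^\ell \Rk(Y_i-X_iA_i^T)\le t$. Uniqueness is the content of the argument. Suppose $X\in\mathcal{X}_S$ and $X'\in\mathcal{X}_{S'}$ both produce $Y$ via error matrices $E_i,E'_i$ of total sum-rank at most $t$. Then $(X_i-X'_i)A_i^T = E'_i - E_i$, so by subadditivity of rank $\sum_i\Rk((X_i-X'_i)A_i^T)\le 2t$. Apply Frobenius's rank inequality shotwise, $\Rk((X_i-X'_i)A_i^T)\ge \Rk(X_i-X'_i)-(n_i-\Rk(A_i))$, and sum over $i$, using $\sum_i(n_i-\Rk(A_i))\le\rho$, to conclude
\[
\wt_{SR}(X-X')\le 2t+\rho<\dd_{SR}(\mathrm{Supp}(F)).
\]
By the definition of $\dd_{SR}(\mathrm{Supp}(F))$ this forces $S=S'$, proving that $F$ is $t$-error and $\rho$-erasure-correcting.

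For the converse I would proceed by contrapositive: assume $\dd_{SR}(\mathrm{Supp}(F))\le 2t+\rho$ and exhibit an adversarial instance that no decoder can resolve. Pick $X\in\mathcal{X}_S$, $X'\in\mathcal{X}_{S'}$ with $S\ne S'$ and $\wt_{SR}(X-X')=r\le 2t+\rho$, and set $r_i=\Rk(X_i-X'_i)$. Split the erasure budget shotwise by choosing $\rho_i\le r_i$ with $\sum_i\rho_i=\min(\rho,r)$; feasibility of the remaining $2t$ budget is immediate from $\sum_i(r_i-\rho_i)^+=\max(0,r-\rho)\le 2t$. For each $i$, I would construct $A_i\in\mathbb{F}_q^{N_i\times n_i}$ with $\Rk(A_i)=n_i-\rho_i$ such that the column space of $A_i^T$ contains the right-kernel of $M_\mathcal{A}(X_i-X'_i)$; this forces $\Rk((X_i-X'_i)A_i^T)=r_i-\rho_i$. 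Finally, decompose $(X_i-X'_i)A_i^T=E'_i-E_i$ with $\sum\Rk(E_i)\le t$ and $\sum\Rk(E'_i)\le t$ (possible by the budget inequality) and set $Y_i=X_iA_i^T+E_i=X'_iA_i^T+E'_i$. Then $Y$ belongs to both $\mathcal{Y}_X^{A_1,\ldots,A_\ell}(t)$ and $\mathcal{Y}_{X'}^{A_1,\ldots,A_\ell}(t)$, so no decoder can distinguish $S$ from $S'$.

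The main obstacle is the converse, specifically the bookkeeping that simultaneously satisfies all three shotwise rank constraints (on $A_i$, $E_i$, $E'_i$) while respecting the global budgets $\rho$ and $t$. The delicate point is choosing $A_i^T$ so that its column space \emph{absorbs} the full right-kernel of $X_i-X'_i$; once this is arranged, the rank of $(X_i-X'_i)A_i^T$ drops by exactly the full $n_i-r_i$ dimensions, and the residual $\max(0,r_i-\rho_i)$ can be distributed between two rank-bounded error matrices, which is possible precisely because $\sum_i\max(0,r_i-\rho_i)\le 2t$. The forward direction is routine from Frobenius's inequality once the decoder is correctly specified.
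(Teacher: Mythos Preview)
Your proposal is correct and follows essentially the same approach as the paper's proof: both directions are the natural multishot extension of the single-shot rank-metric argument. The technical implementations differ slightly---in the forward direction you invoke Sylvester's rank inequality (what you call Frobenius's) shotwise, while the paper extends each $A_i$ to an invertible $n_i\times n_i$ matrix and uses invariance of sum-rank weight under block-diagonal invertible maps; in the converse you phrase the construction of $A_i$ via column-space/kernel inclusion, while the paper writes $\mathbf{c}^{(i)}=\mathbf{x}_iB_i$ with $B_i$ of rank $2t_i+\rho_i$ and takes $A_i$ to be a parity-check matrix for the last $\rho_i$ rows of $B_i$---but these are equivalent bookkeeping devices for the same idea.
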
 
\begin{proof}
See Appendix~\ref{app proofs for section on measures}.
\end{proof}

\subsection{The Sum-injection and Sum-subspace Distances and Non-coherent Communication} \label{subsec measuring non-coherent}

Let $ \mathbf{n} = (n_1, n_2, \ldots, n_\ell) $. We will consider the
Cartesian product lattice $$ \mathcal{P}(\mathbb{F}_q^\mathbf{n}) =
\mathcal{P}(\mathbb{F}_q^{n_1}) \times \mathcal{P}(\mathbb{F}_q^{n_2})
\times \cdots \times \mathcal{P}(\mathbb{F}_q^{n_\ell}), $$ where $
\mathcal{P}(\mathbb{F}_q^{n_i}) $ is the lattice of vector subspaces of
$ \mathbb{F}_q^{n_i} $ (that is, the collection of all vector subspaces of $ \mathbb{F}_q^{n_i} $ considered with sums and intersections of vector subspaces as operations), for $ i = 1,2, \ldots, \ell $. We also denote $
\mathcal{P}(\mathbb{F}_q^m)^\ell $ for $  \mathbf{n} =  (m,m, \ldots,m) $, of length $
\ell $. These Cartesian products were first considered for multishot
network coding in \cite[Subsec.~II-B]{multishot2009}.

As in the single shot case \cite{errors-network}, the actual
information preserved in a non-coherent multishot linear coded network
is the list of column spaces of the transmitted matrices. We may justify
this by extending the argument in \cite[Subsec.~V-A]{on-metrics}. We
start by connecting codewords in $ \mathbb{F}_{q^m}^n $ to elements in
the lattice $ \mathcal{P}(\mathbb{F}_q^m)^\ell $.

\begin{definition} \label{def sum column space}
Given $ \mathbf{c} = (\mathbf{c}^{(1)}, \mathbf{c}^{(2)}, \ldots,
\mathbf{c}^{(\ell)}) \in \mathbb{F}_{q^m}^n $, where $ \mathbf{c}^{(i)}
\in \mathbb{F}_{q^m}^{n_i} $, for $ i = 1,2, \ldots, \ell $, we define
$$ \Col_{\Sigma} (\mathbf{c} ) = (\Col ( \mathbf{c}^{(i)}))_{i=1}^\ell \in
\mathcal{P}(\mathbb{F}_q^m)^\ell. $$ For a coding scheme $ F :
\mathcal{S} \longrightarrow \mathcal{X} $, we define its sum-subspace
support scheme as $ \Col_{\Sigma} ({\rm Supp}(F)) = \{ {\rm Col}_{\Sigma}
(\mathcal{X}_S) \}_{S \in \mathcal{S}} $, where $$ \Col_{\Sigma}
(\mathcal{X}_S) = \{ \Col_{\Sigma} (\mathbf{c}) \mid \mathbf{c} \in
\mathcal{X}_S \}. $$
\end{definition}

The error and erasure-correction capability of coding schemes will be
measured by an extension of the injection distance \cite[Def.~2]{on-metrics} to the multishot scenario. 

\begin{definition} [\textbf{Sum-injection distance}] \label{def sum-injection distance}
Given $ \boldsymbol{\mathcal{U}} = (\mathcal{U}_1, $ $ \mathcal{U}_2, $ $ \ldots, \mathcal{U}_\ell) $, $ \boldsymbol{\mathcal{V}} = (\mathcal{V}_1, \mathcal{V}_2, \ldots, \mathcal{V}_\ell) \in \mathcal{P}(\mathbb{F}_q^m)^\ell $, we define their sum-injection distance as
\begin{equation*}
\begin{split}
 \dd_{SI}(\boldsymbol{\mathcal{U}}, \boldsymbol{\mathcal{V}}) = & \sum_{i=1}^\ell \dd_I(\mathcal{U}_i, \mathcal{V}_i) \\
= & \sum_{i=1}^\ell (\dim(\mathcal{U}_i + \mathcal{V}_i) -  \min \{ \dim(\mathcal{U}_i), \dim(\mathcal{V}_i) \} ) \\
= & \sum_{i=1}^\ell (\max \{ \dim(\mathcal{U}_i), \dim(\mathcal{V}_i) \} - \dim(\mathcal{U}_i \cap \mathcal{V}_i)) .
\end{split}
\end{equation*}
For a coding scheme $ F : \mathcal{S} \longrightarrow \mathcal{X} $, we define its minimum sum-injection distance as
\begin{equation*}
\begin{split}
\dd_{SI}(\Col_{\Sigma} ( {\rm Supp}(F))) = \min \{ & \dd_{SI}(\Col_{\Sigma} ( \mathbf{c} ), \Col_{\Sigma} ( \mathbf{d} )) \\
& \mid \mathbf{c} \in \mathcal{X}_S, \mathbf{d} \in \mathcal{X}_T, S \neq T \}.
\end{split}
\end{equation*}
\end{definition} 

The following result extends \cite[Th.~20]{on-metrics} to the multishot scenario.

\begin{theorem} \label{th correction capability injection}
Given integers $ t \geq 0 $ and $ \rho \geq 0 $, a coding scheme $ F : \mathcal{S} \longrightarrow \mathcal{X} $ is $ t $-error and $ \rho $-erasure-correcting for non-coherent communication if, and only if,
$$ 2t + \rho < \dd_{SI}(\Col_{\Sigma} ( {\rm Supp}(F) )). $$
In particular, it must hold that $ \Col_{\Sigma} ( \mathbf{c}) \neq \Col_{\Sigma} ( \mathbf{d} ) $ if $ \mathbf{c} \in \mathcal{X}_S $, $ \mathbf{d} \in \mathcal{X}_T $ and $ S \neq T $.
\end{theorem}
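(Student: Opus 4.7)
The plan is to extend the single-shot characterization of \cite[Th.~20]{on-metrics} --- which says that a single-shot coding scheme is $ t $-error and $ \rho $-erasure-correcting for non-coherent communication if and only if $ 2t + \rho < \dd_I $ --- to the $ \ell $-shot setting, using that the sum-injection distance is literally a sum of per-shot injection distances and that the adversarial budget $ (t,\rho) $ is additive across shots. Once a per-shot decomposition of the total budget is fixed, the argument reduces cleanly to $ \ell $ parallel single-shot instances, and the extra structure required in the proof is mostly bookkeeping.

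For sufficiency, I would assume $ 2t + \rho < \dd_{SI}(\Col_\Sigma({\rm Supp}(F))) $ and introduce the minimum-sum-injection-distance decoder, which outputs the $ S $ minimizing $ \min_{\mathbf{x} \in \mathcal{X}_S} \dd_{SI}(\Col_\Sigma(\mathbf{x}), \Col_\Sigma(Y)) $. To prove this works, I would argue shot by shot: given the true transmission $ \mathbf{x} \in \mathcal{X}_S $, if the adversary uses budget $ (t_i, \rho_i) $ in shot $ i $ with $ \sum_i t_i \leq t $ and $ \sum_i \rho_i \leq \rho $, then the single-shot half of \cite[Th.~20]{on-metrics} applied to each shot bounds $ \dd_I(\Col(x_i), \Col(Y_i)) $ in terms of $ t_i $ and $ \rho_i $. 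Summing these bounds yields the aggregate inequality needed so that the standard triangle-inequality argument --- a confusable $ \mathbf{x}' \in \mathcal{X}_T $ with $ T \neq S $ would force $ \dd_{SI}(\Col_\Sigma(\mathbf{x}), \Col_\Sigma(\mathbf{x}')) \leq 2t + \rho $, contradicting the hypothesis --- yields unique decodability.

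For necessity, I would contrapose: assume $ \dd_{SI}(\Col_\Sigma({\rm Supp}(F))) \leq 2t + \rho $, pick $ \mathbf{x} \in \mathcal{X}_S $ and $ \mathbf{x}' \in \mathcal{X}_T $ with $ S \neq T $ realizing the minimum, and allocate the budget $ (t,\rho) $ across shots in proportion to the per-shot injection distances $ \dd_I(\Col(x_i), \Col(x'_i)) $. Applying the converse direction of the single-shot result in each shot produces matrices $ A_i, A'_i $ and $ E_i, E'_i $ within each local budget such that $ x_i A_i^T + E_i = x'_i {A'_i}^T + E'_i $. Concatenating these per-shot attacks gives a single output $ Y $ that is consistent with both $ \mathbf{x} $ and $ \mathbf{x}' $, so no decoder can always return the correct message. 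The special case $ t = \rho = 0 $ of this construction forces $ \Col_\Sigma(\mathbf{c}) \neq \Col_\Sigma(\mathbf{d}) $ whenever $ \mathbf{c} \in \mathcal{X}_S $, $ \mathbf{d} \in \mathcal{X}_T $, $ S \neq T $, which is the ``in particular'' clause of the statement.

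The main obstacle, just as in the sum-rank version (Theorem~\ref{theorem correction capability}), is the bookkeeping of the per-shot budget allocation: one must check on the direct side that the \emph{worst-case} allocation still satisfies the single-shot bounds, and on the converse side that the allocation dictated by the per-shot distances is \emph{simultaneously achievable} with $ \sum_i t_i \leq t $ and $ \sum_i \rho_i \leq \rho $. Both points exploit the strict additivity of $ \dd_{SI} $ and of the adversarial budget, so once the single-shot lemma of \cite{on-metrics} is available, the $ \ell $-shot extension is essentially mechanical and can be relegated to the appendix alongside the sum-rank counterpart.
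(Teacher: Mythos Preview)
Your plan diverges from the paper's proof in a way that creates a genuine gap in the sufficiency direction. The paper does \emph{not} invoke \cite[Th.~20]{on-metrics} as a black box; it applies the min-rank formula of Lemma~\ref{lemma 15 in onmetrics} (which is \cite[Lemma~15]{on-metrics}) shot by shot to the identity $\mathbf{c}_1 A^T - \mathbf{c}_2 B^T = \mathbf{e}_2 - \mathbf{e}_1$. That lemma yields, for each $i$, $\Rk(\mathbf{c}_1^{(i)}A_i^T - \mathbf{c}_2^{(i)}B_i^T)\ge [\dd_I(\Col(\mathbf{c}_1^{(i)}),\Col(\mathbf{c}_2^{(i)}))-\rho_i]^+$ with a \emph{single} $\rho_i$ bounding the rank deficiency of both $A_i$ and $B_i$; summing and comparing with $\wt_{SR}(\mathbf{e}_2-\mathbf{e}_1)\le 2t$ is what produces the asymmetric bound $2t+\rho$. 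The same lemma is used constructively in the converse direction.

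Your triangle-inequality route cannot reproduce this asymmetry. First, \cite[Th.~20]{on-metrics} is a decodability statement and does not directly give a per-shot bound on $\dd_I(\Col(x_i),\Col(Y_i))$; what one can actually prove from the channel model is $\dd_I(\Col(x_i),\Col(Y_i))\le t_i+\rho_i$. Summing over $i$ and then applying the triangle inequality to a confusable $\mathbf{x}'$ yields $\dd_{SI}(\Col_\Sigma(\mathbf{x}),\Col_\Sigma(\mathbf{x}'))\le 2(t+\rho)$, not $2t+\rho$. The factor-of-two loss on $\rho$ is intrinsic to the argument: the erasure budget is counted once on the path $\mathbf{x}\to Y$ and again on $\mathbf{x}'\to Y$, whereas the sharp bound exploits that a \emph{common} erasure level suffices for both transfer matrices in Lemma~\ref{lemma 15 in onmetrics}. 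So your sufficiency proof, as sketched, would only establish the weaker criterion $2t+2\rho<\dd_{SI}$. Your necessity direction (allocate $\rho_i$ proportionally to the per-shot $\dd_I$ and invoke the single-shot construction) is essentially the paper's second half and is fine in spirit, but for sufficiency you need to abandon the triangle inequality and work directly with $\mathbf{c}_1A^T-\mathbf{c}_2B^T$ via Lemma~\ref{lemma 15 in onmetrics}.
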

\begin{proof}
See Appendix~\ref{app proofs for section on measures}.
\end{proof}

As in the single shot case \cite{errors-network}, the number of packets
injected in the $ i $th shot by a codeword $ \mathbf{c} $, with $ {\rm
Col}_{\Sigma} (\mathbf{c}) = \boldsymbol{\mathcal{U}} \in
\mathcal{P}(\mathbb{F}_q^m)^\ell $, coincides with $ \dim(\mathcal{U}_i)
$. If all codewords inject the same number of packets in a given shot,
we may say that the coding scheme is \textit{sum-constant-dimension}.
For such coding schemes, the sum-injection distance coincides with the
sum-subspace distance introduced in \cite[Eq.~(2)]{multishot2009}.

\begin{definition}[\textbf{Sum-subspace distance \cite{multishot2009}}]
Given $ \boldsymbol{\mathcal{U}} $, $ \boldsymbol{\mathcal{V}} \in \mathcal{P}(\mathbb{F}_q^m)^\ell $, we define their sum-subspace distance as
\begin{equation*}
\begin{split}
\dd_{SS}(\boldsymbol{\mathcal{U}}, \boldsymbol{\mathcal{V}}) = & \sum_{i=1}^\ell \dd_S(\mathcal{U}_i, \mathcal{V}_i) \\
= & \sum_{i=1}^\ell (\dim(\mathcal{U}_i + \mathcal{V}_i) - \dim(\mathcal{U}_i \cap \mathcal{V}_i)).
\end{split}
\end{equation*}
For a coding scheme $ F : \mathcal{S} \longrightarrow \mathcal{X} $, we define its minimum sum-subspace distance, denoted by $ \dd_{SS}(\Col_{\Sigma} ( {\rm Supp}(F) )) $, analogously to their minimum sum-injection distance (Definition~\ref{def sum-injection distance}).
\end{definition}

Observe that by \cite[Eq.~(28)]{on-metrics}, it holds that
\begin{equation}
\dd_{SI}(\boldsymbol{\mathcal{U}}, \boldsymbol{\mathcal{V}}) = \frac{1}{2} \dd_{SS}(\boldsymbol{\mathcal{U}}, \boldsymbol{\mathcal{V}}) + \frac{1}{2} \sum_{i=1}^\ell | \dim(\mathcal{U}_i) - \dim(\mathcal{V}_i) |,
\label{eq comparison d_SS and d_SI}
\end{equation}
for all $ \boldsymbol{\mathcal{U}} $, $ \boldsymbol{\mathcal{V}} \in \mathcal{P}(\mathbb{F}_q^m)^\ell $. Hence as explained earlier, for sum-constant-dimension codes we may simply consider its minimum sum-subspace distance. Then the sufficient part in Theorem~\ref{th correction capability injection} was already proven in \cite[Th.~1]{multishot}.

\subsection{Measuring Security Resistance Against a Wire-tapper} \label{subsec measuring info leakage}

In this subsection, we give a sufficient and necessary condition for coding schemes built from nested linear code pairs to be secure under a given number of observations. We start by estimating the information leakage to the wire-tapper. This is a natural extension of \cite[Lemma~6]{silva-universal}. We will however follow the steps in the proof of \cite[Prop.~16]{rgmw} (see also \cite[Lemma~7]{rgrw}). 

\begin{lemma} \label{lemma information leakage calculation}
Let $ F : \mathcal{S} \longrightarrow \mathcal{X} $ be a coding scheme built from nested linear codes $ \mathcal{C}_2 \subsetneqq \mathcal{C}_1 \subseteq \mathbb{F}_{q^m}^n $ as in Definition~\ref{definition NLCP}. Assume that $ X = F(S) $ is the uniform random variable in $\mathcal{X}_S $ given $ S \in \mathcal{S} $. Let $ B_i \in \mathbb{F}_q^{\mu_i \times n_i} $ and $ \mathcal{L}_i = \Row( B_i ) \subseteq \mathbb{F}_q^{n_i} $, for $ i = 1,2, \ldots, \ell $, define $ B = \diag(B_1, B_2, \ldots, B_\ell) \in \mathbb{F}_q^{\mu \times n} $, $ \boldsymbol{\mathcal{L}} = (\mathcal{L}_1, \mathcal{L}_2, \ldots, \mathcal{L}_\ell) \in \mathcal{P}(\mathbb{F}_q^\mathbf{n}) $ and
$$ \mathcal{V}_{\boldsymbol{\mathcal{L}}} = \{ \mathbf{c} \in \mathbb{F}_{q^m}^n \mid \Row( \mathbf{c}^{(i)}) \subseteq \mathcal{L}_i, 1 \leq i \leq \ell \}. $$ 
Taking logarithms with base $ q^m $ in entropy and mutual information, it holds that
\begin{equation}
I(S; X B^T) \leq \dim(\mathcal{C}_2^\perp \cap \mathcal{V}_{\boldsymbol{\mathcal{L}}}) - \dim(\mathcal{C}_1^\perp \cap \mathcal{V}_{\boldsymbol{\mathcal{L}}}),
\label{eq information leakage equation}
\end{equation}
and equality holds if $ S $ is the uniform random variable in $ \mathcal{S} $.
\end{lemma}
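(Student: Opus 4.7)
The plan is to follow the standard template from \cite{silva-universal, rgmw, rgrw}: split $ X $ into a part determined by $ S $ and an independent uniform coset offset, bound the two entropy terms separately, and then convert the result into the dual-intersection form appearing in the statement via a vector-space duality identity.

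First, using the decomposition $ \mathcal{C}_1 = \mathcal{C}_2 \oplus \mathcal{V} $ from Definition~\ref{definition NLCP}, I write $ X = \psi(S) + C $ where $ C $ is uniform in $ \mathcal{C}_2 $ and, by the assumption that $ X \mid S $ is uniform on $ \mathcal{X}_S = \psi(S) + \mathcal{C}_2 $, independent of $ S $. Hence
\[
XB^T = \psi(S) B^T + C B^T,
\]
so that $ H(XB^T \mid S) = H(CB^T) $ regardless of the distribution of $ S $, and the mutual information becomes
\[
I(S; XB^T) = H(XB^T) - H(CB^T).
\]
Taking logarithms in base $ q^m $, I will bound $ H(XB^T) \leq \dim_{\mathbb{F}_{q^m}}(\mathcal{C}_1 B^T) $ (since $ XB^T $ always lies in $ \mathcal{C}_1 B^T $), with equality when $ X $ is uniform on $ \mathcal{C}_1 $, and compute $ H(CB^T) = \dim_{\mathbb{F}_{q^m}}(\mathcal{C}_2 B^T) $ using that $ C $ is uniform on $ \mathcal{C}_2 $. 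When $ S $ is uniform on $ \mathbb{F}_{q^m}^k $, the vector $ \psi(S) $ is uniform on $ \mathcal{V} $ and, being independent of $ C \in \mathcal{C}_2 $, the sum $ X = \psi(S) + C $ is uniform on $ \mathcal{V} \oplus \mathcal{C}_2 = \mathcal{C}_1 $, so the bound on $ H(XB^T) $ is attained.

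Next, using the rank-nullity theorem for the $ \mathbb{F}_{q^m} $-linear map $ \mathbf{c} \mapsto \mathbf{c} B^T $, I rewrite
\[
\dim(\mathcal{C}_i B^T) = \dim(\mathcal{C}_i) - \dim(\mathcal{C}_i \cap \ker(\cdot B^T))
\]
for $ i = 1, 2 $. The key identification is that $ \ker(\cdot B^T) = \mathcal{V}_{\boldsymbol{\mathcal{L}}^\perp} $, where $ \boldsymbol{\mathcal{L}}^\perp = (\mathcal{L}_1^\perp, \ldots, \mathcal{L}_\ell^\perp) $: indeed $ \mathbf{c} B^T = 0 $ iff each $ \mathbf{c}^{(i)} $ is orthogonal to every row of $ B_i $, i.e., iff each row of $ M_\mathcal{A}(\mathbf{c}^{(i)}) $ lies in $ \mathcal{L}_i^\perp $. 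Then I apply the standard dual-intersection identity, valid for any subspaces $ \mathcal{C}, \mathcal{D} \subseteq \mathbb{F}_{q^m}^n $,
\[
\dim(\mathcal{C}) - \dim(\mathcal{C} \cap \mathcal{D}) = \dim(\mathcal{D}^\perp) - \dim(\mathcal{C}^\perp \cap \mathcal{D}^\perp),
\]
together with the duality $ \mathcal{V}_{\boldsymbol{\mathcal{L}}^\perp}^\perp = \mathcal{V}_{\boldsymbol{\mathcal{L}}} $, to conclude
\[
\dim(\mathcal{C}_i B^T) = \dim(\mathcal{V}_{\boldsymbol{\mathcal{L}}}) - \dim(\mathcal{C}_i^\perp \cap \mathcal{V}_{\boldsymbol{\mathcal{L}}}).
\]
Subtracting the $ i = 2 $ and $ i = 1 $ cases cancels $ \dim(\mathcal{V}_{\boldsymbol{\mathcal{L}}}) $ and yields (\ref{eq information leakage equation}).

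The main obstacle is the careful treatment of the two incompatible linearity structures: $ B $ has entries in $ \mathbb{F}_q $ while $ X $ has entries in $ \mathbb{F}_{q^m} $, so $ \mathcal{V}_{\boldsymbol{\mathcal{L}}} $ must be recognized as the $ \mathbb{F}_{q^m} $-linear span of the $ \mathbb{F}_q $-space $ \mathcal{L}_1 \oplus \cdots \oplus \mathcal{L}_\ell $ and shown to be stable under $ \perp $ in $ \mathbb{F}_{q^m}^n $; the rest is routine bookkeeping with entropies and dimensions.
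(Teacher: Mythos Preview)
Your proposal is correct and follows essentially the same approach as the paper: bound $H(XB^T)$ by $\dim(\mathcal{C}_1 B^T)$, compute $H(XB^T\mid S)=\dim(\mathcal{C}_2 B^T)$ via uniformity of $X$ on its coset, apply rank--nullity with the identification $\ker(\cdot B^T)=\mathcal{V}_{\boldsymbol{\mathcal{L}}^\perp}=\mathcal{V}_{\boldsymbol{\mathcal{L}}}^\perp$, and use the dual-intersection dimension identity. The only cosmetic difference is that you make the decomposition $X=\psi(S)+C$ with $C$ independent of $S$ explicit, whereas the paper computes $H(XB^T\mid S)$ directly; the content is identical.
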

\begin{proof} 
See Appendix~\ref{app proofs for section on measures}.
\end{proof}

Thus we may conclude the following.

\begin{theorem} \label{th sum-rank measures security}
Let $ F : \mathcal{S} \longrightarrow \mathcal{X} $ be a coding scheme built from nested linear codes $ \mathcal{C}_2 \subsetneqq \mathcal{C}_1 \subseteq \mathbb{F}_{q^m}^n $ as in Definition~\ref{definition NLCP}. Assume that $ F(S) $ is the uniform random variable in $ \mathcal{X}_S $ given $ S \in \mathcal{S} $. Given an integer $ \mu \geq 0 $, the coset coding scheme is secure under $ \mu $ observations if
$$ \mu < \dd_{SR}(\mathcal{C}_2^\perp, \mathcal{C}_1^\perp). $$
The reversed implication also holds if $ S $ is the uniform random variable in $ \mathcal{S} $.
\end{theorem}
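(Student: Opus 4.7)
The plan is to reduce the theorem to Lemma~\ref{lemma information leakage calculation} through one bridging observation: for any tuple $ \mathbf{c} = (\mathbf{c}^{(i)})_{i=1}^{\ell} \in \mathbb{F}_{q^m}^n $ and any $ \boldsymbol{\mathcal{L}} = (\mathcal{L}_i)_{i=1}^{\ell} $ in $ \mathcal{P}(\mathbb{F}_q^{\mathbf{n}}) $, the condition $ \mathbf{c} \in \mathcal{V}_{\boldsymbol{\mathcal{L}}} $ means exactly that $ \Row(\mathbf{c}^{(i)}) \subseteq \mathcal{L}_i $ for all $i$, which immediately gives $ \wt_{SR}(\mathbf{c}) = \sum_{i=1}^\ell \dim \Row(\mathbf{c}^{(i)}) \leq \sum_{i=1}^\ell \dim \mathcal{L}_i $. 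Using this, I will show that the right-hand side of~(\ref{eq information leakage equation}) vanishes for every wire-tap choice $ \boldsymbol{\mathcal{L}} $ with $ \sum_i \dim \mathcal{L}_i = \mu $ precisely when $ \mu < \dd_{SR}(\mathcal{C}_2^\perp, \mathcal{C}_1^\perp) $, and both implications of the theorem will follow.

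For sufficiency, assume $ \mu < \dd_{SR}(\mathcal{C}_2^\perp, \mathcal{C}_1^\perp) $ and fix any $ B_i \in \mathbb{F}_q^{\mu_i \times n_i} $ with $ \sum_i \mu_i = \mu $, writing $ \mathcal{L}_i = \Row(B_i) $. Any $ \mathbf{c} \in \mathcal{C}_2^\perp \cap \mathcal{V}_{\boldsymbol{\mathcal{L}}} $ satisfies $ \wt_{SR}(\mathbf{c}) \leq \mu < \dd_{SR}(\mathcal{C}_2^\perp, \mathcal{C}_1^\perp) $, so by the definition of the relative minimum sum-rank distance $ \mathbf{c} $ must lie in $ \mathcal{C}_1^\perp $; together with the trivial inclusion $ \mathcal{C}_1^\perp \cap \mathcal{V}_{\boldsymbol{\mathcal{L}}} \subseteq \mathcal{C}_2^\perp \cap \mathcal{V}_{\boldsymbol{\mathcal{L}}} $ coming from $ \mathcal{C}_1^\perp \subseteq \mathcal{C}_2^\perp $, the two intersections coincide. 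Lemma~\ref{lemma information leakage calculation} then gives $ I(S; XB^T) \leq 0 $, so $ I(S; XB^T) = 0 $, which is perfect secrecy.

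For the converse under uniform $S$, I argue by contraposition. If $ d := \dd_{SR}(\mathcal{C}_2^\perp, \mathcal{C}_1^\perp) \leq \mu $, select $ \mathbf{c} \in \mathcal{C}_2^\perp \setminus \mathcal{C}_1^\perp $ with $ \wt_{SR}(\mathbf{c}) = d $ and set $ \mathcal{L}_i^{0} = \Row(\mathbf{c}^{(i)}) $, so $ \sum_i \dim \mathcal{L}_i^{0} = d \leq \mu $. Extend each $ \mathcal{L}_i^{0} $ to a subspace $ \mathcal{L}_i \subseteq \mathbb{F}_q^{n_i} $ so that $ \sum_i \dim \mathcal{L}_i = \mu $ (possible since $ \mu \leq n $), and pick matrices $ B_i $ with row spaces $ \mathcal{L}_i $. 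Then $ \mathbf{c} \in \mathcal{C}_2^\perp \cap \mathcal{V}_{\boldsymbol{\mathcal{L}}} $ while $ \mathbf{c} \notin \mathcal{C}_1^\perp $, so the dimension difference in~(\ref{eq information leakage equation}) is at least one; since $S$ is uniform, equality holds in Lemma~\ref{lemma information leakage calculation}, yielding $ I(S; XB^T) \geq 1 > 0 $, and the scheme is not secure under $\mu$ observations. The only mildly delicate point is this extension step in the converse, namely keeping each $ \dim \mathcal{L}_i \leq n_i $ while hitting the prescribed total $\mu$; this is automatic in the regime $ \mu \leq n $, which is the only regime in which wire-tapping $\mu$ links is meaningful.
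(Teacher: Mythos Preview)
Your proof is correct and follows essentially the same route as the paper: both reduce to Lemma~\ref{lemma information leakage calculation} via the equivalence between $\mu < \dd_{SR}(\mathcal{C}_2^\perp,\mathcal{C}_1^\perp)$ and the equality $\mathcal{C}_2^\perp\cap\mathcal{V}_{\boldsymbol{\mathcal{L}}}=\mathcal{C}_1^\perp\cap\mathcal{V}_{\boldsymbol{\mathcal{L}}}$ for all $\boldsymbol{\mathcal{L}}$ of total dimension at most $\mu$. The paper simply asserts this equivalence as ``straightforward,'' whereas you spell out both directions explicitly (including the extension of the row spaces in the contrapositive), so your argument is a fleshed-out version of theirs rather than a different approach.
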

\begin{proof} 
This follows by combining the previous lemma with the straightforward fact that $ \mu < \dd_{SR}(\mathcal{C}_2^\perp, \mathcal{C}_1^\perp) $ if, and only if, $ \mathcal{C}_2^\perp \cap \mathcal{V}_{\boldsymbol{\mathcal{L}}} = \mathcal{C}_1^\perp \cap \mathcal{V}_{\boldsymbol{\mathcal{L}}} $ for all $ \boldsymbol{\mathcal{L}} \in \mathcal{P}(\mathbb{F}_q^\mathbf{n}) $ such that $ \sum_{i=1}^\ell \dim(\mathcal{L}_i) \leq \mu $.
\end{proof}

\section{Coding Schemes Based on Linearized Reed-Solomon Codes} \label{sec optimal constructions}

In this section, we will introduce a family of coding schemes whose
secret message entropy is the maximum possible for coherent communication,
for any  fixed bound on the number of errors, erasures and wire-tapped links, with the
restrictions $ 1 \leq \ell \leq q-1 $ and $ 1 \leq n_i \leq m $, for $ i
= 1,2, \ldots, \ell $. In particular, $ n \leq (q-1)m $. Assuming the
secret message is a uniform variable, these coding schemes have maximum
secret message size for the largest possible range of packet lengths.

By a process analogous to lifting \cite[Subsec.~IV-A]{error-control}, we
will also provide coding schemes with nearly optimal information rate
for non-coherent communication, for the same parameters, among
sum-constant-dimension codes. 

The building blocks of these coding schemes are linearized Reed-Solomon
codes, introduced in \cite[Def.~31]{linearizedRS}. We review these codes
in Subsection~\ref{subsec lin RS codes and duals}, and we compute their
duals. We then give upper bounds on the message entropy or size and show
that the above-mentioned coding schemes attain them in the coherent case
(Subsection~\ref{subsec optimal for coherent}), and are close in the
non-coherent case (Subsection~\ref{subsec near optimal non-coherent}).

\subsection{Linearized Reed-Solomon Codes and their Duals}
\label{subsec lin RS codes and duals}

Throughout this subsection, we assume that $ 1 \leq \ell \leq q-1 $ and
$ 1 \leq n_i \leq m $, for $ i = 1,2, \ldots , \ell $. Therefore $ n
\leq (q-1)m $. 

Let $ \sigma : \mathbb{F}_{q^m} \longrightarrow \mathbb{F}_{q^m} $ be
the field automorphism given by $ \sigma(a) = a^{q^r} $, for all $ a \in
\mathbb{F}_{q^m} $, where $ 1 \leq r \leq m $ and $ \gcd(r,m) = 1 $
(i.e., $ \{ a \in \mathbb{F}_{q^m} \mid \sigma(a) = a \} = \mathbb{F}_q $). We may set for
simplicity $ r = 1 $, but we need to consider the general case later to
include dual codes. We need to define linear operators as in
\cite[Def.~20]{linearizedRS}. Over finite fields (our case), polynomials
in these operators can be regarded as linearized polynomials
\cite[Ch.~3]{lidl}.

\begin{definition}[\textbf{Linear operators \cite{linearizedRS}}] \label{def linearized operators}
Fix $ a \in \mathbb{F}_{q^m} $, and define its $ i $th norm as $ N_i(a)
= \sigma^{i-1}(a) \cdots \sigma(a)a $. Now define the $ \mathbb{F}_q
$-linear operator $ \mathcal{D}_a^i : \mathbb{F}_{q^m} \longrightarrow
\mathbb{F}_{q^m} $ by
\begin{equation}
\mathcal{D}_a^i(b) = \sigma^i(b) N_i(a) ,
\label{eq definition linear operator}
\end{equation}
for all $ b \in \mathbb{F}_{q^m} $, and all $ i \in \mathbb{N} $. Define also $ \mathcal{D}_a = \mathcal{D}_a^1 $ and observe that $ \mathcal{D}_a^{i+1} = \mathcal{D}_a \circ \mathcal{D}_a^i $, for $ i \in \mathbb{N} $. We will write $ N^\sigma_i $ and $ \mathcal{D}_a^\sigma $ when it is not clear which automorphism $ \sigma $ we are using.
\end{definition}

We also need the concept of conjugacy in $ \mathbb{F}_{q^m} $, which was
given in \cite{lam} (see also \cite[Eq.~(2.5)]{lam-leroy}).

\begin{definition} [\textbf{Conjugacy \cite{lam}}] \label{def conjugacy}
Given $ a,b \in \mathbb{F}_{q^m} $, we say that they are conjugates if
there exists $ c \in \mathbb{F}_{q^m}^* $ such that $$ b =
\sigma(c)c^{-1} a. $$
\end{definition}

This defines an equivalence relation on $ \mathbb{F}_{q^m} $, and thus a
partition of $ \mathbb{F}_{q^m} $ into conjugacy classes. Take now a
primitive element $ \gamma $ of $ \mathbb{F}_{q^m} $, meaning that $ \mathbb{F}_{q^m}^* = \{ \gamma^0, \gamma^1, \gamma^2 \ldots, \gamma^{q^m-2} \} $ (see \cite[page 97]{macwilliams}), and observe that
$$ \gamma^j \neq \sigma(c)c^{-1} \gamma^i, $$ for all $ c \in
\mathbb{F}_{q^m}^* $ and all $ 0 \leq i < j \leq q-2 $. Hence $
\gamma^0, $ $ \gamma^1, $ $ \gamma^2, $ $ \ldots , $ $ \gamma^{q-2} $
constitute representatives of disjoint conjugacy classes. Moreover, one
can easily see that they represent all conjugacy classes except the
trivial one $ \{ 0 \} $. Thus the following definition is a particular
case of \cite[Def.~31]{linearizedRS}.

\begin{definition} [\textbf{Linearized Reed-Solomon codes \cite{linearizedRS}}] \label{def lin RS codes}
Fix linearly independent sets $ \mathcal{B}^{(i)} = \{ \beta_1^{(i)},
\beta_2^{(i)}, \ldots, \beta_{n_i}^{(i)} \} \subseteq \mathbb{F}_{q^m} $
over $ \mathbb{F}_q $, for $ i = 1,2, \ldots, \ell $, and denote $
\boldsymbol{\mathcal{B}} = (\mathcal{B}^{(1)}, $ $ \mathcal{B}^{(2)}, $
$ \ldots, $ $ \mathcal{B}^{(\ell)}) $. Here, we emphasize that the elements in $ \mathcal{B}^{(i)} $ are linearly independent, but there may be linear dependencies between elements in $ \mathcal{B}^{(i)} $ and elements in $ \bigcup_{j \neq i} \mathcal{B}^{(j)} $. In particular, it is possible to take $ \mathcal{B}^{(1)} = \mathcal{B}^{(2)} = \ldots = \mathcal{B}^{(\ell)} $. Let $ \gamma $ be a  primitive
element of $ \mathbb{F}_{q^m} $. For $ k =
0,1,2, \ldots, n $, we define the linearized Reed-Solomon code of
dimension $ k $ as the linear code $
\mathcal{C}^\sigma_{L,k}(\boldsymbol{\mathcal{B}},\gamma) \subseteq
\mathbb{F}_{q^m}^n $ with generator matrix given by
\begin{equation}
D = (D_1| D_2 | \ldots | D_\ell) \in \mathbb{F}_{q^m}^{k \times n},
\label{eq gen matrix of lin RS code}
\end{equation}
where
$$ D_i = \left( \begin{array}{cccc}
\beta_1^{(i)} & \beta_2^{(i)} & \ldots & \beta_{n_i}^{(i)} \\
\mathcal{D}_{\gamma^{i-1}} \left( \beta_1^{(i)} \right) & \mathcal{D}_{\gamma^{i-1}} \left( \beta_2^{(i)} \right) & \ldots & \mathcal{D}_{\gamma^{i-1}} \left( \beta_{n_i}^{(i)} \right) \\
\mathcal{D}_{\gamma^{i-1}}^2 \left( \beta_1^{(i)} \right) & \mathcal{D}_{\gamma^{i-1}}^2 \left( \beta_2^{(i)} \right) & \ldots & \mathcal{D}_{\gamma^{i-1}}^2 \left( \beta_{n_i}^{(i)} \right) \\
\vdots & \vdots & \ddots & \vdots \\
\mathcal{D}_{\gamma^{i-1}}^{k-1} \left( \beta_1^{(i)} \right) & \mathcal{D}_{\gamma^{i-1}}^{k-1} \left( \beta_2^{(i)} \right) & \ldots & \mathcal{D}_{\gamma^{i-1}}^{k-1} \left( \beta_{n_i}^{(i)} \right) \\
\end{array} \right) $$
for $ i = 1,2, \ldots, \ell $.
\end{definition}

By \cite[Prop.~33]{linearizedRS} (see also Subsection~\ref{subsec skew metrics}), this code is isomorphic as a vector space to a $ k $-dimensional skew Reed-Solomon code \cite[Def.~7]{skew-evaluation1}. In particular, it is also $ k $-dimensional and the generator matrix in (\ref{eq gen matrix of lin RS code}) has full rank. Moreover, linearized Reed-Solomon codes are maximum sum-rank distance codes, which was proven in \cite{linearizedRS}. The next result combines \cite[Prop.~34]{linearizedRS} and \cite[Th.~4]{linearizedRS}.

\begin{proposition} [\textbf{\cite{linearizedRS}}] \label{prop linearized RS are MSRD}
For a linear code $ \mathcal{C} \subseteq \mathbb{F}_{q^m}^n $ of dimension $ k $, it holds that
$$ \dd_{SR}(\mathcal{C}) \leq n - k + 1. $$
Moreover, equality holds for $ \mathcal{C} = \mathcal{C}^\sigma_{L,k}(\boldsymbol{\mathcal{B}},\gamma) $ as in Definition~\ref{def lin RS codes}. That is, linearized Reed-Solomon codes are maximum sum-rank distance (MSRD) codes.
\end{proposition}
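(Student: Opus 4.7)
The plan is to split the statement into its two claims. For the Singleton-type inequality, I would observe that for any $ \mathbf{c} = (\mathbf{c}^{(1)}, \ldots, \mathbf{c}^{(\ell)}) \in \mathbb{F}_{q^m}^n $ with $ \mathbf{c}^{(i)} \in \mathbb{F}_{q^m}^{n_i} $, the rank $ \Rk(M_\mathcal{A}(\mathbf{c}^{(i)})) $ is bounded above by the number of nonzero columns of $ M_\mathcal{A}(\mathbf{c}^{(i)}) $, which equals the Hamming weight of $ \mathbf{c}^{(i)} $. Summing over $ i $ yields $ \wt_{SR}(\mathbf{c}) \leq \wt_H(\mathbf{c}) $, so that $ \dd_{SR}(\mathcal{C}) \leq \dd_H(\mathcal{C}) \leq n - k + 1 $ by the classical Singleton bound.

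For the MSRD claim, I would invoke the evaluation interpretation of $ \mathcal{C}^\sigma_{L,k}(\boldsymbol{\mathcal{B}}, \gamma) $ provided by \cite[Prop.~33]{linearizedRS}: any codeword arises from a (skew/linearized) polynomial $ F $ of degree at most $ k-1 $, and its $ i $th block has $ t $th entry $ \sum_{j=0}^{k-1} f_j \mathcal{D}_{\gamma^{i-1}}^j(\beta_t^{(i)}) $. Because $ \mathcal{D}_{\gamma^{i-1}} $ is $ \mathbb{F}_q $-linear, as is immediate from Definition~\ref{def linearized operators} and $ \sigma|_{\mathbb{F}_q} = \Id $, the evaluation map $ F_i : V_i \longrightarrow \mathbb{F}_{q^m} $, $ b \mapsto \sum_j f_j \mathcal{D}_{\gamma^{i-1}}^j(b) $, is $ \mathbb{F}_q $-linear on $ V_i = \langle \mathcal{B}^{(i)} \rangle_{\mathbb{F}_q} $. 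Therefore
$$ \Rk(M_\mathcal{A}(\mathbf{c}^{(i)})) = \dim_{\mathbb{F}_q} F_i(V_i) = n_i - \dim_{\mathbb{F}_q}(V_i \cap \ker F_i), $$
and summing gives $ \wt_{SR}(\mathbf{c}) = n - \sum_{i=1}^\ell \dim_{\mathbb{F}_q}(V_i \cap \ker F_i) $. The desired bound $ \dd_{SR}(\mathcal{C}^\sigma_{L,k}) \geq n - k + 1 $ thus reduces to the statement that, for every nonzero $ F $ of degree at most $ k-1 $, the total root count $ \sum_{i=1}^\ell \dim_{\mathbb{F}_q}(V_i \cap \ker F_i) $ is at most $ k-1 $.

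The hard part is controlling this sum across distinct conjugacy classes simultaneously. Since $ \gamma^0, \gamma^1, \ldots, \gamma^{\ell-1} $ represent pairwise distinct, nontrivial conjugacy classes of $ \mathbb{F}_{q^m} $ under $ \sigma $ (as recalled in the paragraph following Definition~\ref{def conjugacy}), the Lam--Leroy theory of P-independence guarantees that any concatenation of $ \mathbb{F}_q $-bases of the individual root spaces $ V_i \cap \ker F_i $ yields a P-independent family of evaluation points; a nonzero skew polynomial of degree less than $ k $ cannot have $ k $ or more P-independent roots. This is precisely \cite[Th.~4]{linearizedRS}, which I would cite rather than reprove. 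Combined with the preceding dimension identity and the evaluation description of $ \mathcal{C}^\sigma_{L,k}(\boldsymbol{\mathcal{B}}, \gamma) $, the two inequalities coincide and yield $ \dd_{SR}(\mathcal{C}^\sigma_{L,k}(\boldsymbol{\mathcal{B}}, \gamma)) = n - k + 1 $, completing the proof.
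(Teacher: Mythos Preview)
Your proposal is correct and essentially aligned with the paper's treatment: the paper does not give its own proof of this proposition but simply records it as the combination of \cite[Prop.~34]{linearizedRS} and \cite[Th.~4]{linearizedRS}. Your Singleton-bound argument via $\wt_{SR}\le\wt_H$ is the standard one, and your reduction of the MSRD claim to the root-count inequality $\sum_i \dim_{\mathbb{F}_q}(V_i\cap\ker F_i)\le k-1$ (using the $\mathbb{F}_q$-linearity of $\mathcal{D}_{\gamma^{i-1}}$ and rank--nullity) is exactly the content behind \cite[Th.~4]{linearizedRS}, so you end up citing the same source for the substantive step. One small caution: what you label ``precisely \cite[Th.~4]{linearizedRS}'' is the root-counting/P-independence statement underlying that theorem rather than its literal formulation (which already asserts the MSRD property), so phrase the citation accordingly to avoid circularity.
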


Observe that Gabidulin codes \cite{gabidulin, roth} and their extension \cite{new-construction} are obtained as particular cases by choosing $ \ell = 1 $ (thus $ n = n_1 $): In that case, we have that $ N_j(\gamma^0) = N_j(1) = 1 $, hence 
$$ \mathcal{D}_{\gamma^{0}}^j(\beta_i^{(1)}) = \sigma^j(\beta_i^{(1)}), $$ 
for all $ i = 1,2,\ldots, n $ and $ j = 0,1,\ldots, k-1 $. 

One can also recover Reed-Solomon codes \cite{reed-solomon} and generalized Reed-Solomon codes \cite{delsarte} by setting $ \sigma = \Id $ or $ m=1 $ (thus $ n_1 = n_2 = \ldots = n_\ell = 1 $): In that case, we have that $ N_j(\gamma^{i-1}) = (\gamma^{i-1})^j $ and $ \sigma(\beta_1^{(i)}) = \beta_1^{(i)} $, hence
$$ \mathcal{D}_{\gamma^{i-1}}^j(\beta_1^{(i)}) = \beta_1^{(i)} (\gamma^{i-1})^j, $$ 
for $ i = 1,2,\ldots, \ell $ and $ j = 0,1,\ldots, k-1 $. This explains the discussion in Section~\ref{sec intro}.

Intuitively, the conjugacy representative $ \gamma^{i-1} $ and the
linear operator $ \mathcal{D}_{\gamma^{i-1}} $ are used
in the $ i $th shot of the network as a Gabidulin code, whereas using
different conjugacy classes allows to correct link errors globally,
seeing the code block-wise as a Reed-Solomon code (observe the
correspondence between Fig.~\ref{fig multishot network} and the generator matrix (\ref{eq gen matrix of lin RS code})). See also Table \ref{table skew ev codes} in Subsection \ref{subsec skew metrics} for a summary of maximum-distance evaluation codes of Reed-Solomon type for different metrics.  

We now prove that the duals of these codes are again linearized Reed-Solomon codes, thus also maximum sum-rank distance, which was not proven in \cite{linearizedRS}. This result extends \cite[Th.~1]{delsarte}, \cite[Th.~7]{gabidulin} and \cite[Th.~2]{new-construction}. We will need it to obtain optimal secure coding schemes in view of Theorem~\ref{th sum-rank measures security}. It also has the advantage of providing explicit parity-check matrices for linearized Reed-Solomon codes in the form of (\ref{eq gen matrix of lin RS code}).

\begin{theorem} \label{th duals of lin RS codes}
For $ k = 0,1,2,\ldots, n $, if $ \mathcal{C}^\sigma_{L,k}(\boldsymbol{\mathcal{B}},\gamma) $ is as in Definition~\ref{def lin RS codes}, then there exist linearly independent sets $ \mathcal{A}^{(i)} = \{ \alpha_1^{(i)}, \alpha_2^{(i)}, \ldots, $ $ \alpha_{n_i}^{(i)} \} \subseteq \mathbb{F}_{q^m} $ over $ \mathbb{F}_q $, for $ i = 1,2, \ldots, \ell $, such that
$$ \mathcal{C}^\sigma_{L,k}(\boldsymbol{\mathcal{B}},\gamma)^\perp = \mathcal{C}^{\sigma^{-1}}_{L,n-k}(\boldsymbol{\mathcal{A}},\sigma^{-1}(\gamma)), $$
where $ \boldsymbol{\mathcal{A}} = (\mathcal{A}^{(1)}, \mathcal{A}^{(2)}, \ldots, \mathcal{A}^{(\ell)}) $, and where $ \sigma^{-1}(\gamma) $ is a primitive element of $ \mathbb{F}_{q^m} $. In particular, $ \mathcal{C}^\sigma_{L,k}(\boldsymbol{\mathcal{B}},\gamma)^\perp $ is also an MSRD code.
\end{theorem}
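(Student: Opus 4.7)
The plan is to exhibit a candidate generator matrix for the dual and then collapse the proof via orthogonality plus a dimension count. More precisely, since $\dim \mathcal{C}^\sigma_{L,k}(\boldsymbol{\mathcal{B}},\gamma)^\perp = n-k$, and since by Proposition~\ref{prop linearized RS are MSRD} any linearized Reed-Solomon code $\mathcal{C}^{\sigma^{-1}}_{L,n-k}(\boldsymbol{\mathcal{A}},\sigma^{-1}(\gamma))$ also has dimension $n-k$, it suffices to display $\mathcal{A}^{(1)},\ldots,\mathcal{A}^{(\ell)}$ such that the two codes are orthogonal.

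First, I would check that $\mathcal{C}^{\sigma^{-1}}_{L,n-k}(\boldsymbol{\mathcal{A}},\sigma^{-1}(\gamma))$ is even well-defined under Definition~\ref{def lin RS codes}. The automorphism $\sigma^{-1}$ has the same fixed field $\mathbb{F}_q$, so $\sigma^{-1}(\gamma)$ is still a primitive element of $\mathbb{F}_{q^m}^*$, and its powers $\sigma^{-1}(\gamma)^0,\ldots,\sigma^{-1}(\gamma)^{q-2}$ represent distinct nontrivial $\sigma^{-1}$-conjugacy classes by the same argument given in the excerpt for $\sigma$. Any $\mathbb{F}_q$-linearly independent $\mathcal{A}^{(i)}\subseteq\mathbb{F}_{q^m}$ of size $n_i$ therefore yields a valid linearized Reed-Solomon code on the right-hand side.

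Next, I would construct the $\alpha_p^{(i)}$ and verify orthogonality. Writing $\mathcal{D}^\sigma_a{}^j(b)=\sigma^j(b)N^\sigma_j(a)$ and expanding $DH^T$, the condition $DH^T=0$ becomes, for $0\le j\le k-1$ and $0\le s\le n-k-1$,
\begin{equation*}
\sum_{i=1}^{\ell}\sum_{p=1}^{n_i} \sigma^j(\beta_p^{(i)})\,N^\sigma_j(\gamma^{i-1})\cdot\sigma^{-s}(\alpha_p^{(i)})\,N^{\sigma^{-1}}_s(\sigma^{-1}(\gamma^{i-1}))=0.
\end{equation*}
The key identity simplifying the cross-automorphism interaction is $N^{\sigma^{-1}}_s(\sigma^{-1}(a))=\sigma^{-s}(N^\sigma_s(a))$, which follows by rewriting the product $\prod_{i=1}^{s}\sigma^{-i}(a)$ as $\sigma^{-s}\bigl(\prod_{i'=0}^{s-1}\sigma^{i'}(a)\bigr)$. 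Using this, the identity collapses to a sum over $i$ of terms $N^\sigma_j(\gamma^{i-1})\sigma^{-s}(N^\sigma_s(\gamma^{i-1}))\,T^{(i)}_{j,s}$, where $T^{(i)}_{j,s}=\sum_{p=1}^{n_i}\sigma^j(\beta_p^{(i)})\sigma^{-s}(\alpha_p^{(i)})$ depends only on the $i$th block. For each fixed $i$, the block is a Gabidulin code and I would invoke the explicit Gabidulin-duality construction of \cite[Th.~7]{gabidulin}, \cite[Th.~2]{new-construction} to choose $\mathbb{F}_q$-linearly independent $\alpha_p^{(i)}$ normalizing $T^{(i)}_{j,s}$ to a form depending on $i$ only through $\gamma^{i-1}$. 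The remaining sum across blocks then becomes a Moore-type sum in the conjugacy representatives $\gamma^{0},\gamma^{1},\ldots,\gamma^{\ell-1}$ that vanishes for all $j+s\le n-2$ by the generalized Moore/Vandermonde nonsingularity of \cite[Th.~2]{linearizedRS}.

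The main obstacle is exactly this last reduction: combining the block-local Gabidulin duality with the ``Reed-Solomon-like'' coupling across conjugacy classes so that the composite sum really does decouple and reduce to the single Moore identity. The norm identity above makes the automorphism-swapping symmetric, but choosing $\alpha_p^{(i)}$ so that $T^{(i)}_{j,s}$ is the ``same in each block up to a conjugacy-class multiplier'' requires care with the bases $\mathcal{B}^{(i)}$, and I expect this bookkeeping to be the bulk of the proof. Once the orthogonality is established, Proposition~\ref{prop linearized RS are MSRD} applied to $\mathcal{C}^{\sigma^{-1}}_{L,n-k}(\boldsymbol{\mathcal{A}},\sigma^{-1}(\gamma))$ gives dimension $n-k$, the dimensions match, and we conclude equality of codes and the MSRD property of the dual.
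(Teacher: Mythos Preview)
Your overall plan---exhibit $\boldsymbol{\mathcal{A}}$, verify orthogonality, match dimensions---is the right shape, and your norm identity $N^{\sigma^{-1}}_s(\sigma^{-1}(a))=\sigma^{-s}(N^\sigma_s(a))$ is precisely the manipulation the paper uses. But the specific construction you propose for the $\alpha_p^{(i)}$ does not work, and the paper takes a different route that sidesteps the obstacle you identified.

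The gap is in your two-stage decoupling. Block-local Gabidulin duality for the $i$th block, whose basis $\mathcal{B}^{(i)}$ has size $n_i$, only controls the pairings $T^{(i)}_{j,s}=\sum_p\sigma^j(\beta_p^{(i)})\sigma^{-s}(\alpha_p^{(i)})$ for $j+s\le n_i-2$; for $j+s$ up to $n-2$ you have no handle on them, and in general you cannot force $T^{(i)}_{j,s}$ to depend on $i$ only through $\gamma^{i-1}$ when the bases $\mathcal{B}^{(i)}$ and even the lengths $n_i$ differ. Your second step is also backwards: the generalized Moore/Vandermonde statement in \cite{linearizedRS} is a \emph{nonsingularity} (full-rank) statement. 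It does not assert the vanishing of any particular linear combination over the conjugacy representatives $\gamma^0,\ldots,\gamma^{\ell-1}$, so it cannot make the residual cross-block sum zero.

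The paper avoids both issues by choosing $\boldsymbol{\alpha}=(\alpha_j^{(i)})$ \emph{non-constructively}: take $\boldsymbol{\alpha}$ to be any nonzero vector in the one-dimensional space $\mathcal{C}^\sigma_{L,n-1}(\boldsymbol{\mathcal{B}},\gamma)^\perp$. By definition this already gives
\[
\sum_{i=1}^\ell\sum_{j=1}^{n_i}\alpha_j^{(i)}\,\mathcal{D}_{\gamma^{i-1}}^{\,l}(\beta_j^{(i)})=0,\qquad l=0,1,\ldots,n-2,
\]
so there is nothing to decouple. Applying $\sigma^{-t}$ and using your norm identity turns the $l$th relation into orthogonality of the $t$th row of the candidate parity-check matrix with the $(l-t)$th row of $D$; letting $l-t$ run over $0,\ldots,k-1$ gives the full inclusion $\mathcal{C}^{\sigma^{-1}}_{L,n-k}(\boldsymbol{\mathcal{A}},\sigma^{-1}(\gamma))\subseteq\mathcal{C}^\sigma_{L,k}(\boldsymbol{\mathcal{B}},\gamma)^\perp$. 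The $\mathbb{F}_q$-linear independence of each $\mathcal{A}^{(i)}$ is not engineered in advance but proved afterwards by a one-line contradiction with Proposition~\ref{prop linearized RS are MSRD}: a nontrivial $\mathbb{F}_q$-relation among the $\alpha_j^{(i)}$ would yield a nonzero $\boldsymbol{\lambda}\in\mathbb{F}_q^n$ of sum-rank weight $1$ with $\boldsymbol{\alpha}\boldsymbol{\lambda}^T=0$, hence $\boldsymbol{\lambda}\in\mathcal{C}^\sigma_{L,n-1}(\boldsymbol{\mathcal{B}},\gamma)$, contradicting $\dd_{SR}=2$. Thus the ``bookkeeping'' you anticipated is avoided entirely: the single dual vector already encodes the cross-block coupling, and the $\sigma^{-t}$ shift manufactures the remaining rows.
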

\begin{proof}
First $
\mathcal{C}^\sigma_{L,n-1}(\boldsymbol{\mathcal{B}},\gamma)^\perp $ is
generated by one vector $ \boldsymbol{\alpha} =
(\boldsymbol{\alpha}^{(1)}, \boldsymbol{\alpha}^{(2)}, \ldots,
\boldsymbol{\alpha}^{(\ell)}) \in \mathbb{F}_{q^m}^n $, where $
\boldsymbol{\alpha}^{(i)} \in \mathbb{F}_{q^m}^{n_i} $, for $ i = 1,2,
\ldots, \ell $. Assume that there exists $ i $ such that $
\alpha_1^{(i)}, \alpha_2^{(i)}, \ldots, \alpha_{n_i}^{(i)} $ are
linearly dependent over $ \mathbb{F}_q $. Then there exists a non-zero $
\boldsymbol{\lambda} \in \mathbb{F}_q^n $ with sum-rank weight equal to
$ 1 $, such that $ \boldsymbol{\alpha} \boldsymbol{\lambda}^T = 0 $.
This implies that $ \boldsymbol{\lambda} \in
\mathcal{C}^\sigma_{L,n-1}(\boldsymbol{\mathcal{B}},\gamma) $, but $
\dd_{SR}(\mathcal{C}^\sigma_{L,n-1}(\boldsymbol{\mathcal{B}},\gamma)) =
2 $, which is a contradiction. Thus $ \alpha_1^{(i)}, \alpha_2^{(i)},
\ldots, \alpha_{n_i}^{(i)} $ are linearly independent over $
\mathbb{F}_q $, for $ i = 1,2, \ldots, \ell $. Now by definition, and denoting $ \mathcal{D}_{\gamma^{i-1}} = \mathcal{D}_{\gamma^{i-1}}^\sigma $, we have
that $$ \sum_{i=1}^\ell \sum_{j=1}^{n_i} \alpha_j^{(i)}
\mathcal{D}_{\gamma^{i-1}}^l(\beta_j^{(i)}) = 0, $$ for $ l = 0,1,2, \ldots, n-2 $.
Take $ t = 0,1,2,\ldots, n-k-1 $. Applying the automorphism $
\sigma^{-t} $, we see that
\begin{align*}
0 & = \sum_{i=1}^\ell \sum_{j=1}^{n_i}
\sigma^{-t}(\alpha_j^{(i)}) \sigma^{-t}(
\mathcal{D}_{\gamma^{i-1}}^l(\beta_j^{(i)}))\\
& = \sum_{i=1}^\ell \sum_{j=1}^{n_i} \sigma^{-t}(\alpha_j^{(i)})
\sigma^{-t}(N^\sigma_t(\gamma^{i-1}))
\mathcal{D}_{\gamma^{i-1}}^{l-t}(\beta_j^{(i)})\\
& = \sum_{i=1}^\ell \sum_{j=1}^{n_i} \sigma^{-t}(\alpha_j^{(i)})
N^{\sigma^{-1}}_t(\sigma^{-1}(\gamma)^{i-1})
\mathcal{D}_{\gamma^{i-1}}^{l-t}(\beta_j^{(i)}) \\
& = \sum_{i=1}^\ell
\sum_{j=1}^{n_i} (\mathcal{D}^{\sigma^{-1}}_{\sigma^{-1}(\gamma)^{i-1}})^t (\alpha_j^{(i)})
\mathcal{D}_{\gamma^{i-1}}^{l-t}(\beta_j^{(i)}),
\end{align*}
for $ i =
1,2, \ldots, \ell $. By considering $ t = 0,1,2,\ldots, n-k-1 $ and $ l
= 0,1,2,\ldots, n-2 $, we may consider the number $ l-t $ to run from $
0 $ to $ k-1 $. In conclusion, we have proven that $$
\mathcal{C}^{\sigma^{-1}}_{L,n-k}(\boldsymbol{\mathcal{A}},\sigma^{-1}(\gamma))
\subseteq
\mathcal{C}^\sigma_{L,k}(\boldsymbol{\mathcal{B}},\gamma)^\perp, $$ and
by computing dimensions, equality holds. Finally, note that $ \sigma^{-1}(\gamma) $ is a primitive element since $ \gamma $ is a primitive element and $ \sigma $ is a field automorphism.
\end{proof}

Observe that the linearly independent sets $ \boldsymbol{\mathcal{A}} $ reduce in the Gabidulin case ($ \ell = 1 $) to the single linearly independent set denoted by $ \{ \lambda_1, \lambda_2, \ldots, \lambda_n \} $ in \cite[Eq.~(24)]{gabidulin}. The sets $ \boldsymbol{\mathcal{A}} $ also reduce in the Reed-Solomon case ($ n_1 = n_2 = \ldots = n_\ell = 1 $) to the non-zero column multipliers $ a_1^\prime, a_2^\prime, \ldots, a_\ell^\prime \in \mathbb{F}_q^* $ ($ \ell $ linearly independent sets in $ \mathbb{F}_q $ over $ \mathbb{F}_q $) of the dual generalized Reed-Solomon codes in \cite[Th.~1]{delsarte}.

\subsection{Optimal Coding Schemes for Coherent Communication} \label{subsec optimal for coherent}

To claim the optimality of coding schemes built from linearized Reed-Solomon codes, we will extend the upper bound on the entropy of the secret message from \cite[Th.~12]{silva-universal} to the multishot case.

\begin{theorem} \label{th upper bound general coding schemes}
Let $ t \geq 0 $, $ \rho \geq 0 $ and $ \mu \geq 0 $ be integers such that $ 2t + \rho + \mu < n $. For given random distributions (not necessarily uniform), if the coding scheme $ F : \mathcal{S} \longrightarrow \mathcal{X} $ is $ t $-error and $ \rho $-erasure-correcting for coherent communication, and secure under $ \mu $ observations, then it holds that
$$ H(S) \leq n - 2t - \rho - \mu, $$
where entropy is taken with logarithms with base $ q^m $. 
\end{theorem}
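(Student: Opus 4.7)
The plan is to extend the converse argument of \cite[Th.~12]{silva-universal} from the single-shot rank-metric case to the multishot sum-rank case, combining the combinatorial characterization of error-and-erasure correction in Theorem~\ref{theorem correction capability} with the information-theoretic security condition via an entropy chain.

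First, by Theorem~\ref{theorem correction capability}, the hypothesis that $F$ is $t$-error and $\rho$-erasure-correcting is equivalent to $\dd_{SR}(\mathrm{Supp}(F)) \geq 2t + \rho + 1$, which is in turn equivalent to $F$ being $0$-error and $(2t+\rho)$-erasure-correcting. Hence, for every block-diagonal $A = \diag(A_1, \ldots, A_\ell) \in \mathbb{F}_q^{N \times n}$ with $\sum_i \Rk(A_i) \geq n - 2t - \rho$, there is a decoder recovering $S$ from $Y := XA^T$, giving $H(S \mid Y) = 0$ (all logarithms base $q^m$). I will fix such an $A$ with equality $\sum_i \Rk(A_i) = n - 2t - \rho$.

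Next, since $\mu < n - 2t - \rho$ by hypothesis, I can choose a block-diagonal $B = \diag(B_1, \ldots, B_\ell)$ with $\sum_i \Rk(B_i) = \mu$ and $\Row(B_i) \subseteq \Row(A_i)$ for each $i$, by distributing a budget of $\mu$ ranks over the blocks. Writing $B = CA$ for a block-diagonal $C$ makes $W := XB^T = YC^T$ a deterministic function of $Y$. Combining $I(S;W) = 0$ from security, $H(S \mid Y,W) = H(S \mid Y) = 0$ from error correction, and $H(W \mid Y) = 0$ yields the entropy chain
\begin{equation*}
H(S) = H(S \mid W) \leq H(Y \mid W) = H(Y) - H(W) \leq (n - 2t - \rho) - H(W),
\end{equation*}
since $H(Y) \leq n - 2t - \rho$ by the $\mathbb{F}_{q^m}$-dimension of the image of $X \mapsto XA^T$.

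The main and most delicate step is to show that $B$ can be chosen so that $H(W) = \mu$, closing the bound $H(S) \leq n - 2t - \rho - \mu$. My plan is a combinatorial counting argument exploiting security: for any admissible $B$, security forces $\mathrm{supp}(W \mid S=s) = \mathrm{supp}(W)$ for all $s \in \mathcal{S}$, and since $W$ is a function of $Y$, each image $\{XA^T : X \in \mathcal{X}_s\}$ surjects onto $\mathrm{supp}(W)$; by error correction these images are pairwise disjoint, so $|\mathrm{supp}(S)| \cdot |\mathrm{supp}(W)| \leq q^{m(n-2t-\rho)}$. A $B$ with $|\mathrm{supp}(W)| \geq q^{m\mu}$ is then obtained by selecting, within each block, the row directions along which the marginal distribution of $X$ retains full $\mathbb{F}_{q^m}$-randomness; if no such choice existed, the code support would lie in a proper block-diagonal $\mathbb{F}_{q^m}$-subspace and one could iterate on the reduced effective length. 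Since $H(S) \leq \log_{q^m}|\mathrm{supp}(S)|$, the combinatorial bound then gives $H(S) \leq n - 2t - \rho - \mu$.
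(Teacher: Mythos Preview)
Your setup coincides with the paper's: convert $t$ errors plus $\rho$ erasures into $2t+\rho$ pure erasures via Theorem~\ref{theorem correction capability}, then choose block-diagonal projections $A$ (rank $n-2t-\rho$) and $B$ (rank $\mu$) with $\Row(B_i)\subseteq\Row(A_i)$. The entropy chain $H(S)=H(S\mid W)\le H(Y\mid W)$ is also correct. The gap is in what comes next.

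You write $H(Y\mid W)=H(Y)-H(W)$ and then try to lower-bound $H(W)$ by $\mu$. This is not provable in general: nothing in the hypotheses forces $X$ (hence $W=XB^T$) to carry $\mu$ packets of entropy, since the distributions are arbitrary. Your combinatorial fallback, finding $B$ with $|\mathrm{supp}(XB^T)|\ge q^{m\mu}$, has the same defect (take $X$ supported on a single point, or on a one-dimensional affine set; security is then vacuous and no $B$ achieves this), and the ``iterate on reduced effective length'' sketch is not an argument. The step is both unjustified and unnecessary.

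The paper avoids this entirely. Having chosen $B$ as a submatrix of $A$, it lets $D\in\mathbb{F}_q^{(n-2t-\rho-\mu)\times n}$ collect the remaining rows of $A$, so that $XA^T$ is, up to a coordinate permutation, the pair $(XB^T,\,XD^T)$. Then
\[
H(Y\mid W)=H(XA^T\mid XB^T)=H(XD^T\mid XB^T)\le H(XD^T)\le n-2t-\rho-\mu,
\]
the last inequality being the trivial alphabet bound on $XD^T\in\mathbb{F}_{q^m}^{\,n-2t-\rho-\mu}$. No lower bound on $H(W)$ is ever invoked. Your chain is one line away from this: instead of splitting $H(Y\mid W)$ as $H(Y)-H(W)$, bound it directly by the entropy of the complementary projection $XD^T$.
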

\begin{proof}
Define $ \rho^\prime = 2t + \rho $. By Theorem~\ref{theorem correction capability}, $ F $ is $ \rho^\prime $-erasure-correcting. Decompose $ \rho^\prime = \rho_1 + \rho_2 + \cdots + \rho_\ell $ and $ \mu = \mu_1 + \mu_2 + \cdots + \mu_\ell $, with $ \rho_i, \mu_i \geq 0 $ and $ \rho_i + \mu_i \leq n_i $, for $ i = 1,2, \ldots, \ell $. Let $ A_i \in \mathbb{F}_q^{(n_i - \rho_i) \times n_i} $ and $ B_i \in \mathbb{F}_q^{\mu_i \times n_i} $ be the matrices formed by the first $ n_i - \rho_i $ and $ \mu_i $ rows of the $ n_i \times n_i $ identity matrix, respectively, for $ i = 1,2, \ldots, \ell $. Define $ A = \diag(A_1, A_2, \ldots, A_\ell) \in \mathbb{F}_q^{(n - \rho^\prime) \times n} $ and $ B = \diag(B_1, B_2, \ldots, B_\ell) \in \mathbb{F}_q^{\mu \times n} $. Observe that $ B $ is a submatrix of $ A $, and let $ D \in \mathbb{F}_q^{(n - \rho^\prime - \mu) \times n} $ be the matrix formed by those rows in $ A $ that are not in $ B $.

Now let $ X = F(S) $, as random variables. Then Items 1 and 3 in Definition~\ref{def error correction and security} allow us to write
\begin{align*}
H(S) & =  H(S\mid XB^T) \\
 & \leq  H(S, XA^T \mid XB^T) \\
 &=  H(S \mid XA^T, XB^T) + H(XA^T \mid XB^T) \\
 &=  H(XA^T \mid XB^T) \\
 &\leq  H(XD^T) \\
 &\leq  n - \rho^\prime - \mu = n - 2t - \rho - \mu,
\end{align*}
and we are done.
\end{proof}

We may now claim the optimality of coding schemes built from nested pairs of linearized Reed-Solomon codes:

\begin{theorem} \label{th optimal for coherent comm}
Fix integers $ t \geq 0$, $ \rho \geq 0 $, $ \mu \geq 0 $ with $ 2t + \rho + \mu < n $, and assume that $ 1 \leq \ell \leq q-1 $ and $ 1 \leq n_i \leq m $, for $ i = 1,2, \ldots , \ell $. Let $ k_1 = n-2t-\rho $ and $ k_2 = \mu $ and define $ \mathcal{C}_1 = $ $ \mathcal{C}^\sigma_{L,k_1}(\boldsymbol{\mathcal{B}},\gamma) $ and $ \mathcal{C}_2 = \mathcal{C}^\sigma_{L,k_2}(\boldsymbol{\mathcal{B}},\gamma) $ as in Definition~\ref{def lin RS codes}. The coset coding scheme in Definition~\ref{definition NLCP}, corresponding to $ \mathcal{C}_2 \subsetneqq \mathcal{C}_1 \subseteq \mathbb{F}_{q^m}^n $ and choosing $ S $ and $ X = F(S) $ with uniform distributions, is $ t $-error and $ \rho $-erasure-correcting for coherent communication, and is secure under $ \mu $ observations. In addition, the entropy of the secret message satisfies that
$$ H(S) = \log_{q^m} |\mathcal{S}| = k_1 - k_2 = n - 2t - \rho - \mu. $$
Hence the coding scheme is optimal according to Theorem~\ref{th upper bound general coding schemes}.
\end{theorem}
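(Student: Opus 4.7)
The plan is to verify the three claimed properties (error/erasure correction, security, and attainment of the entropy bound) by invoking the characterizations of Section~\ref{sec measures raliable and secure} together with the MSRD property of linearized Reed-Solomon codes and the duality result of Theorem~\ref{th duals of lin RS codes}. The core idea is that all bounds are tight with equality for this family, so the coset scheme saturates the Singleton-type bound in Theorem~\ref{th upper bound general coding schemes}.

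First, I would verify error and erasure correction. The support scheme of the coset construction has minimum sum-rank distance equal to the relative distance $\dd_{SR}(\mathcal{C}_1,\mathcal{C}_2)$, which is bounded below by $\dd_{SR}(\mathcal{C}_1)$. By Proposition~\ref{prop linearized RS are MSRD}, $\mathcal{C}_1$ is MSRD, so
\[
\dd_{SR}(\Col_{\Sigma}(\mathrm{Supp}(F))) \geq \dd_{SR}(\mathcal{C}_1,\mathcal{C}_2) \geq \dd_{SR}(\mathcal{C}_1) = n - k_1 + 1 = 2t + \rho + 1.
\]
Theorem~\ref{theorem correction capability} then gives $t$-error and $\rho$-erasure correction for coherent communication.

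Next, I would verify security. Since $\mathcal{C}_2 \subsetneqq \mathcal{C}_1$, duality yields $\mathcal{C}_1^\perp \subsetneqq \mathcal{C}_2^\perp$. By Theorem~\ref{th duals of lin RS codes}, both $\mathcal{C}_1^\perp$ and $\mathcal{C}_2^\perp$ are linearized Reed-Solomon codes, and in particular $\mathcal{C}_2^\perp$ is MSRD of dimension $n - k_2$, so
\[
\dd_{SR}(\mathcal{C}_2^\perp, \mathcal{C}_1^\perp) \geq \dd_{SR}(\mathcal{C}_2^\perp) = n - (n - k_2) + 1 = k_2 + 1 = \mu + 1 > \mu.
\]
Since $S$ (and hence $F(S)$) is uniform, Theorem~\ref{th sum-rank measures security} yields security under $\mu$ observations.

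Finally, the entropy claim is immediate from the uniform distribution on $\mathcal{S} = \mathbb{F}_{q^m}^{k_1-k_2}$: $H(S) = \log_{q^m}|\mathcal{S}| = k_1 - k_2 = n - 2t - \rho - \mu$, which matches the upper bound from Theorem~\ref{th upper bound general coding schemes}. There is no real obstacle here; the only subtle point is making sure both inclusions $\mathcal{C}_2 \subsetneqq \mathcal{C}_1$ and $\mathcal{C}_1^\perp \subsetneqq \mathcal{C}_2^\perp$ hold, which is guaranteed by the hypothesis $2t + \rho + \mu < n$ (equivalently $k_2 < k_1 \leq n$), making both the primal and dual scaffolds well-defined MSRD codes.
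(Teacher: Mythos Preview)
Your proof is correct and follows essentially the same approach as the paper, which simply cites Theorem~\ref{theorem correction capability}, Theorem~\ref{th sum-rank measures security}, Proposition~\ref{prop linearized RS are MSRD} and Theorem~\ref{th duals of lin RS codes}; you have just written out the details. One minor notational slip: the displayed quantity should be $\dd_{SR}(\mathrm{Supp}(F))$ rather than $\dd_{SR}(\Col_{\Sigma}(\mathrm{Supp}(F)))$, since the sum-rank metric acts on $\mathbb{F}_{q^m}^n$, not on tuples of subspaces.
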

\begin{proof}
This follows by combining Theorem~\ref{theorem correction capability}, Theorem~\ref{th sum-rank measures security}, Proposition~\ref{prop linearized RS are MSRD} and Theorem~\ref{th duals of lin RS codes}.
\end{proof}

\begin{remark}
In a similar way to the proof of \cite[Th.~10]{silva-universal}, we may show that if a coding scheme has secret size achieving the upper bound in Theorem~\ref{th upper bound general coding schemes}, then it must hold that $ m \geq n_i $, for $ i = 1,2, \ldots, \ell $. Thus, the coding schemes in the previous theorem are defined over the largest possible range for the packet length $ m $, which is the same as in the $ 1 $-shot case. 

However, we do not yet know if the condition $ \ell < q $ can be relaxed. As in the MDS case ($ n_1 = n_2 = \ldots = n_\ell = 1 $), we conjecture that $ \ell $ must be $ \mathcal{O}(q) $ (for fixed $ m $ and $ n_i $). We leave this bound as open problem. See also Section~\ref{sec conclusion}.
\end{remark}

\subsection{Nearly Optimal Coding schemes for Non-coherent Communication} \label{subsec near optimal non-coherent}

In this subsection, we adapt the \textit{lifting} construction from \cite[Def.~3]{error-control} to linearized Reed-Solomon codes. In contrast with the coherent case, the information rate of such construction is no longer optimal. However, we will show that it is close to optimal in reasonable scenarios in practice, as shown in \cite[Sec.~IV]{error-control} (see also \cite[Sec.~VII]{silva-universal}). We will make a few simplifying assumptions. 

First, in practical situations it is desirable to encode separately (in different layers) for secrecy and reliability, which can be done using nested coset coding schemes as in Definition~\ref{definition NLCP} (see \cite[Subsec.~VII-B]{silva-universal}). Since linearized Reed-Solomon codes give optimal security in view of Theorems~\ref{th upper bound general coding schemes} and \ref{th optimal for coherent comm}, we will only consider deterministic coding schemes and error and erasure correction. Moreover, the lifting construction will only add packet headers that contain no information about the secret message, hence they do not affect the overall security performance \cite[Subsec.~VII-E]{silva-universal}.

Second, by the fact that only column spaces are of importance for reliability in the non-coherent case (Theorem~\ref{th correction capability injection}), we will consider sum-subspace codes \cite[Subsec.~II-B]{multishot2009}. Since a codeword can inject up to $ n_i $ packets in the $ i $th shot, we will restrict our study to sum-constant-dimension codes, which are simply subsets of the Cartesian product
$$ \mathcal{P}(\mathbb{F}_q^\mathbf{M}, \mathbf{n}) = \mathcal{P}(\mathbb{F}_q^{M_1}, n_1) \times \mathcal{P}(\mathbb{F}_q^{M_2}, n_2) \times \cdots \times \mathcal{P}(\mathbb{F}_q^{M_\ell}, n_\ell), $$
where $ \mathbf{M} = (M_1, M_2, \ldots, M_\ell) $, and $ \mathcal{P}(\mathbb{F}_q^{M_i}, n_i) $ is the family of vector subspaces of $ \mathbb{F}_q^{M_i} $ of dimension $ n_i $, for $ i = 1,2, \ldots, \ell $. In view of Subsection~\ref{subsec measuring non-coherent}, we will consider the following parameters of sum-constant-dimension codes.

\begin{definition} [\textbf{Sum-constant-dimension codes}]
A sum-constant-dimension code of type $ [\mathbf{M}, \mathbf{n}, \log_q|\mathcal{C}|, d]_q $ is any subset $ \mathcal{C} \subseteq \mathcal{P}(\mathbb{F}_q^\mathbf{M}, \mathbf{n}) $ such that $ \dd_{SS}(\mathcal{C}) = 2d $. We define its rate as
$$ R = \frac{\log_q|\mathcal{C}|}{\sum_{i=1}^\ell M_i n_i}. $$
\end{definition}

For integers $ 0 \leq N \leq M $, recall that the $ q $-ary Gaussian coefficients are given by
$$ {M \brack N}_q = | \mathcal{P}(\mathbb{F}_q^M, N) | = \prod_{j=0}^{N-1} \frac{q^M - q^j}{q^N - q^j}. $$
Therefore, we have that
\begin{equation}
| \mathcal{P}(\mathbb{F}_q^\mathbf{M}, \mathbf{n}) | = \prod_{i=1}^\ell {M_i \brack n_i}_{q} .
\label{eq size of multishot grassmannian}
\end{equation}

The following sum-subspace Singleton bound is a refinement of that in \cite[Subsec.~VI-B]{multishot2009} (in the bound in \cite{multishot2009}, a single puncturing removes a whole factor $ \mathcal{P}(\mathbb{F}_q^{M_i}, n_i) $).

\begin{theorem} \label{th sum subspace singleton bound}
Let $ \mathcal{C} $ be a sum-constant-dimension code of type $ [\mathbf{M}, \mathbf{n}, \log_q|\mathcal{C}|, d]_q $. It holds that
\begin{equation}
|\mathcal{C}| \leq \min \prod_{i=1}^\ell {M_i - \delta_i \brack M_i - n_i}_{q} ,
\label{eq sum-subspace singleton bound}
\end{equation}
where the minimum is taken over numbers $ 0 \leq \delta_i \leq n_i $, for $ i = 1,2, \ldots, \ell $, such that $ \delta_1 + \delta_2 + \cdots + \delta_\ell = d-1 $.
\end{theorem}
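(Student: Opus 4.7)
The plan is to extend the standard puncturing argument for the Singleton bound of projective (constant-dimension) codes to the sum-subspace setting. Fix any admissible tuple $(\delta_1, \ldots, \delta_\ell)$ with $0 \leq \delta_i \leq n_i$ and $\sum_{i=1}^\ell \delta_i = d-1$; such a tuple exists since the trivial bound $\dd_{SS}(\mathcal{C}) \leq 2n$ forces $d \leq n$. For each shot $i$, I would fix an arbitrary ``puncturing'' subspace $V_i \subseteq \mathbb{F}_q^{M_i}$ with $\dim V_i = M_i - \delta_i$, and then define a map $\boldsymbol{\mathcal{U}} \mapsto \boldsymbol{\mathcal{U}}' = (\mathcal{U}_i')_{i=1}^\ell$ on $\mathcal{C}$, where each $\mathcal{U}_i'$ is an arbitrary-but-fixed $(n_i - \delta_i)$-dimensional subspace of $\mathcal{U}_i \cap V_i$. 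Such a choice is always possible because the dimension formula yields $\dim(\mathcal{U}_i \cap V_i) \geq n_i + (M_i - \delta_i) - M_i = n_i - \delta_i$.

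The injectivity of this map on $\mathcal{C}$ is the heart of the argument. Suppose $\boldsymbol{\mathcal{U}}' = \boldsymbol{\mathcal{V}}'$ for two codewords $\boldsymbol{\mathcal{U}}, \boldsymbol{\mathcal{V}} \in \mathcal{C}$. Then in each shot $i$, the common subspace $\mathcal{U}_i' = \mathcal{V}_i'$ is contained in both $\mathcal{U}_i$ and $\mathcal{V}_i$, so $\dim(\mathcal{U}_i \cap \mathcal{V}_i) \geq n_i - \delta_i$. Using $\dim \mathcal{U}_i = \dim \mathcal{V}_i = n_i$, the per-shot subspace distance satisfies $\dd_S(\mathcal{U}_i, \mathcal{V}_i) = 2n_i - 2\dim(\mathcal{U}_i \cap \mathcal{V}_i) \leq 2\delta_i$, and summing over $i$ gives $\dd_{SS}(\boldsymbol{\mathcal{U}}, \boldsymbol{\mathcal{V}}) \leq 2\sum_{i=1}^\ell \delta_i = 2(d-1) < 2d$. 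This contradicts $\dd_{SS}(\mathcal{C}) = 2d$ unless $\boldsymbol{\mathcal{U}} = \boldsymbol{\mathcal{V}}$, which is exactly the injectivity we wanted.

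Once injectivity is established, the image sits inside $\prod_{i=1}^\ell \mathcal{P}(V_i, n_i - \delta_i)$, whose cardinality is $\prod_{i=1}^\ell {M_i - \delta_i \brack n_i - \delta_i}_q = \prod_{i=1}^\ell {M_i - \delta_i \brack M_i - n_i}_q$ by the symmetry of $q$-binomial coefficients. Minimizing over admissible tuples $(\delta_i)$ then yields the stated bound (\ref{eq sum-subspace singleton bound}). The main (and essentially only) obstacle is the flexible distribution of the $d-1$ units of ``puncturing budget'' across shots: by allowing arbitrary distributions one obtains a family of bounds whose minimum strictly refines the coarser single-puncturing bound of \cite{multishot2009} (which corresponds to concentrating all of $d-1$ in a single coordinate block), and the case $\ell = 1$ recovers the classical Singleton-type bound for constant-dimension subspace codes as a sanity check.
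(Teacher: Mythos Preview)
Your proof is correct and follows essentially the same puncturing strategy as the paper. The only difference is presentational: the paper performs the puncturing iteratively, one hyperplane at a time, tracking that each single puncture preserves the code size and drops the minimum sum-subspace distance by at most $2$; you instead puncture by all $\delta_i$ dimensions in each block at once and establish injectivity directly via the distance inequality $\dd_{SS}(\boldsymbol{\mathcal{U}},\boldsymbol{\mathcal{V}})\le 2(d-1)$. Both routes land on the same counting bound $\prod_i {M_i-\delta_i \brack M_i-n_i}_q$, and your one-shot argument is arguably a bit cleaner.
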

\begin{proof}
Given $ \boldsymbol{\mathcal{U}} = (\mathcal{U}_1, \mathcal{U}_2, \ldots, \mathcal{U}_\ell) \in \mathcal{P}(\mathbb{F}_q^\mathbf{M}, \mathbf{n}) $ and a hyperplane $ \mathcal{W} \in \mathcal{P}(\mathbb{F}_q^{M_i}, M_i - 1) $, we define the restricted list of subspaces $ \boldsymbol{\mathcal{U}}_{i,\mathcal{W}} $ as that obtained from $ \boldsymbol{\mathcal{U}} $ by substituting $ \mathcal{U}_i $ by an $ (n_i - 1) $-dimensional subspace of $ \mathcal{U}_i \cap \mathcal{W} $, then mapped by a vector space isomorphism (depending only on $ \mathcal{W} $) to a subspace of $ \mathbb{F}_q^{M_i-1} $. 

Assuming that $ d > 2 $, the restricted code $ \mathcal{C}_{i,\mathcal{W}} = \{ \boldsymbol{\mathcal{U}}_{i,\mathcal{W}} \mid \boldsymbol{\mathcal{U}} \in \mathcal{C} \} $ is a sum-constant-dimension code of type $ [\mathbf{M} - \mathbf{e}_i, \mathbf{n} - \mathbf{e}_i, |\mathcal{C}|, d^\prime]_q $, where $ d^\prime \geq d-2 > 0 $ and $ \mathbf{e}_i $ is the $ i $th vector of the canonical basis. The proof of this claim is exactly as that of \cite[Th.~8]{errors-network} and is left to the reader.

Finally, applying such restriction operations to $ \mathcal{C} $, $ \delta_i $ times in the $ i $th block, we obtain a sum-constant-dimension code $ \mathcal{C}^\prime $ of type $ [\mathbf{M} - \boldsymbol{\delta}, \mathbf{n} - \boldsymbol{\delta}, \log_q |\mathcal{C}|, d^\prime]_q $, where $ d^\prime >0 $ and $ \boldsymbol{\delta} = (\delta_1, \delta_2, \ldots, \delta_\ell) $. The result follows now from (\ref{eq size of multishot grassmannian}) and the fact that 
$$ |\mathcal{C}| = |\mathcal{C}^\prime| \leq |\mathcal{P}(\mathbb{F}_q^{\mathbf{M} - \boldsymbol{\delta}}, \mathbf{n} - \boldsymbol{\delta})|. $$
\end{proof}

We will not simplify further the minimum in (\ref{eq sum-subspace singleton bound}), since it will not be necessary for our purposes in Theorem~\ref{th near optimal coding schemes non-coherent}.

We now recall the extension of the lifting procedure from \cite[Def.~3]{error-control} to the multishot scenario, which was first considered in \cite[Subsec.~III-D]{multishot}. In the rest of the subsection, we will assume that $ M_i = n_i + m $, for $ i = 1,2, \ldots, \ell $.

\begin{definition} [\textbf{Lifting \cite{multishot}}] \label{def lifting}
We define the lifting map $ \mathcal{I}_{\Sigma} : \mathbb{F}_{q^m}^n \longrightarrow \mathcal{P}(\mathbb{F}_q^\mathbf{M}, \mathbf{n}) $ as follows. For $ \mathbf{c} = (\mathbf{c}^{(1)}, $ $ \mathbf{c}^{(2)}, $ $ \ldots, $ $ \mathbf{c}^{(\ell)}) \in \mathbb{F}_{q^m}^n $, where $ \mathbf{c}^{(i)} \in \mathbb{F}_{q^m}^{n_i} $, for $ i = 1,2, \ldots, \ell $, we define
$$ \mathcal{I}_{\Sigma}(\mathbf{c}) = \left( \Col \left( \begin{array}{cc}
M_{\mathcal{A}}(\mathbf{c}^{(i)}) \\
\hline
I_{n_i}
\end{array} \right) \right)_{i=1}^\ell. $$
For a block code $ \mathcal{C} \subseteq \mathbb{F}_{q^m}^n $, we define its lifting as the sum-constant-dimension code $ \mathcal{I}_{\Sigma}(\mathcal{C}) \subseteq \mathcal{P}(\mathbb{F}_q^\mathbf{M}, \mathbf{n}) $.
\end{definition}

We now observe that the sum-subspace distance between lifted codewords coincides with twice the sum-rank distance of the codewords themselves. This is a straightforward extension of \cite[Prop.~4]{error-control} and is left to the reader.

\begin{proposition} \label{prop relation subspace and sum-rank distance}
Given $ \mathbf{c}, \mathbf{d} \in \mathbb{F}_{q^m}^n $ and a block code $ \mathcal{C} \subseteq \mathbb{F}_{q^m}^n $, it holds that
\begin{equation*}
\begin{split}
 \dd_{SS}(\mathcal{I}_{\Sigma}(\mathbf{c}), \mathcal{I}_{\Sigma}(\mathbf{d})) & = 2 \dd_{SR}(\mathbf{c}, \mathbf{d}), \\
 \dd_{SS}(\mathcal{I}_{\Sigma}(\mathcal{C})) & = 2 \dd_{SR}(\mathcal{C}).
\end{split}
\end{equation*}
\end{proposition}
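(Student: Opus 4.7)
The plan is to reduce everything to the single-shot case and then sum over the $\ell$ shots. Observe that both distances in play decompose additively: by Definition~\ref{def sum column space} and the definition of $\dd_{SS}$, we have $\dd_{SS}(\mathcal{I}_\Sigma(\mathbf{c}), \mathcal{I}_\Sigma(\mathbf{d})) = \sum_{i=1}^{\ell} \dd_S(\mathcal{I}(\mathbf{c}^{(i)}), \mathcal{I}(\mathbf{d}^{(i)}))$, where $\mathcal{I}$ denotes the single-shot lifting map, and $\dd_{SR}(\mathbf{c}, \mathbf{d}) = \sum_{i=1}^{\ell} \Rk(M_\mathcal{A}(\mathbf{c}^{(i)} - \mathbf{d}^{(i)}))$. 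Hence it suffices to establish the shot-by-shot identity $\dd_S(\mathcal{I}(\mathbf{c}^{(i)}), \mathcal{I}(\mathbf{d}^{(i)})) = 2\Rk(M_\mathcal{A}(\mathbf{c}^{(i)} - \mathbf{d}^{(i)}))$, which is exactly the content of \cite[Prop.~4]{error-control}.

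To prove this single-shot identity directly, first note that $\dim(\mathcal{I}(\mathbf{c}^{(i)})) = \dim(\mathcal{I}(\mathbf{d}^{(i)})) = n_i$, because the identity block $I_{n_i}$ appearing in the lower part of the generating matrix guarantees that its columns are linearly independent. Next, I would compute the dimension of the sum $\mathcal{I}(\mathbf{c}^{(i)}) + \mathcal{I}(\mathbf{d}^{(i)})$ as the rank of the block matrix
\[
\begin{pmatrix} M_\mathcal{A}(\mathbf{c}^{(i)}) & M_\mathcal{A}(\mathbf{d}^{(i)}) \\ I_{n_i} & I_{n_i} \end{pmatrix}.
\]
Subtracting the first $n_i$ columns from the last $n_i$ columns (an elementary column operation that does not alter the rank) transforms this into a block matrix whose lower-right block is zero and whose upper-right block is $M_\mathcal{A}(\mathbf{d}^{(i)} - \mathbf{c}^{(i)}) = M_\mathcal{A}(\mathbf{d}^{(i)}) - M_\mathcal{A}(\mathbf{c}^{(i)})$ (by $\mathbb{F}_q$-linearity of $M_\mathcal{A}$). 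The left $n_i$ columns contribute rank $n_i$ (because of the identity block), and these are independent of the remaining columns, which contribute rank $\Rk(M_\mathcal{A}(\mathbf{d}^{(i)} - \mathbf{c}^{(i)}))$. Hence the total dimension of the sum is $n_i + \Rk(M_\mathcal{A}(\mathbf{d}^{(i)} - \mathbf{c}^{(i)}))$.

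Combining these with the identity $\dd_S(\mathcal{U}, \mathcal{V}) = 2\dim(\mathcal{U} + \mathcal{V}) - \dim(\mathcal{U}) - \dim(\mathcal{V})$ (valid whenever $\dim(\mathcal{U}) = \dim(\mathcal{V})$) yields $\dd_S(\mathcal{I}(\mathbf{c}^{(i)}), \mathcal{I}(\mathbf{d}^{(i)})) = 2\Rk(M_\mathcal{A}(\mathbf{c}^{(i)} - \mathbf{d}^{(i)}))$. Summing over $i$ gives the first equality. The second equality follows immediately: since $\mathcal{I}_\Sigma$ is injective (again because of the identity blocks in the lift), distinct codewords in $\mathcal{C}$ map to distinct sum-subspaces in $\mathcal{I}_\Sigma(\mathcal{C})$, and taking the minimum of $\dd_{SS}(\mathcal{I}_\Sigma(\mathbf{c}), \mathcal{I}_\Sigma(\mathbf{d}))$ over pairs $\mathbf{c} \neq \mathbf{d}$ in $\mathcal{C}$ yields twice the minimum of $\dd_{SR}(\mathbf{c}, \mathbf{d})$.

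There is no substantive obstacle; the only bookkeeping concern is to verify that the column operation above preserves rank and that the resulting block-triangular structure forces the ranks to add, both of which are standard linear algebra arguments inherited from the single-shot proof in \cite{error-control}.
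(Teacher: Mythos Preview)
Your proof is correct and is precisely the straightforward extension of \cite[Prop.~4]{error-control} that the paper has in mind; the paper itself does not give a proof but simply remarks that the result follows immediately from the single-shot case and leaves the details to the reader. Your shot-by-shot decomposition and block-matrix rank computation are exactly the expected argument.
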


Equipped with all these tools, we may finally claim the near optimality of lifted linearized Reed-Solomon codes. We follow the lines of \cite[Prop.~5]{error-control}. For simplicity in the formulas, we assume that the lengths $ n_i $ are all equal.

\begin{theorem} \label{th near optimal coding schemes non-coherent}
Assume that $ n^\prime = n_1 = n_2 = \ldots = n_\ell $ and recall that $ M_1 = M_2 = \ldots = M_\ell = n^\prime + m $. Assume also that $ 1 \leq \ell \leq q-1 $ and $ 1 \leq n^\prime \leq m $. Let $ 1 \leq k \leq n $ and define $ d = n-k+1 $. 

Denote by $ R_1 $ the rate of the lifted linearized Reed-Solomon code $ \mathcal{I}_{\Sigma}(\mathcal{C}^\sigma_{L,k}(\boldsymbol{\mathcal{B}},\gamma)) $ (Definitions~\ref{def lin RS codes} and \ref{def lifting}) and by $ R_2 $, the maximum rate of a sum-constant-dimension code of type $ [\mathbf{M}, \mathbf{n}, N, d]_q $, running over all possible $ N \in \mathbb{R} $. It holds that
$$ \frac{R_2 - R_1}{R_2} < \frac{\ell}{k} \cdot \frac{2}{m \log_2(q)} . $$
\end{theorem}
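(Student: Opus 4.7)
The plan is to compute $R_1$ exactly, apply the Singleton-type bound of Theorem~\ref{th sum subspace singleton bound} together with a standard estimate on the $q$-ary Gaussian coefficient to control $R_2$, and then compare. No fine-tuning of parameters should be required; the estimate will hold uniformly in the allowed choices of $\delta_i$, so the proof will essentially be bookkeeping around one sharp inequality.

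First I would verify that $\mathcal{I}_{\Sigma}$ is injective on $\mathcal{C}^\sigma_{L,k}(\boldsymbol{\mathcal{B}},\gamma)$: each block of the lifted codeword contains the identity block $I_{n^\prime}$, so its column space uniquely determines $M_{\mathcal{A}}(\mathbf{c}^{(i)})$, hence $\mathbf{c}^{(i)}$ itself. Combined with Propositions~\ref{prop linearized RS are MSRD} and \ref{prop relation subspace and sum-rank distance}, this makes $\mathcal{I}_{\Sigma}(\mathcal{C}^\sigma_{L,k}(\boldsymbol{\mathcal{B}},\gamma))$ a sum-constant-dimension code of type $[\mathbf{M},\mathbf{n},mk,d]_q$. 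Since each $M_i = m+n^\prime$ and each $n_i = n^\prime$, the denominator in the rate formula equals $\ell n^\prime (m+n^\prime)$, giving
\[
R_1 \;=\; \frac{mk}{\ell\, n^\prime (m+n^\prime)}.
\]

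For $R_2$, I would pick any feasible integer assignment $0 \leq \delta_i \leq n^\prime$ with $\sum_{i=1}^\ell \delta_i = d-1 = n-k$, which exists since $n-k < n = \ell n^\prime$. Invoking Theorem~\ref{th sum subspace singleton bound} for this choice and applying, block by block, the classical estimate ${M \brack N}_q < 4\, q^{N(M-N)}$ (which holds for all $q\geq 2$ via $\prod_{i\geq 1}(1-q^{-i})^{-1} < 4$), together with $\sum_{i=1}^\ell(n^\prime - \delta_i) = k$, yields
\[
|\mathcal{C}| \;<\; 4^\ell\, q^{mk}
\]
for every sum-constant-dimension code $\mathcal{C}$ of type $[\mathbf{M},\mathbf{n},N,d]_q$. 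Taking $\log_q$, using $\log_q 4^\ell = 2\ell/\log_2(q)$, and dividing by $\ell n^\prime(m+n^\prime)$ then produces
\[
R_2 \;\leq\; \frac{mk + 2\ell/\log_2(q)}{\ell\, n^\prime (m+n^\prime)}.
\]

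The conclusion follows from the elementary inequality $(R_2-R_1)/R_2 \leq (R_2-R_1)/R_1$, after which the common denominator $\ell n^\prime (m+n^\prime)$ cancels and we obtain exactly $(\ell/k)\cdot 2/(m\log_2 q)$. The main thing to get right is the Gaussian coefficient estimate with the correct constant: any multiplicative slack worse than $\mathrm{const}^{\ell}$ would destroy the clean form of the bound, so one must use an estimate of the form ${M \brack N}_q < C\, q^{N(M-N)}$ with $C$ independent of $\ell$ (here $C=4$), rather than a polynomial-in-$N$ over-estimate. Everything else is arithmetic.
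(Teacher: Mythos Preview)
Your proposal is correct and follows essentially the same route as the paper: compute $R_1$ exactly, bound $R_2$ via the Singleton-type inequality of Theorem~\ref{th sum subspace singleton bound} together with the estimate ${M\brack N}_q < 4\,q^{N(M-N)}$ (this is exactly \cite[Lemma~4]{errors-network}, which the paper also invokes), and finish with $(R_2-R_1)/R_2 \le (R_2-R_1)/R_1$. Two cosmetic points: your displayed bound on $R_2$ should be strict (since $|\mathcal{C}|<4^\ell q^{mk}$ is strict), which is what ultimately gives the strict final inequality; and your explicit check that $\mathcal{I}_\Sigma$ is injective is a nice addition the paper leaves implicit.
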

\begin{proof}
Let $ \mathcal{C} $ be a sum-constant-dimension code of type $ [\mathbf{M}, \mathbf{n}, \log_q|\mathcal{C}|, d]_q $ and with rate $ R_2 $. Let the notation be as in Theorem~\ref{th sum subspace singleton bound} for $ \mathcal{C} $ and for an arbitrary partition $ d-1 = \delta_1 + \delta_2 + \cdots + \delta_\ell $. By \cite[Lemma~4]{errors-network}, we have that
\begin{equation}
{M_i - \delta_i \brack M_i - n_i}_q = {n^\prime + m - \delta_i \brack m}_q < 4 q^{m(n^\prime - \delta_i)},
\label{eq bound on gaussian coefficients}
\end{equation}
for $ i = 1,2, \ldots, \ell $. Therefore, by Theorem~\ref{th sum subspace singleton bound}, we have that
$$ |\mathcal{C}| < \prod_{i=1}^\ell (4 q^{m(n^\prime - \delta_i)}) = 4^\ell q^{m(n - d + 1)} = 4^\ell q^{mk}, $$
since $ \ell n^\prime = n $, $ \delta_1 + \delta_2 + \cdots + \delta_\ell = d-1 $, and $ n-d+1 = k $ by the assumptions and Proposition~\ref{prop relation subspace and sum-rank distance}. Thus, it holds that
$$ R_2 < \frac{\log_q(4) \ell + mk}{\ell (m + n^\prime) n^\prime}. $$
On the other hand, we have that
$$ R_2 \geq R_1 = \frac{mk}{\ell (m + n^\prime) n^\prime}. $$
Hence $ R_2 - R_1 < \log_q(4) / ((m+n^\prime) n^\prime) $, and we conclude that 
$$ \frac{R_2 - R_1}{R_2} < \frac{\log_q(4)}{(m + n^\prime) n^\prime} \cdot \frac{\ell}{k} \cdot \frac{(m+ n^\prime) n^\prime}{m}, $$
from which the desired bound follows easily. 
\end{proof}

Observe that $ k/\ell $ is the rate of information packets injected per shot. If we inject at least one packet of information per shot, then $ \ell / k \leq 1 $, and the previous upper bound is simply $ 2 / (m \log_2 (q) ) $, which is essentially that in \cite[Prop.~5]{error-control}, and is independent of the number of shots. Moreover, as shown in the proof of \cite[Lemma~4]{errors-network}, the factor $ 4 $ in the upper bound (\ref{eq bound on gaussian coefficients}) on $ q $-ary Gaussian coefficients can be refined to 
$$ h(q) = \prod_{j = 1}^\infty \frac{1}{1 - q^{-j}} < 4, $$
which decreases quickly as $ q $ increases. Thus if $ k/\ell \geq 1 $, and following the previous proof, we may also obtain the bound
$$ \frac{R_2 - R_1}{R_2} < \frac{\log_q(h(q))}{m}, $$
which is very small even for moderate values of $ q $ and $ m $.

\section{A Welch-Berlekamp Sum-rank Decoding Algorithm for Linearized Reed-Solomon Codes} \label{sec Berlekamp-Welch}

In this section, we show how to adapt Loidreau's version
\cite{decoding-loidreau} of the Welch-Berlekamp decoding algorithm
\cite{welch-berlekamp} to a sum-rank decoding algorithm with quadratic
complexity over $ \mathbb{F}_{q^m} $ for linearized Reed-Solomon codes
(Definition~\ref{def lin RS codes}).

Since working with linearized polynomials in the operators (\ref{eq definition
linear operator}) requires keeping track of both $ a $ (the conjugacy class)
and $ b $ (basis vectors in that conjugacy class), we give the algorithm for
the skew metric \cite[Def.~9]{linearizedRS} and skew Reed-Solomon codes
\cite[Def.~7]{skew-evaluation1}, where evaluation points need not be
partitioned into classes. Using tools from \cite[Sec.~3]{linearizedRS}, we will
see in Subsection~\ref{subsec skew metrics} that both algorithms can be
translated into each other after $ \mathcal{O}(n) $ multiplications in $
\mathbb{F}_{q^m} $. 

We note that a skew-metric decoding algorithm for skew Reed-Solomon codes was
also recently given in \cite{boucher-decoding}. However, this algorithm has
cubic complexity. More importantly, it 
is not translated to the sum-rank metric, so it is not applicable for reliability and security in multishot network coding. In particular, it does not handle
erasures, non-coherent communication or wire-tapper observations. Finally, observe that the step from
cubic complexity to quadratic complexity is a bigger jump than from the state of the art
(previous to \cite{boucher-decoding}) to a sum-rank decoding algorithm of cubic complexity, as a function of the number of shots $ \ell $
and in operations over $ \mathbb{F}_2 $ (see Table \ref{table comparison}).  

We recall also that the use of skew polynomials was proposed in \cite{zhang}
for secret sharing. This corresponds to using skew Reed-Solomon codes
\cite{skew-evaluation1} as MDS codes for reliability and security. However,
without transforming these codes into MSRD linearized Reed-Solomon codes
\cite{linearizedRS}, such schemes are not suitable for multishot network
coding.

The notation throughout this section is as in Subsection~\ref{subsec lin RS
codes and duals}.

\subsection{Skew Metrics and Skew Reed-Solomon Codes} \label{subsec skew metrics}

\begin{table*}[!t]
\caption{Maximum-distance evaluation codes using skew and linearized polynomials}
\label{table skew ev codes}
\centering
\begin{tabular}{c||c|c|c|c|c}
\hline
&&&&&\\[-0.8em]
Code & Metric & Type of polynomials evaluated & Evaluation points & Lengths & Field \\[1.6pt]
\hline\hline
&&&&&\\[-0.8em]
Reed-Solomon (RS) \cite{reed-solomon} & Hamming & $ \mathbb{F}_q[x; {\rm Id}] = \mathbb{F}_q[x] $ & Pair-wise distinct & $ n = \ell $ & $ \mathbb{F}_q $ \\[1.6pt]
\cline{1-4}
&&&&&\\[-0.8em]
Generalized RS \cite{delsarte} & Hamming & $ \{ \mathbb{F}_q[\mathcal{D}_{\gamma^{i-1}}] \}_{i=1}^\ell \equiv $ Multipliers + $ \mathbb{F}_q[x] $ & Non-zero + pair-wise dist. & $ \ell \leq q $ ($ n^\prime = 1 $) & (Base: $ \mathbb{F}_q $) \\[1.6pt]
\hline\hline
&&&&&\\[-0.8em]
Skew RS \cite{skew-evaluation1} & Skew & $ \mathbb{F}_{q^m}[x ; \sigma] $ & P-independent & $ n = \ell n^\prime $ & $ \mathbb{F}_{q^m} $ \\[1.6pt]
\cline{1-4}
&&&&&\\[-0.8em]
Linearized RS \cite{linearizedRS} & Sum-rank & $ \{ \mathbb{F}_{q^m}[\mathcal{D}_{\gamma^{i-1}}] \}_{i=1}^\ell $ & Lin. indep. + conjugacy & $ \ell < q $ \& $ n^\prime \leq m $ & (Base: $ \mathbb{F}_q $)  \\[1.6pt]
\hline\hline
&&&&&\\[-0.8em]
Skew RS in $ C(1) $ & Skew in $ C(1) $ & $ \mathbb{F}_{q^m}[x ; \sigma] \approx \{ \sum_j a_j x^{q^j - 1} \mid a_j \in \mathbb{F}_{q^m} \} $ & Lin. indep. conjugants & $ n = n^\prime $ & $ \mathbb{F}_{q^m} $ \\[1.6pt]
\cline{1-4}
&&&&&\\[-0.8em]
Gabidulin \cite{gabidulin} & Rank & $ \mathbb{F}_{q^m}[\mathcal{D}_1] = \{ \sum_j a_j x^{q^j} \mid a_j \in \mathbb{F}_{q^m} \} $ & Lin. independent & $ n^\prime \leq m $ ($ \ell = 1 $) & (Base: $ \mathbb{F}_q $) \\[1.6pt]
\hline
\end{tabular}
\end{table*}

Define the skew polynomial ring $ \mathbb{F}_{q^m}[x; \sigma] $ as the vector space over $ \mathbb{F}_{q^m} $ with basis $ \{ x^i \mid i \in \mathbb{N} \} $ and with product given by the rules $ x^i x^j = x^{i + j} $ and
\begin{equation}
xa = \sigma(a) x,
\label{eq product constant and variable}
\end{equation}
for all $ a \in \mathbb{F}_{q^m} $ and all $ i,j \in \mathbb{N} $. Define the degree of a non-zero $ F = \sum_{i \in \mathbb{N}} F_i x^i \in \mathbb{F}_{q^m}[x; \sigma] $, denoted by $ \deg(F) $, as the maximum $ i \in \mathbb{N} $ such that $ F_i \neq 0 $. We also define $ \deg (0) = \infty $. 

This ring was introduced with more generality in \cite{ore}. It is non-commutative and both a left and right Euclidean domain. We will see that evaluations of linearized polynomials as in Definition~\ref{def linearized operators} correspond to arithmetic evaluations of skew polynomials, as defined in \cite{lam, lam-leroy}.

\begin{definition} [\textbf{Evaluation \cite{lam, lam-leroy}}]
Given $ F \in \mathbb{F}_{q^m}[x; \sigma] $, we define its evaluation in $ a \in \mathbb{F}_{q^m} $ as the unique $ F(a) \in \mathbb{F}_{q^m} $ such that there exists $ G \in \mathbb{F}_{q^m}[x; \sigma] $ with
$$ F = G (x-a) + F(a). $$
Given a subset $ \Omega \subseteq \mathbb{F}_{q^m} $, we denote by $ \mathbb{F}_{q^m}^\Omega $ the set of functions $ f : \Omega \longrightarrow \mathbb{F}_{q^m} $. We then define the evaluation map over $ \Omega $ as the linear map
\begin{equation}
E_\Omega : \mathbb{F}_{q^m}[x; \sigma] \longrightarrow \mathbb{F}_{q^m}^\Omega,
\label{def evaluation map}
\end{equation}
where $ f = E_\Omega  (F) \in \mathbb{F}_{q^m}^\Omega $ is given by $ f(a) = F(a) $, for all $ a \in \Omega $ and for $ F \in \mathbb{F}_{q^m}[x; \sigma] $. Again, we write $ E_\Omega^\sigma $ when there can be confusion about $ \sigma $.
\end{definition}

We will need some basic concepts regarding the zero sets of skew polynomials.

\begin{definition} [\textbf{Zeros of skew polynomials}]
Given a set $ A \subseteq \mathbb{F}_{q^m}[x;\sigma] $, we define its zero set as
$$ Z(A) = \{ a \in \mathbb{F}_{q^m} \mid F(a) = 0, \forall F \in A \}. $$
Given a subset $ \Omega \subseteq \mathbb{F}_{q^m} $, we define its associated ideal as
$$ I(\Omega) = \{ F \in \mathbb{F}_{q^m}[x;\sigma] \mid F(a) = 0, \forall a \in \Omega \}. $$
\end{definition}

Observe that $ I(\Omega) $ is a left ideal in $ \mathbb{F}_{q^m}[x;\sigma] $. Since $ \mathbb{F}_{q^m}[x;\sigma] $ is a right Euclidean domain, there exists a unique monic skew polynomial $ F_\Omega \in I(\Omega) $ of minimal degree among those in $ I(\Omega) $, which in turn generates $ I(\Omega) $ as left ideal. Such a skew polynomial is called the \textit{minimal skew polynomial} of $ \Omega $ \cite{lam, lam-leroy}. With this in mind, we may recall the concepts of \textit{P-closed sets}, \textit{P-independence} and \textit{P-bases} from \cite[Sec.~4]{algebraic-conjugacy} (see also \cite{lam}):

\begin{definition} [\textbf{P-bases \cite{lam, algebraic-conjugacy}}]
Given a subset $ \Omega \subseteq \mathbb{F}_{q^m} $, we define its P-closure as $ \overline{\Omega} = Z(I(\Omega)) = Z(F_\Omega) $, and we say that it is P-closed if $ \overline{\Omega} = \Omega $.

We say that $ a \in \mathbb{F}_{q^m} $ is P-independent from $ \Omega \subseteq \mathbb{F}_{q^m} $ if it does not belong to $ \overline{\Omega} $. A set $ \Omega \subseteq \mathbb{F}_{q^m} $ is called P-independent if every $ a \in \Omega $ is P-independent from $ \Omega \setminus \{ a \} $. 

Given a P-closed set $ \Omega \subseteq \mathbb{F}_{q^m} $, we say that $ \mathcal{B} \subseteq \Omega $ is a P-basis of $ \Omega $ if it is P-independent and $ \Omega = \overline{\mathcal{B}} $.
\end{definition}

The following two lemmas give simple and useful connection between P-bases and minimal skew polynomials.

\begin{lemma} [\textbf{\cite{lam}}] \label{lemma degree of min skew pol}
Given a finite set $ \Omega \subseteq \mathbb{F}_{q^m} $, it holds that
$$ \deg(F_\Omega) \leq |\Omega|, $$
where equality holds if, and only if, $ \Omega $ is P-independent.
\end{lemma}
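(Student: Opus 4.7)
The plan is to induct on $|\Omega|$ for both claims simultaneously, relying on two standard skew-polynomial facts from \cite{lam, lam-leroy}: (i) if $F$ vanishes on $\Omega$, then so does $GF$ for every $G \in \mathbb{F}_{q^m}[x;\sigma]$ (immediate from the evaluation definition, since $F = H(x-a)$ forces $GF = (GH)(x-a)$), and (ii) the Lam--Leroy product rule $(GF)(a) = G(\sigma(F(a))\, a\, F(a)^{-1})\, F(a)$ whenever $F(a) \neq 0$. The base case $|\Omega| = 0$ is immediate with $F_\emptyset = 1$.

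For the inductive step, I would write $\Omega = \Omega' \cup \{a\}$ with $a \notin \Omega'$ and set $b = F_{\Omega'}(a)$. If $b = 0$, then $F_{\Omega'} \in I(\Omega)$, so $F_\Omega$ right-divides $F_{\Omega'}$; comparing leading coefficients in the right-Euclidean ring $\mathbb{F}_{q^m}[x;\sigma]$ forces $F_\Omega = F_{\Omega'}$, giving $\deg F_\Omega \leq |\Omega'| < |\Omega|$. If $b \neq 0$, the product rule identifies $c = \sigma(b)\, a\, b^{-1}$ as the unique value making $((x-c)F_{\Omega'})(a) = 0$; by fact (i), $(x-c)F_{\Omega'}$ still vanishes on $\Omega'$, hence lies in $I(\Omega)$, so $F_\Omega$ right-divides it and $\deg F_\Omega \leq 1 + \deg F_{\Omega'} \leq |\Omega|$. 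This establishes the inequality.

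For the characterization of equality, one direction is immediate from the first case: if $\Omega$ is not P-independent, some $a \in \Omega$ lies in $\overline{\Omega \setminus \{a\}} = Z(F_{\Omega \setminus \{a\}})$, so $b = 0$ and $\deg F_\Omega \leq |\Omega| - 1$. Conversely, assume $\Omega$ is P-independent. Since every subset of a P-independent set is P-independent (because $\overline{\Omega'} \subseteq \overline{\Omega}$), induction gives $\deg F_{\Omega'} = |\Omega'|$; P-independence of $\Omega$ also yields $a \notin \overline{\Omega'}$, i.e., $b \neq 0$, so the second case applies. The remaining subtlety is to rule out $\deg F_\Omega = \deg F_{\Omega'}$: if this held, then $F_\Omega$ and $F_{\Omega'}$ would both be monic of the same degree with one right-dividing the other, forcing $F_\Omega = F_{\Omega'}$ and contradicting $F_{\Omega'}(a) = b \neq 0$. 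Hence $\deg F_\Omega = 1 + |\Omega'| = |\Omega|$.

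The main (minor) obstacle is producing the exact conjugate $c = \sigma(b)\, a\, b^{-1}$ via the Lam--Leroy product rule and correctly promoting ``$F_\Omega$ right-divides $G$'' to an equality of monic polynomials when the degrees match; everything else is routine bookkeeping of monic right divisors in the right-Euclidean ring $\mathbb{F}_{q^m}[x;\sigma]$.
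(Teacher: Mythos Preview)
Your argument is correct. The paper does not supply its own proof of this lemma; it merely cites it from \cite{lam}, so there is no ``paper's proof'' to compare against. Your inductive approach via the recursion $F_{\Omega' \cup \{a\}} = (x - a^b)\, F_{\Omega'}$ when $b = F_{\Omega'}(a) \neq 0$ is exactly the standard construction in \cite{lam, lam-leroy}, and both the inequality and the characterization of equality follow as you outline. One cosmetic point: your parenthetical justification that subsets of P-independent sets are P-independent (``because $\overline{\Omega'} \subseteq \overline{\Omega}$'') is not quite the needed monotonicity; what you actually use is $\overline{\Omega' \setminus \{a'\}} \subseteq \overline{\Omega \setminus \{a'\}}$, which does follow from the same closure monotonicity applied to the correct pair of sets.
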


\begin{lemma} [\textbf{\cite{lam, algebraic-conjugacy}}] \label{lemma any two basis same size}
Given a P-closed set $ \Omega \subseteq \mathbb{F}_{q^m} $, it admits a P-basis and any two of them have the same number of elements, which is 
$$ \Rk(\Omega) \stackrel{ \textrm{def} }{=} \deg(F_\Omega). $$
\end{lemma}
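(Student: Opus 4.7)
The plan is to handle existence by a standard greedy reduction argument using finiteness of $\Omega$, and then to identify the cardinality of every P-basis with $\deg(F_\Omega)$ via the preceding Lemma \ref{lemma degree of min skew pol}.

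For existence of a P-basis, I would start from $\mathcal{B}_0 = \Omega$ and iteratively remove any element that is P-dependent on the remaining ones: if $\mathcal{B}_i$ is not P-independent, pick $a \in \mathcal{B}_i$ with $a \in \overline{\mathcal{B}_i \setminus \{a\}}$ and set $\mathcal{B}_{i+1} = \mathcal{B}_i \setminus \{a\}$. The key observation is that this step preserves the P-closure, i.e.\ $\overline{\mathcal{B}_{i+1}} = \overline{\mathcal{B}_i}$. One inclusion is immediate since $\mathcal{B}_{i+1} \subseteq \mathcal{B}_i$. For the reverse, $\mathcal{B}_{i+1} \subseteq \overline{\mathcal{B}_{i+1}}$ trivially, and $a \in \overline{\mathcal{B}_{i+1}}$ by the choice of $a$; hence $\mathcal{B}_i \subseteq \overline{\mathcal{B}_{i+1}}$, and applying the closure operator (which is idempotent because P-closed sets form a Moore family via the operator $\overline{(\cdot)} = Z \circ I$) yields $\overline{\mathcal{B}_i} \subseteq \overline{\mathcal{B}_{i+1}}$. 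Finiteness of $\Omega$ guarantees termination at some P-independent $\mathcal{B}$ with $\overline{\mathcal{B}} = \overline{\Omega} = \Omega$, which is by definition a P-basis.

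The core of the argument is the invariant $F_{\mathcal{B}} = F_\Omega$ for every P-basis $\mathcal{B}$ of $\Omega$. To see this, I would first check that the left ideals agree: $I(\Omega) \subseteq I(\mathcal{B})$ is clear since $\mathcal{B} \subseteq \Omega$, and conversely, if $F \in I(\mathcal{B})$ then $\mathcal{B} \subseteq Z(\{F\})$, so P-closing both sides (and noting that $Z(\{F\})$ is already P-closed, being a zero set) gives $\Omega = \overline{\mathcal{B}} \subseteq Z(\{F\})$, i.e.\ $F \in I(\Omega)$. Since both $F_\mathcal{B}$ and $F_\Omega$ are the unique monic generators of these equal left ideals, they coincide.

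With $F_\mathcal{B} = F_\Omega$ in hand, the size claim is immediate from Lemma \ref{lemma degree of min skew pol}: because $\mathcal{B}$ is P-independent, that lemma gives $|\mathcal{B}| = \deg(F_\mathcal{B}) = \deg(F_\Omega)$. Since this holds for every P-basis, all P-bases have the common cardinality $\deg(F_\Omega)$, establishing the well-definedness of $\Rk(\Omega)$. The main subtlety I anticipate is verifying that the P-closure really is a closure operator (in particular, idempotent) in the non-commutative setting, but this follows formally from the Galois-type pairing between $Z$ and $I$ combined with the structure of $\mathbb{F}_{q^m}[x;\sigma]$ as a right Euclidean domain cited earlier, so I would just invoke \cite{lam, algebraic-conjugacy} for that step rather than redo it.
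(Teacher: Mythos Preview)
The paper does not give its own proof of this lemma; it is stated with citation to \cite{lam, algebraic-conjugacy} and used as a black box. Your argument is a correct self-contained proof in the finite-field setting of the paper: the Galois-connection formalities for $Z$ and $I$ indeed make $\overline{(\cdot)}$ a closure operator, your greedy deletion terminates because $\Omega \subseteq \mathbb{F}_{q^m}$ is finite, the identity $I(\mathcal{B}) = I(\Omega)$ (hence $F_\mathcal{B} = F_\Omega$) follows exactly as you wrote, and the final size identification is a direct application of Lemma~\ref{lemma degree of min skew pol}. There is nothing to compare against in the paper itself.
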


From now on, we will fix a P-closed set $ \Omega \subseteq \mathbb{F}_{q^m} $ with $ n = \Rk(\Omega) $. We will also denote by $ \mathbb{F}_{q^m}[x; \sigma]_n $ the vector space of skew polynomials of degree less than $ n $.

The following lemma, given more generally in \cite[Th.~8]{lam}, is the main idea behind skew weights and skew Reed-Solomon codes. 

\begin{lemma}[\textbf{Lagrange interpolation \cite{lam}}] \label{lemma lagrange interpolation}
The evaluation map (\ref{def evaluation map}) restricted to $ \mathbb{F}_{q^m}[x; \sigma]_n $, that is
$$ E_\mathcal{B} : \mathbb{F}_{q^m}[x; \sigma]_n \longrightarrow \mathbb{F}_{q^m}^\mathcal{B}, $$
is a vector space isomorphism, for any P-basis $ \mathcal{B} $ of $ \Omega $.
\end{lemma}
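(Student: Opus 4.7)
The plan is a short dimension-count plus kernel argument. Both $\mathbb{F}_{q^m}[x;\sigma]_n$ and $\mathbb{F}_{q^m}^{\mathcal{B}}$ are vector spaces over $\mathbb{F}_{q^m}$ of dimension $n$: the former has the obvious basis $\{1,x,\ldots,x^{n-1}\}$, and the latter has dimension $|\mathcal{B}|$, which equals $n=\Rk(\Omega)=\deg(F_\Omega)$ by Lemma~\ref{lemma any two basis same size}, since $\mathcal{B}$ is a P-basis of $\Omega$. So it suffices to verify that $E_\mathcal{B}$ is $\mathbb{F}_{q^m}$-linear and injective; bijectivity then follows for free.

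First I would note that $\mathbb{F}_{q^m}$-linearity of $E_\mathcal{B}$ is immediate from the defining identity $F=G(x-a)+F(a)$: if $F=F_1(x-a)+F(a)$ and $G=G_1(x-a)+G(a)$ then $F+G=(F_1+G_1)(x-a)+(F(a)+G(a))$, and multiplying on the left by $\lambda\in\mathbb{F}_{q^m}$ preserves this decomposition, so $(\lambda F+G)(a)=\lambda F(a)+G(a)$ for each fixed $a$. Hence $E_\mathcal{B}$ is a morphism of $\mathbb{F}_{q^m}$-vector spaces.

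Next I would analyze the kernel. Suppose $F\in\mathbb{F}_{q^m}[x;\sigma]_n$ lies in $\ker E_\mathcal{B}$, that is, $F(a)=0$ for every $a\in\mathcal{B}$. By definition this means $F\in I(\mathcal{B})$, which is a left ideal in $\mathbb{F}_{q^m}[x;\sigma]$ generated by the minimal skew polynomial $F_\mathcal{B}$ (using that $\mathbb{F}_{q^m}[x;\sigma]$ is a right Euclidean domain). Since $\mathcal{B}$ is P-independent, Lemma~\ref{lemma degree of min skew pol} gives $\deg(F_\mathcal{B})=|\mathcal{B}|=n$. Writing $F=G\cdot F_\mathcal{B}$ for some $G\in\mathbb{F}_{q^m}[x;\sigma]$ and using the usual additivity of degrees in a skew polynomial ring, $\deg(F)<n=\deg(F_\mathcal{B})$ forces $G=0$, hence $F=0$. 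This proves injectivity.

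Since both spaces have the same finite $\mathbb{F}_{q^m}$-dimension $n$, the injective linear map $E_\mathcal{B}$ is automatically surjective, completing the proof that $E_\mathcal{B}$ is an isomorphism. There is no real obstacle here; the only subtlety worth flagging is the noncommutative setting, where one must be careful that $I(\mathcal{B})$ is a left ideal and that $F_\mathcal{B}$ appears on the \emph{right} in the factorization $F=G\cdot F_\mathcal{B}$, so that the degree comparison is valid. All the nontrivial work is packaged into Lemmas~\ref{lemma degree of min skew pol} and \ref{lemma any two basis same size}.
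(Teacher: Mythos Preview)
Your argument is correct. The paper does not supply its own proof of this lemma; it is quoted directly from \cite[Th.~8]{lam} without argument, so there is no ``paper's proof'' to compare against beyond the citation. That said, your dimension-count-plus-kernel approach is precisely the standard way such a result is established: the key input is Lemma~\ref{lemma degree of min skew pol}, which forces $\deg(F_{\mathcal{B}})=n$ and hence makes the kernel of $E_{\mathcal{B}}$ restricted to $\mathbb{F}_{q^m}[x;\sigma]_n$ trivial. Your caution about sidedness (that $I(\mathcal{B})$ is a \emph{left} ideal and so $F=G\cdot F_{\mathcal{B}}$ with $F_{\mathcal{B}}$ on the right) is exactly the point one must get right in the noncommutative setting, and you handled it correctly. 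One minor remark: since $\mathcal{B}$ is a P-basis of $\Omega$, one has $I(\mathcal{B})=I(\overline{\mathcal{B}})=I(\Omega)$ and hence $F_{\mathcal{B}}=F_{\Omega}$, so you could equally well invoke $\deg(F_{\Omega})=\Rk(\Omega)=n$ from Lemma~\ref{lemma any two basis same size} directly; either route works.
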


The definitions of skew weights \cite[Def.~9]{linearizedRS} and skew Reed-Solomon codes \cite[Def.~7]{skew-evaluation1} can now be given as follows:

\begin{definition} [\textbf{Skew weights \cite{linearizedRS}}] \label{def skew weights}
Given $ F \in \mathbb{F}_{q^m}[x; \sigma]_n $ and $ f = E_\mathcal{B}(F) \in \mathbb{F}_{q^m}^\mathcal{B} $, for a P-basis $ \mathcal{B} $ of $ \Omega $, we define their skew weight over $ \Omega $ as
$$ \wt_\mathcal{B}(f) = \wt_\Omega(F) = n - \Rk(Z_\Omega(F)), $$
where $ Z_\Omega(F) = Z(F) \cap \Omega = Z(\{ F, F_\Omega \}) $ is the P-closed set of zeros of $ F $ in $ \Omega $. 
\end{definition}

As shown in \cite{linearizedRS}, these functions are indeed weights and define a corresponding metric, called the skew metric.

\begin{definition} [\textbf{Skew Reed-Solomon codes \cite{skew-evaluation1}}] \label{def skew RS codes}
For each $ k = 0,1,2, \ldots, n $, we define the ($ k $-dimensional) skew Reed-Solomon code over a P-basis $ \mathcal{B} $ of $ \Omega $ as the linear code
$$ \mathcal{C}_{\mathcal{B},k}^{\sigma} = E_\mathcal{B}^{\sigma}(\mathbb{F}_{q^m}[x; \sigma]_k) \subseteq \mathbb{F}_{q^m}^\mathcal{B} . $$
\end{definition}

The exact connection between skew metrics and sum-rank metrics, and between
skew Reed-Solomon codes and linearized Reed-Solomon codes was given in
\cite[Sec.~3]{linearizedRS}. We summarize it in the next theorem, where the
first claim on $ \mathcal{B} $ combines \cite[Th.~23]{lam} and
\cite[Th.~4.5]{lam-leroy}.

\begin{theorem} [\textbf{\cite{lam, lam-leroy, linearizedRS}}] \label{th connection skew and lin polynomials}
Assume that $ 1 \leq \ell \leq q-1 $ and $ 1 \leq n_i \leq m $, for $ i = 1,2, \ldots, \ell $. Fix linearly independent sets $ \mathcal{B}^{(i)} = \{ \beta_1^{(i)}, \beta_2^{(i)}, \ldots, \beta_{n_i}^{(i)} \} \subseteq \mathbb{F}_{q^m} $ over $ \mathbb{F}_q $, and a primitive element $ \gamma $ of $ \mathbb{F}_{q^m} $. With notation as in Subsection~\ref{subsec lin RS codes and duals}, define
\begin{equation}
b_j^{(i)} = \mathcal{D}_{\gamma^{i-1}}(\beta_j^{(i)})(\beta_j^{(i)})^{-1},
\label{eq transforming bases to P-bases}
\end{equation}
for $ j = 1,2, \ldots, n_i $ and $ i = 1,2,\ldots, \ell $. Then $ \mathcal{B} = \{ b_j^{(i)} \mid 1 \leq j \leq n_i, 1 \leq i \leq \ell \} $ is a P-basis of $ \Omega = \overline{\mathcal{B}} $. 

Next, if $ F = \sum_{i \in \mathbb{N}} F_i x^i \in \mathbb{F}_{q^m}[x ; \sigma] $ and $ F^{\mathcal{D}_a} = \sum_{i \in \mathbb{N}} F_i \mathcal{D}_a^i $, for some $ a \in \mathbb{F}_{q^m} $, then 
$$ F(\mathcal{D}_a(\beta) \beta^{-1}) = F^{\mathcal{D}_a}(\beta)\beta^{-1}, $$
for all $ \beta \in \mathbb{F}_{q^m}^* $. Define now the linear map $ \psi_\mathcal{B} : \mathbb{F}_{q^m}^n \longrightarrow \mathbb{F}_{q^m}^\mathcal{B} $ by $ \psi_\mathcal{B} (\mathbf{c}^{(1)}, $ $ \mathbf{c}^{(2)}, $ $ \ldots, $ $ \mathbf{c}^{(\ell)}) = f $, where 
\begin{equation}
f(b_j^{(i)}) = c_j^{(i)} (\beta_j^{(i)})^{-1},
\label{eq transforming the received word}
\end{equation}
for $ j = 1,2, \ldots, n_i $ and $ i = 1,2,\ldots, \ell $. It holds that
$$ \psi_\mathcal{B}(\mathcal{C}^\sigma_{L,k}) = \mathcal{C}^\sigma_{\mathcal{B},k}, $$
for all $ k = 0,1,\ldots,n $. For $ \mathbf{c} \in \mathbb{F}_{q^m}^n $, we also have that
$$ \wt_\mathcal{B}(\psi_\mathcal{B}(\mathbf{c})) = \wt_{SR}(\mathbf{c}). $$ 
\end{theorem}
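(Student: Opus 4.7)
The theorem decomposes into three claims: (i) the set $\mathcal{B}$ is a P-basis of $\Omega$; (ii) the pointwise identity $F(\mathcal{D}_a(\beta) \beta^{-1}) = F^{\mathcal{D}_a}(\beta) \beta^{-1}$; and (iii) the code equality and weight preservation statements for $\psi_\mathcal{B}$. I would dispatch these in turn.

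For (i), first rewrite $b_j^{(i)} = \sigma(\beta_j^{(i)})(\beta_j^{(i)})^{-1} \gamma^{i-1}$ to see that each $b_j^{(i)}$ is manifestly conjugate to $\gamma^{i-1}$ in the sense of Definition~\ref{def conjugacy}. Since $\gamma^0, \gamma^1, \ldots, \gamma^{q-2}$ represent pairwise distinct conjugacy classes (as recalled preceding Definition~\ref{def lin RS codes}) and $\ell \leq q-1$, the blocks $\mathcal{B}^{(i)}$ of $\mathcal{B}$ lie in pairwise distinct conjugacy classes. Within a single class, the P-independence of $\{b_j^{(i)}\}_{j=1}^{n_i}$ is equivalent to the $\mathbb{F}_q$-linear independence of $\{\beta_j^{(i)}\}_{j=1}^{n_i}$ by \cite[Th.~23]{lam}, which holds by hypothesis. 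I would then invoke \cite[Th.~4.5]{lam-leroy}, which states that a union of P-independent sets drawn from distinct conjugacy classes is itself P-independent, to conclude that $\mathcal{B}$ is a P-basis of $\Omega = \overline{\mathcal{B}}$.

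For (ii), by $\mathbb{F}_{q^m}$-linearity of both sides in $F$, reduce to the monomial case $F = x^i$. An easy induction on $i$ using the commutation rule \eqref{eq product constant and variable} gives $x^i = G_i(x-c) + N_i(c)$ for some $G_i$, hence $x^i(c) = N_i(c)$. On the other side, by Definition~\ref{def linearized operators}, $(x^i)^{\mathcal{D}_a}(\beta) = \mathcal{D}_a^i(\beta) = \sigma^i(\beta) N_i(a)$. Setting $c = \sigma(\beta)\beta^{-1} a$, the identity reduces to $N_i(c) \cdot \beta = \sigma^i(\beta) N_i(a)$, which follows by induction on $i$ from the recursion $N_{i+1}(c) = \sigma(N_i(c)) c$ together with the commutativity of $\mathbb{F}_{q^m}$; the intermediate $\sigma$-powers of $\beta$ telescope cleanly.

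For (iii), the code equality $\psi_\mathcal{B}(\mathcal{C}^\sigma_{L,k}) = \mathcal{C}^\sigma_{\mathcal{B},k}$ is immediate from (ii) applied to $F = x^l$ for $l = 0, 1, \ldots, k-1$: the $(l+1)$-th row of the generator matrix in \eqref{eq gen matrix of lin RS code}, scaled entrywise by $(\beta_j^{(i)})^{-1}$, equals the evaluation vector $(x^l(b_j^{(i)}))_{i,j}$, and these generate $\mathcal{C}^\sigma_{\mathcal{B},k}$. For the weight equality, let $F \in \mathbb{F}_{q^m}[x;\sigma]_n$ be the unique skew polynomial with $E_\mathcal{B}(F) = \psi_\mathcal{B}(\mathbf{c})$. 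Writing $C_i$ for the conjugacy class of $\gamma^{i-1}$ and $V_i = \mathrm{Span}_{\mathbb{F}_q}(\mathcal{B}^{(i)})$, the decomposition $\Omega = \bigsqcup_i (\Omega \cap C_i)$ and the additivity of P-rank across distinct conjugacy classes from \cite[Th.~4.5]{lam-leroy} give $\Rk(Z_\Omega(F)) = \sum_i \Rk(Z_\Omega(F) \cap C_i)$. By \cite[Th.~23]{lam}, the P-closed subsets of $C_i$ biject dimension-preservingly with $\mathbb{F}_q$-subspaces of $V_i$ via $W \mapsto \{\mathcal{D}_{\gamma^{i-1}}(\beta)\beta^{-1} \mid \beta \in W \setminus \{0\}\}$. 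Using (ii), the zeros of $F$ in $\Omega \cap C_i$ correspond exactly to $V_i \cap \ker F^{\mathcal{D}_{\gamma^{i-1}}}$; and since $F^{\mathcal{D}_{\gamma^{i-1}}}$ is $\mathbb{F}_q$-linear with $F^{\mathcal{D}_{\gamma^{i-1}}}(\beta_j^{(i)}) = c_j^{(i)}$, that intersection has $\mathbb{F}_q$-dimension $n_i - \Rk(M_\mathcal{A}(\mathbf{c}^{(i)}))$. Summing yields $\Rk(Z_\Omega(F)) = n - \wt_{SR}(\mathbf{c})$, establishing the weight identity.

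The main obstacle is the weight identity in (iii): it hinges on Lam's nontrivial bijection between P-closed subsets of a single conjugacy class and $\mathbb{F}_q$-subspaces of $\mathbb{F}_{q^m}$, together with careful bookkeeping to align the multiplicative conjugation structure with the additive $\mathbb{F}_q$-linear structure underlying $F^{\mathcal{D}_{\gamma^{i-1}}}$ and the sum-rank weight.
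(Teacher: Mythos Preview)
The paper does not supply its own proof of this theorem; it is stated as a summary of results from \cite{lam, lam-leroy, linearizedRS}, with the remark that the first claim combines \cite[Th.~23]{lam} and \cite[Th.~4.5]{lam-leroy}. Your proposal correctly reconstructs a full proof using precisely those ingredients: Lam's characterization of P-independence within a single conjugacy class via $\mathbb{F}_q$-linear independence, and the Lam--Leroy additivity of P-rank across distinct conjugacy classes. The monomial reduction and norm telescoping in (ii), the generator-matrix row identification in (iii), and the rank--nullity computation for the weight identity are all correct and align with the approach of \cite[Sec.~3]{linearizedRS}. In short, your argument is sound and follows the same route the paper's citations point to.
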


What this theorem implies is that sum-rank metrics and linearized Reed-Solomon
codes can be treated as skew metrics and skew Reed-Solomon codes. We only need
to multiply the received word (a vector in $ \mathbb{F}_{q^m}^n $)
coordinate-wise by $ n $ elements in $ \mathbb{F}_{q^m} $ as in (\ref{eq
transforming the received word}), and similarly compute the corresponding
P-basis as in (\ref{eq transforming bases to P-bases}). This requires $
\mathcal{O}(n) $ multiplications in $ \mathbb{F}_{q^m} $. Thus, decoding
algorithms
can be directly translated from one scenario to the other at the expense
of just 
$\mathcal{O}(n) $ multiplications in $ \mathbb{F}_{q^m} $. 

Theorem \ref{th connection skew and lin polynomials} also implies
that linearized Reed-Solomon codes can be seen as generalized skew Reed-Solomon codes
\cite[Def. 9]{skew-evaluation2} for a special choice of column multipliers. 

Table \ref{table skew ev codes} provides a summary of
skew and linearized Reed-Solomon codes, and how they interpolate the intermediate
cases between Reed-Solomon and Gabidulin codes. In that table, $ C(1) $ stands for 
the conjugacy class of $ 1 \in \mathbb{F}_{q^m} $ over $ \mathbb{F}_q $. The 
skew Reed-Solomon codes corresponding to Gabidulin codes are not given a name or 
reference in the table, since they are only considered as a particular example in the literature.

\subsection{Key Equations} \label{subsec key equations}

In this subsection, we will present the skew polynomial version of the key equations in the Welch-Berlekamp algorithm. 

Fix a P-closed set $ \Omega \subseteq \mathbb{F}_{q^m} $, one of its P-bases $ \mathcal{B} = \{ b_1, $ $ b_2, $ $ \ldots, b_n \} $ and $ 1 \leq k \leq n $, where $ n = |\mathcal{B}| = \Rk(\Omega) $. We would like to decode up to 
$$ t \stackrel{def}{=} \left\lfloor \frac{n-k}{2} \right\rfloor $$ 
skew errors for the skew Reed-Solomon code $ \mathcal{C} = \mathcal{C}^\sigma_{\mathcal{B},k} \subseteq \mathbb{F}_{q^m}^\mathcal{B} $ (see Definitions~\ref{def skew weights} and \ref{def skew RS codes}). Assume then that $ f \in \mathcal{C} $, $ e \in \mathbb{F}_{q^m}^\mathcal{B} $ is such that $ \wt_\mathcal{B}(e) \leq t $, and $ r = f + e $ is the received word. By Lemma~\ref{lemma lagrange interpolation}, we may instead consider 
$$ R = F + G \in \mathbb{F}_{q^m}[x;\sigma]_n, $$ 
where $ F \in \mathbb{F}_{q^m}[x;\sigma]_k $, $ G \in \mathbb{F}_{q^m}[x;\sigma]_n $ with $ \wt_\Omega(G) \leq t $, $ f = E_\mathcal{B}(F) $ and $ e = E_\mathcal{B}(G) $. Following the original idea of the Welch-Berlekamp algorithm, we want to find a non-zero skew polynomial $ L \in \mathbb{F}_{q^m}[x;\sigma] $ of degree $ \deg(L) \leq t $, such that
\begin{equation}
(LR)(b_i) = (LF)(b_i),
\label{key equation 1}
\end{equation}
for $ i = 1,2, \ldots, n $. But since we do not know $ F $ (we want to find $ F $), we look instead for skew polynomials $ L,Q \in \mathbb{F}_{q^m}[x;\sigma] $ of degrees $ \deg(L) \leq t $ and $ \deg(Q) \leq t + k - 1 $, such that
\begin{equation}
(LR)(b_i) = Q(b_i),
\label{key equation 2}
\end{equation}
for $ i = 1,2, \ldots, n $. To solve these equations, the first problem that we encounter is evaluating a product of two skew polynomials. This is solved by considering the product rule \cite[Th.~2.7]{lam-leroy}.

\begin{lemma} [\textbf{Product rule \cite{lam-leroy}}] \label{lemma product rule}
Given $ U,V \in \mathbb{F}_{q^m}[x;\sigma] $ and $ a \in \mathbb{F}_{q^m} $, let $ c = V(a) $. If $ c = 0 $, then $ (UV)(a) = 0 $, and if $ c \neq 0 $, then
$$ (UV)(a) = U(a^c) V(a), $$
where we use the notation $ a^c = \sigma(c)c^{-1}a $.
\end{lemma}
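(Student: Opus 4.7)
The plan is to reduce evaluation of $UV$ at $a$ to the right remainder upon division by $x-a$, and then track how $c$ moves past $U$ using the commutation rule (\ref{eq product constant and variable}). By definition of evaluation, we may write
$$ V = H(x-a) + c, $$
for some $H \in \mathbb{F}_{q^m}[x;\sigma]$, so that $UV = UH(x-a) + Uc$. If $c = 0$, then $UV = UH(x-a)$, and uniqueness of the right remainder on dividing by $x-a$ immediately gives $(UV)(a) = 0$. So we may assume $c \neq 0$ from now on.

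The crux is then to re-express $Uc$ modulo $(x-a)$ on the right. I would first establish the key identity
$$ (x - a^c)\, c = \sigma(c)(x - a), $$
which follows by a one-line computation using $xc = \sigma(c)x$ and the definition $a^c = \sigma(c)c^{-1}a$ (noting $\mathbb{F}_{q^m}$ is commutative, so $\sigma(c)c^{-1}a\cdot c = \sigma(c)a$). This identity is exactly what encodes the conjugacy action of Definition \ref{def conjugacy} at the level of skew polynomials.

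Next, write the division with remainder of $U$ by $x - a^c$ on the right: $U = U'(x - a^c) + U(a^c)$ for some $U' \in \mathbb{F}_{q^m}[x;\sigma]$. Multiplying on the right by $c$ and applying the key identity,
$$ Uc = U'(x - a^c)c + U(a^c) c = U'\sigma(c)(x - a) + U(a^c)\, c. $$
Substituting back,
$$ UV = \bigl(UH + U'\sigma(c)\bigr)(x-a) + U(a^c)\, c, $$
and since $U(a^c) c \in \mathbb{F}_{q^m}$ is a constant, uniqueness of the right remainder on division by $x-a$ yields $(UV)(a) = U(a^c)\, c = U(a^c)V(a)$, as claimed.

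The only subtle step is the key identity $(x - a^c)c = \sigma(c)(x-a)$, which is precisely where the non-commutative twist forces the introduction of the conjugate point $a^c$; everything else is bookkeeping via the right Euclidean structure of $\mathbb{F}_{q^m}[x;\sigma]$.
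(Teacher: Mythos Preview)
Your proof is correct. Note, however, that the paper does not actually prove this lemma: it is quoted verbatim as \cite[Th.~2.7]{lam-leroy} and stated without argument. Your approach---writing $V = H(x-a)+c$, establishing the identity $(x-a^c)c = \sigma(c)(x-a)$, and then reducing $Uc$ modulo $(x-a)$ on the right via the division $U = U'(x-a^c) + U(a^c)$---is precisely the standard proof one finds in the Lam--Leroy reference, so there is nothing to compare against within the paper itself.
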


Therefore, if $ r_i = R(b_i) $, we need to find $ L $ and $ Q $ as before, satisfying $ Q(b_i) = 0 $ for $ i $ such that $ r_i = 0 $, and
\begin{equation}
L(b_i^{r_i}) r_i = Q(b_i),
\label{key equation 3}
\end{equation}
for $ i $ such that $ r_i \neq 0 $. Observe that $ L(b_i^{r_i})r_i = L^\mathcal{D}(r_i) $ by Theorem~\ref{th connection skew and lin polynomials}, for $ \mathcal{D} = \mathcal{D}_{b_i} $, which is also defined for $ r_i = 0 $. Hence it makes sense to define ``$ L(b_i^{r_i})r_i = 0 $'' when $ r_i = 0 $, and we may consider (\ref{key equation 3}), for all $ i = 1,2, \ldots, n $.

We start by checking that (\ref{key equation 1}) can be solved (hence (\ref{key equation 3}) can also be solved). The next two propositions can also be proven by results from \cite{boucher-decoding} (see Remark \ref{remark connection with boucher}).

\begin{proposition} \label{prop key eqs can be solved}
There exists a non-zero $ L \in \mathbb{F}_{q^m}[x;\sigma] $ of degree $ \deg(L) \leq t $ satisfying (\ref{key equation 1}), for $ i = 1,2, \ldots, n $ (recall that $ \wt_\Omega(G) = \wt_\Omega(R - F) \leq t $). 
\end{proposition}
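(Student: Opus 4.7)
The plan is to translate the interpolation problem into a linear-algebra question about quotients of the skew polynomial ring $\mathcal{R} = \mathbb{F}_{q^m}[x;\sigma]$, and then finish with a rank-nullity dimension count. Setting $G = R - F$, the condition $(LR)(b_i) = (LF)(b_i)$ for $i = 1, \ldots, n$ is equivalent to $(LG)(b_i) = 0$ for all $i$, and since $\mathcal{B}$ is a P-basis of $\Omega$ one has $I(\mathcal{B}) = I(\Omega) = \mathcal{R} F_\Omega$. The question therefore becomes whether there is a nonzero $L$ with $\deg L \leq t$ in the kernel of the $\mathbb{F}_{q^m}$-linear map sending $L$ to the class of $LG$ modulo the left ideal $\mathcal{R} F_\Omega$.

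\textbf{Key steps.} First I would set $Z = Z_\Omega(G)$; by Definition~\ref{def skew weights}, $\Rk(Z) = n - \wt_\Omega(G) \geq n - t$. Since $G$ vanishes on $Z$, one has $G \in I(Z) = \mathcal{R} F_Z$, and since $Z \subseteq \Omega$ one obtains the nesting $\mathcal{R} F_\Omega \subseteq \mathcal{R} F_Z$. Consequently, for any $L \in \mathcal{R}$, the product $LG$ lies in $\mathcal{R} F_Z$, so the map
$$\phi : \mathbb{F}_{q^m}[x;\sigma]_{t+1} \longrightarrow \mathcal{R} F_Z / \mathcal{R} F_\Omega, \qquad L \longmapsto LG + \mathcal{R} F_\Omega,$$
is well-defined, and its kernel is exactly the space of candidate $L$'s. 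By right Euclidean division in $\mathcal{R}$, one has $\dim_{\mathbb{F}_{q^m}}(\mathcal{R}/\mathcal{R} F_\Omega) = \deg F_\Omega = n$ and $\dim_{\mathbb{F}_{q^m}}(\mathcal{R}/\mathcal{R} F_Z) = \deg F_Z = \Rk(Z)$, so the target of $\phi$ has dimension $n - \Rk(Z) = \wt_\Omega(G) \leq t$, which is strictly less than $\dim \mathbb{F}_{q^m}[x;\sigma]_{t+1} = t+1$. Rank-nullity then yields a nonzero $L \in \ker\phi$, finishing the argument.

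\textbf{Main obstacle.} The only nontrivial ingredients are the left-ideal identifications. One needs $(LG)(b_i) = 0$ for every $b_i$ in the P-basis $\mathcal{B}$ to be equivalent to $LG \in I(\Omega) = \mathcal{R} F_\Omega$, which uses both the characterization of $I(\Omega)$ via the minimal skew polynomial $F_\Omega$ (Lemma~\ref{lemma degree of min skew pol} with P-independence of $\mathcal{B}$) and the fact that P-closure is determined by vanishing, so a skew polynomial vanishing on $\mathcal{B}$ vanishes on $\overline{\mathcal{B}} = \Omega$. One also needs the dimension formula $\deg(F_\Omega) = \Rk(\Omega) = n$ and the observation that $\mathcal{R} F_Z$ being a left ideal absorbs left multiplication by arbitrary $L$. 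Once those facts from Subsection~\ref{subsec skew metrics} are in place, the dimension count is essentially automatic.
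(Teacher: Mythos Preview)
Your argument is correct and takes a genuinely different route from the paper's.

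The paper constructs $L$ explicitly: it invokes an external result (\cite[Prop.~14]{linearizedRS}) to produce a P-basis $\mathcal{A}$ of $\Omega$ on which $G$ has Hamming weight equal to $\wt_\Omega(G)\le t$, then takes $L$ to be the minimal skew polynomial of the conjugated set $\Delta=\{a_i^{G(a_i)}\mid G(a_i)\neq 0\}$, and finally uses the product rule (Lemma~\ref{lemma product rule}) twice together with Lagrange interpolation to transport the vanishing of $LG$ from $\mathcal{A}$ back to $\mathcal{B}$. Your approach instead reformulates ``$(LG)(b_i)=0$ for all $i$'' as $LG\in I(\Omega)=\mathcal{R}F_\Omega$, observes that $G\in\mathcal{R}F_Z$ with $Z=Z_\Omega(G)$, and then applies rank--nullity to the $\mathbb{F}_{q^m}$-linear map $\mathbb{F}_{q^m}[x;\sigma]_{t+1}\to\mathcal{R}F_Z/\mathcal{R}F_\Omega$, whose target has dimension $n-\Rk(Z)=\wt_\Omega(G)\le t<t+1$. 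This is cleaner and entirely self-contained within the framework of Subsection~\ref{subsec skew metrics}: you avoid both the external citation and the product-rule gymnastics. What the paper's proof buys in return is an explicit $L$ (namely $F_\Delta$), though the algorithm in Subsection~\ref{subsec quadratic reduction} does not actually use this explicit form, so nothing is lost by your nonconstructive argument.
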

\begin{proof}
Define $ e_i = G(b_i) $, for $ i = 1,2,\ldots, n $. By Lemma~\ref{lemma product rule}, we see that (\ref{key equation 1}) is satisfied if
\begin{equation}
L(b_i^{e_i}) = 0,
\label{eq proof for existence key eqs 1}
\end{equation}
for $ i $ such that $ e_i \neq 0 $. The result \cite[Prop.~14]{linearizedRS} says that there exists a P-basis $ \mathcal{A} = \{ a_1, a_2, \ldots, a_n \} $ of $ \Omega $ such that 
$$ \wt_H(E_\mathcal{A}(G)) = \wt_\Omega(G) \leq t, $$
where $ \wt_H $ denotes Hamming weight. Let $ d_i = G(a_i) $, for $ i = 1, 2, \ldots, n $, and let $ \Delta = \{ a_i^{d_i} \mid d_i \neq 0 \} $. Since $ |\Delta| \leq t $, then $ \deg(F_\Delta) \leq t $ by Lemma~\ref{lemma degree of min skew pol}. Thus by choosing $ L = F_\Delta $, we see that
$$ L(a_i^{d_i}) = 0, $$
whenever $ d_i \neq 0 $. By Lemma~\ref{lemma product rule}, this is equivalent to 
$$ (LG)(a_i) = 0, $$
for $ i = 1,2,\ldots, n $. In other words, $ E_\mathcal{A}(LG) = 0 $, which by Lemma~\ref{lemma lagrange interpolation} is equivalent to $ E_\mathcal{B}(LG) = 0 $. Now again by Lemma~\ref{lemma product rule}, this means that (\ref{eq proof for existence key eqs 1}) is satisfied, and we are done.
\end{proof}

The second ingredient is checking that by solving (\ref{key equation 3}), we may recover the message skew polynomial $ F $ by a right Euclidean division.

\begin{proposition} \label{prop key 3 sol gives message pol}
Assume that $ L,Q \in \mathbb{F}_{q^m}[x;\sigma] $ satisfy (\ref{key equation 3}), with $ \deg(L) \leq t $ and $ \deg(Q) \leq t + k - 1 $. Then it holds that
$$ Q = LF. $$
\end{proposition}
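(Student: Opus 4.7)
The plan is to show that the skew polynomial $T := LF - Q$ vanishes on a P-closed subset of $\Omega$ whose rank strictly exceeds $\deg(T)$, which will force $T = 0$. First, I would apply the product rule (Lemma~\ref{lemma product rule}) to recognize that the key equation (\ref{key equation 3}) is nothing but $(LR - Q)(b_i) = 0$ for $i = 1, 2, \ldots, n$: indeed the convention $L(b_i^{r_i}) r_i = 0$ when $r_i = 0$ was tailored to agree with the product rule. Since $\mathcal{B} = \{b_1, \ldots, b_n\}$ is a P-basis of $\Omega$, any skew polynomial vanishing on $\mathcal{B}$ vanishes on its P-closure $\overline{\mathcal{B}} = \Omega$; equivalently, $I(\mathcal{B}) = I(\Omega)$. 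Hence $LR - Q \in I(\Omega)$, so $(LR - Q)(c) = 0$ for every $c \in \Omega$.

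Next I would exploit the decomposition $R = F + G$. Using additivity of the evaluation map, for every $c \in \Omega$,
\[
0 = (LR - Q)(c) = T(c) + (LG)(c).
\]
When $G(c) = 0$, a second application of the product rule gives $(LG)(c) = 0$, and therefore $T(c) = 0$. Thus $T$ vanishes on $Z_\Omega(G) = Z(\{G, F_\Omega\})$, which is P-closed by construction. By the definition of the skew weight (Definition~\ref{def skew weights}) together with Lemma~\ref{lemma any two basis same size}, its associated minimal skew polynomial satisfies
\[
\deg(F_{Z_\Omega(G)}) = \Rk(Z_\Omega(G)) = n - \wt_\Omega(G) \geq n - t.
\]

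Finally, because $I(Z_\Omega(G))$ is the left ideal generated by $F_{Z_\Omega(G)}$ and $T$ lies in it, either $T = 0$ or $\deg(T) \geq n - t$. On the other hand,
\[
\deg(T) \leq \max\{\deg(LF), \deg(Q)\} \leq t + k - 1,
\]
and the standing assumption $t \leq \lfloor (n-k)/2 \rfloor$ gives $2t \leq n - k$, hence $t + k - 1 < n - t$. These two bounds are incompatible unless $T = 0$, which yields the desired identity $Q = LF$. The main subtlety I anticipate is the careful chaining of the product rule with the additivity of evaluation, so that the vanishing of $G$ at a point of $\Omega$ genuinely transfers to the vanishing of $LG$ (and, in the opening step, to interpret the purely formal key equation as an evaluation of a product of skew polynomials).
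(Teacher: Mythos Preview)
Your argument is correct, and it takes a somewhat different route from the paper's. Both proofs begin with the same key observation: rewriting the key equation as $(LR-Q)(b_i)=0$ for all $i$, extending this to all of $\Omega$ via $I(\mathcal{B})=I(\Omega)$, and then using the product rule to deduce that $T=LF-Q$ vanishes on $Z_\Omega(G)$, which has rank at least $n-t$. Where the two proofs diverge is in how they conclude $T=0$. You use the minimal-polynomial degree bound directly: since $T\in I(Z_\Omega(G))$, either $T=0$ or $\deg(T)\geq \Rk(Z_\Omega(G))\geq n-t$, and the latter contradicts $\deg(T)\leq t+k-1$. The paper instead phrases the observation as $\wt_\Omega(T)\leq t$, then invokes Proposition~\ref{prop key eqs can be solved} (applied with $LF$ and $Q$ in place of $F$ and $R$) to produce a non-zero error locator $L_0$ of degree $\leq t$ such that $L_0T$ vanishes on all of $\mathcal{B}$; since $\deg(L_0T)\leq 2t+k-1<n$, Lagrange interpolation gives $L_0T=0$, and the integral-domain property finishes.

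Your approach is arguably cleaner: it is self-contained (no dependence on Proposition~\ref{prop key eqs can be solved}) and avoids the auxiliary multiplication by $L_0$ and the final integral-domain step. The paper's route, on the other hand, is closer in spirit to the classical Welch--Berlekamp proof structure, where one explicitly builds an error locator; this makes the parallel with the Hamming and rank-metric cases more transparent. Both arguments ultimately hinge on the same inequality $2t+k-1<n$.
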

\begin{proof}
First, define $ c_i = (LF - Q)(b_i) $, for $ i = 1,2, \ldots, n $. Observe that if $ a \in \Omega $ and $ F(a) = R(a) $, then $ (LF - Q)(a) = 0 $ by hypothesis and Lemma~\ref{lemma product rule}. Hence
$$ \wt_\Omega(LF - Q) \leq \wt_\Omega(F-R) \leq t. $$ 
Therefore applying Proposition~\ref{prop key eqs can be solved} to $ LF $ and $ Q $ instead of $ F $ and $ R $, we see that there exists a non-zero $ L_0 \in \mathbb{F}_{q^m}[x;\sigma] $ of degree $ \deg(L_0) \leq t $ satisfying
$$ L_0(b_i^{c_i}) = 0, $$
for $ i $ such that $ c_i \neq 0 $. Define $ P = L_0(LF - Q) $. If $ c_i = 0 $, then by Lemma~\ref{lemma product rule}, we have that $ P(b_i) = 0 $. Now if $ c_i \neq 0 $, then by Lemma~\ref{lemma product rule}, we have that
$$ P(b_i) = L_0(b_i^{c_i}) = 0. $$
In other words, $ E_\mathcal{B}(P) = 0 $, and since $ \deg(P) < n $ (here is where we use that $ t = \lfloor \frac{n-k}{2} \rfloor $), then $ P = 0 $ by Lemma~\ref{lemma lagrange interpolation}. The result follows since $ \mathbb{F}_{q^m}[x;\sigma] $ is an integral domain.
\end{proof}

Thus we will be able to efficiently decode if we can solve (\ref{key equation 3}), since then we may just perform Euclidean division, whose complexity is of $ t(t+k-1) = \mathcal{O}(n^2) $ multiplications in $ \mathbb{F}_{q^m} $. Equations (\ref{key equation 3}) form a system of $ n $ linear equations whose unknowns are the coefficients of $ L $ and $ Q $. However, solving such a system by Gaussian elimination has complexity $ \mathcal{O}(n^3) $ over $ \mathbb{F}_{q^m} $. This is the approach in \cite{boucher-decoding}. In the next subsection, we see how to reduce it to $ \mathcal{O}(n^2) $.

\begin{remark} \label{remark connection with boucher}
Propositions \ref{prop key eqs can be solved} and \ref{prop key 3 sol gives message pol} can also be derived by certain combination of results from \cite{boucher-decoding}. However, the machinery developed in \cite{boucher-decoding} rewrites skew weights \cite[Def. 9]{linearizedRS} in terms of least common multiples instead of P-closed sets and P-independence (Subsec. \ref{subsec skew metrics}), being the latter essential to connect skew weights with sum-rank weights (see Theorem \ref{th connection skew and lin polynomials}). Thus we have preferred to keep a self-contained proof based on the machinery in Subsec. \ref{subsec skew metrics}.
\end{remark}

\subsection{Algorithm with Quadratic Complexity} \label{subsec quadratic reduction}

In this subsection, we follow the steps in \cite[Subsec.~5.2]{decoding-loidreau} to solve (\ref{key equation 3}) with overall complexity of $ \mathcal{O}(n^2) $ multiplications in $ \mathbb{F}_{q^m} $. The idea is to construct sequences of skew polynomials $ L_j, Q_j, \widetilde{L}_j, \widetilde{Q}_j \in \mathbb{F}_{q^m}[x; \sigma] $ such that
\begin{equation}
\left. \begin{array}{rl}
L_j(b_i^{r_i}) r_i - Q_j(b_i) & = 0 \\
\widetilde{L}_j(b_i^{r_i}) r_i - \widetilde{Q}_j(b_i) & = 0
\end{array} \right\rbrace
\label{eq key eqs for sequence}
\end{equation}
for $ i = 1,2, \ldots, j $ and $ j = k,k+1, \ldots, n $, and where $ \deg(L_n), $ $ \deg(\widetilde{L}_n) \leq t $ and $ \deg(Q_n), $ $ \deg(\widetilde{Q}_n) \leq t + k - 1 $. If we have constructed the $ j $th skew polynomials, we define
\begin{equation}
\begin{array}{rl}
s_j = & L_j(b_{j+1}^{r_{j+1}}) r_{j+1} - Q_j(b_{j+1}), \\
\widetilde{s}_j = & \widetilde{L}_j(b_{j+1}^{r_{j+1}}) r_{j+1} - \widetilde{Q}_j(b_{j+1}).
\end{array}
\label{eq def seq of s}
\end{equation}
Next we define
\begin{equation}
\begin{array}{rl}
L_{j+1} = & (x - b_{j+1}^{s_j}) L_j, \\
Q_{j+1} = & (x - b_{j+1}^{s_j}) Q_j,
\end{array}
\label{eq def of first sequence}
\end{equation}
if $ s_j \neq 0 $; and we also define $ L_{j+1} = L_j $ and $ Q_{j+1} = Q_j $, if $ s_j = 0 $. We also define
\begin{equation}
\begin{array}{rl}
\widetilde{L}_{j+1} = & s_j \widetilde{L}_j - \widetilde{s}_j L_j, \\
\widetilde{Q}_{j+1} = & s_j \widetilde{Q}_j - \widetilde{s}_j Q_j.
\end{array}
\label{eq def of second sequence}
\end{equation}

The important fact is that then the $ (j+1) $th skew polynomials satisfy (\ref{eq key eqs for sequence}), as we now show.

\begin{proposition}
If the $ j $th skew polynomials satisfy (\ref{eq key eqs for sequence}) for $ i = 1,2, \ldots, j $, then the $ (j+1) $th skew polynomials given in (\ref{eq def of first sequence}) and (\ref{eq def of second sequence}) satisfy (\ref{eq key eqs for sequence}) for $ i = 1,2, \ldots, j, j+1 $.
\end{proposition}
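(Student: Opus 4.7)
The plan is to rewrite the two key equations in a compact ``combined'' form and then show that the claimed update rules correspond to multiplication and linear combination of those combined polynomials, at which point the product rule (Lemma~\ref{lemma product rule}) makes the induction step essentially automatic.

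Concretely, I would set $P_j = L_j R - Q_j$ and $\widetilde P_j = \widetilde L_j R - \widetilde Q_j$, where $R\in\mathbb F_{q^m}[x;\sigma]$ is the interpolant of the received word, so that $R(b_i) = r_i$ for $i = 1,\ldots,n$. Applying Lemma~\ref{lemma product rule} to the product $L_j R$ at the point $b_i$ (and using the convention ``$L(b_i^{r_i})r_i = 0$'' when $r_i = 0$, which is consistent with $(L_j R)(b_i) = 0$ whenever $R(b_i) = 0$), the hypothesis (\ref{eq key eqs for sequence}) is exactly the statement that $P_j(b_i) = \widetilde P_j(b_i) = 0$ for $i = 1,\ldots,j$. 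Similarly the quantities $s_j$ and $\widetilde s_j$ defined in (\ref{eq def seq of s}) are nothing but $P_j(b_{j+1})$ and $\widetilde P_j(b_{j+1})$.

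Under this rewriting, the updates (\ref{eq def of first sequence}) and (\ref{eq def of second sequence}) translate to $P_{j+1} = (x - b_{j+1}^{s_j}) P_j$ when $s_j \ne 0$ (and $P_{j+1} = P_j$ otherwise), and $\widetilde P_{j+1} = s_j \widetilde P_j - \widetilde s_j P_j$. Preservation of the old zeros is then immediate: for $i \le j$, left-multiplication by $(x - b_{j+1}^{s_j})$ preserves the zero $P_j(b_i) = 0$ by Lemma~\ref{lemma product rule}, and the $\widetilde P$ relation is $\mathbb F_{q^m}$-linear, so $\widetilde P_{j+1}(b_i) = s_j\cdot 0 - \widetilde s_j\cdot 0 = 0$.

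The new zero at $b_{j+1}$ is where one has to be slightly careful. For $\widetilde P_{j+1}$ it is a pure cancellation: $\widetilde P_{j+1}(b_{j+1}) = s_j \widetilde s_j - \widetilde s_j s_j = 0$. For $P_{j+1}$, when $s_j = 0$ there is nothing to do, and when $s_j \ne 0$ one applies Lemma~\ref{lemma product rule} to $P_{j+1} = (x - b_{j+1}^{s_j}) P_j$ at $a = b_{j+1}$ with $c = P_j(b_{j+1}) = s_j \ne 0$, obtaining
\[
P_{j+1}(b_{j+1}) \;=\; \bigl(x - b_{j+1}^{s_j}\bigr)\!\bigl(b_{j+1}^{s_j}\bigr)\cdot s_j \;=\; 0,
\]
since $x - b_{j+1}^{s_j}$ obviously vanishes at $b_{j+1}^{s_j}$. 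The only real delicacy in the whole argument is the bookkeeping at $r_i = 0$, which is handled uniformly by the convention above and by the fact that the product rule already gives $(L_jR)(b_i) = 0$ whenever $R(b_i) = 0$. Translating back from $P_{j+1},\widetilde P_{j+1}$ to the original pair of equations in (\ref{eq key eqs for sequence}) then completes the induction.
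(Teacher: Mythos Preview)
Your proof is correct. The repackaging $P_j = L_jR - Q_j$ is a genuine simplification compared to the paper's argument: once you observe that the updates (\ref{eq def of first sequence}) and (\ref{eq def of second sequence}) act on $P_j$ and $\widetilde P_j$ by left multiplication by a linear skew polynomial and by an $\mathbb{F}_{q^m}$-linear combination, respectively, both the preservation of the old zeros and the creation of the new zero at $b_{j+1}$ fall out of a single application of the product rule (Lemma~\ref{lemma product rule}). The paper instead keeps $L_j$ and $Q_j$ separate and, for the new zero in the first sequence, carries out an explicit computation that ultimately relies on the additivity of the map $\lambda \mapsto F(b^\lambda)\lambda = F^{\mathcal D_b}(\lambda)$ from Theorem~\ref{th connection skew and lin polynomials}; your approach bypasses this by having the product rule do that work implicitly inside $(L_jR)(b_{j+1})$. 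What your route buys is a shorter, more uniform argument with no case-splitting beyond $s_j = 0$ versus $s_j \neq 0$; what the paper's route makes visible is the direct link to the linearized operators $\mathcal D_b$, which is conceptually relevant elsewhere in Section~\ref{sec Berlekamp-Welch} but not needed for this particular step.
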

\begin{proof}
In all cases, if $ i \leq j $, then the result follows from Lemma~\ref{lemma product rule}, since (\ref{eq key eqs for sequence}) is satisfied for the $ j $th skew polynomials. Hence we only need to check the case $ i = j+1 $. It is straightforward to check it for the skew polynomials given in (\ref{eq def of second sequence}), since evaluation is a linear map. Hence we only need to prove it for the skew polynomials given in (\ref{eq def of first sequence}), and only in the case $ s_j \neq 0 $. Denote
$$ \begin{array}{rl}
c = & L_j(b_{j+1}^{r_{j+1}}) r_{j+1}, \\
d = & Q_j(b_{j+1}).
\end{array} $$
Observe that $ s_j = c - d \neq 0 $. We have that
$$ L_{j+1}(b_{j+1}^{r_{j+1}}) r_{j+1} - Q_{j+1}(b_{j+1}) $$
$$ = (b_{j+1}^c - b_{j+1}^{s_j})c - (b_{j+1}^d - b_{j+1}^{s_j})d $$
$$ = (b_{j+1}^c c - b_{j+1}^d d) - b_{j+1}^{c-d}(c-d) = 0, $$
where the first equality follows from Lemma~\ref{lemma product rule}, and the last equality follows from the fact that the map $ \lambda \mapsto F(b^\lambda) \lambda = F^{\mathcal{D}_b}(\lambda) $ is additive, for $ b, \lambda \in \mathbb{F}_{q^m} $ (see Theorem~\ref{th connection skew and lin polynomials}).
\end{proof}

We now describe the actual steps of the algorithm, following the same scheme as in \cite[Table~1]{decoding-loidreau}. First we need to precompute the minimum skew polynomial $ F_k = F_{\{ b_1, b_2, \ldots, b_k \}} $ and the unique Lagrange interpolating skew polynomial $ G \in \mathbb{F}_{q^m}[x; \sigma]_k $ such that $ G(b_i) = r_i $, for $ i = 1,2, \ldots, k $ (Lemma~\ref{lemma lagrange interpolation}). Both can be computed with quadratic complexity by Newton interpolation (see Appendix~\ref{app newton interpolation}). 

The algorithm is as follows, where the inputs are $ k \in \mathbb{N} $, $ (b_1, b_2, \ldots, b_n) \in \mathbb{F}_{q^m}^n $ and $ (r_1, r_2, \ldots, r_n) \in \mathbb{F}_{q^m}^n $.
\begin{enumerate}
\item
Initialization and Newton interpolation:
\begin{itemize}
\item
Compute $ F_k $ and $ G $.
\item
Set $ L_k = 0 $ and $ \widetilde{L}_k = x $.
\item
Set $ Q_k = F_k $ and $ \widetilde{Q}_k = G $.
\end{itemize}
\item
Alternating step: For each $ j = k,k+1,\ldots, n-1 $, do
\begin{itemize}
\item
Compute $ s_j $ and $ \widetilde{s}_j $ as in (\ref{eq def seq of s}).
\item
Exchange the values $ L_j \longleftrightarrow \widetilde{L}_j $, $ Q_j \longleftrightarrow \widetilde{Q}_j $ and $ s_j \longleftrightarrow \widetilde{s}_j $.
\item
Compute $ L_{j+1} $, $ Q_{j+1} $, $ \widetilde{L}_{j+1} $ and $ \widetilde{Q}_{j+1} $ as in (\ref{eq def of first sequence}) and (\ref{eq def of second sequence}).
\end{itemize}
\item
Euclidean division: 
\begin{itemize}
\item
Set $ L = \widetilde{L}_n $ and $ Q = \widetilde{Q}_n $.
\item
Compute $ F $ such that $ Q = LF $ by Euclidean division.
\end{itemize}
\item
Return the coefficients of $ F $: $ (F_0, F_1, \ldots, F_{k-1}) \in \mathbb{F}_{q^m}^k $.
\end{enumerate}

As observed in \cite{decoding-loidreau}, one can save memory by assigning each update of the sequences of skew polynomials to themselves.

The degrees of $ L $ and $ Q $ are upper bounded by $ t $ and $ t+k-1 $, respectively. This can be shown exactly as in \cite[Subsec.~5.2]{decoding-loidreau}. The idea is that we increase the degree at most by $ 1 $ for one of the two sequences in every step. By exchanging both sequences in each step and taking $ n-k $ steps, at the end we have that
$$ \begin{array}{rl}
\deg(L) \leq & \lfloor \frac{n-k}{2} \rfloor, \\
\deg(Q) \leq & \lfloor \frac{n-k}{2} \rfloor + k - 1.
\end{array} $$

\subsection{Overall Complexity} \label{subsec complexity algorithm}

There are mainly three basic operations in the arithmetic of skew and linearized polynomials, namely, multiplications and additions in $ \mathbb{F}_{q^m} $, and applying $ \sigma^j(a) = a^{q^j} $, for $ a \in \mathbb{F}_{q^m} $ and $ 1 \leq j < m $. Multiplications are more expensive than additions, but we will count both. We anticipate that, in $ \mathcal{O} $ notation, the amounts of multiplications and additions in our algorithm are roughly the same. We will neglect how many times we apply $ \sigma^j $ since, when representing elements in $ \mathbb{F}_{q^m} $ as vectors over $ \mathbb{F}_q $ using a normal basis, applying $ \sigma^j $ amounts to a cyclic shift of coordinates. 

First, Newton interpolation in $ k $ P-independent points requires $ \mathcal{O} (k^2) $ multiplications and $ \mathcal{O}(k^2) $ additions (see Appendix~\ref{app newton interpolation}). This is then the complexity of Step 1.

Second, evaluating a skew polynomial of degree $ d $ by Horner's rule (i.e., Euclidean division by $ x-a $) requires $ \mathcal{O} (d) $ multiplications and $ \mathcal{O} (d) $ additions. Hence Step 2 requires $ \mathcal{O}((n-k)(t+k-1)) $ multiplications and $ \mathcal{O}((n-k)(t+k-1)) $ additions since, in each of the $ n - k $ iterations of Step 2, we compute $ 4 $ evaluations of skew polynomials of degree at most $ t+k-1 $ to compute $ s_j $ and $ \widetilde{s}_j $, plus another $ \mathcal{O}(t+k-1) $ multiplications and $ \mathcal{O}(t+k-1) $ additions to compute $ L_{j+1} $, $ Q_{j+1} $, $ \widetilde{L}_{j+1} $ and $ \widetilde{Q}_{j+1} $.

Third, Euclidean division of a skew polynomial of degree $ t+k-1 $ by another one of degree $ t $ requires $ \mathcal{O}(t(t+k-1)) $ multiplications and $ \mathcal{O}(t(t+k-1)) $ additions. This is the complexity of Step 3.

Summing all these numbers, we see that the previous algorithm has complexity of $ \mathcal{O}(t(t+k-1) + k^2) $ multiplications and $ \mathcal{O}(t(t+k-1) + k^2) $ additions in $ \mathbb{F}_{q^m} $, which can be simplified to $ \mathcal{O}(n^2) $ since $ 2t \leq n $ and $ k \leq n $.

\subsection{Translating the Algorithm into Sum-rank Decoding of Linearized Reed-Solomon Codes} \label{subsec translation into sum-rank algorithm}

To transform this algorithm into a sum-rank decoding algorithm for linearized Reed-Solomon codes, we proceed as follows. First we build the P-basis $ \mathcal{B} = \{ b_1, b_2, \ldots, b_n \} $ by (\ref{eq transforming bases to P-bases}) from the linearly independent sets $ \mathcal{B}^{(i)} = \{ \beta_1^{(i)}, $ $ \beta_2^{(i)}, $ $ \ldots, $ $ \beta_{n_i}^{(i)} \} \subseteq \mathbb{F}_{q^m} $ over $ \mathbb{F}_q $, $ i = 1,2, \ldots, \ell $, and the primitive element $ \gamma \in \mathbb{F}_{q^m}^* $. In other words, we compute
\begin{equation}
b_j^{(i)} = \sigma(\beta_j^{(i)}) (\beta_j^{(i)})^{-1} \gamma^{i-1}, 
\label{eq transforming basis to P-basis}
\end{equation}
where $ b_j^{(i)} = b_l $ and $ l = n_1 + n_2 + \cdots + n_{i-1} + j $, for $ j = 1, $ $2, $ $ \ldots, n_i $ and $ i = 1,2, \ldots, \ell $. Then we map the received word to the vector $ (r_1, r_2, \ldots, r_n) \in \mathbb{F}_{q^m}^n $ by the map $ \psi_\mathcal{B} $ in Theorem~\ref{th connection skew and lin polynomials}, or equivalently, by (\ref{eq transforming the received word}). In other words, we compute
\begin{equation}
r_j^{(i)} = y_j^{(i)} (\beta_j^{(i)})^{-1}, 
\label{eq transforming received words}
\end{equation}
where $ r_j^{(i)} = r_l $ and $ l = n_1 + n_2 + \cdots + n_{i-1} + j $, for $ j = 1, $ $2, $ $ \ldots, n_i $, and where $ \mathbf{y}^{(i)} = (y_1^{(i)}, y_2^{(i)}, \ldots, y_{n_i}^{(i)}) \in \mathbb{F}_{q^m}^{n_i} $ is the received word in the $ i $th shot, for $ i = 1,2, \ldots, \ell $. 

These two processes require $ 3n $ multiplications in $ \mathbb{F}_{q^m} $. Finally we run the algorithm in Subsection~\ref{subsec quadratic reduction}. By Theorem~\ref{th connection skew and lin polynomials}, the coefficients of $ F $ coincide with those of $ F^{\mathcal{D}_a} $, for any $ a \in \mathbb{F}_{q^m} $, and also give the message in $ \mathbb{F}_{q^m}^k $ for the corresponding linearized Reed-Solomon code.

\subsection{Including Erasures for Coherent Communication} \label{subsec decoding for coherent}

Consider that we have used the linearized Reed-Solomon code $ \mathcal{C}^\sigma_{L,k}(\boldsymbol{\mathcal{B}},\gamma) \subseteq \mathbb{F}_{q^m}^n $ as in Definition~\ref{def lin RS codes}, and we receive
$$ \mathbf{y} = \mathbf{c} A^T + \mathbf{e} \in \mathbb{F}_{q^m}^{n-\rho}, $$
where $ \wt_{SR}(\mathbf{e}) \leq t $, $ A = \diag(A_1, A_2, \ldots, A_n) \in \mathbb{F}_q^{(n-\rho) \times n} $, and $ A_i \in \mathbb{F}_q^{(n_i-\rho_i) \times n_i} $, for $ i = 1,2, \ldots, \ell $. First, we may assume that all $ A_i $ have full rank. Next compute 
$$ (\alpha_1^{(i)}, \alpha_2^{(i)}, \ldots, \alpha_{n_i-\rho_i}^{(i)}) = (\beta_1^{(i)}, \beta_2^{(i)}, \ldots, \beta_{n_i}^{(i)}) A_i^T \in \mathbb{F}_{q^m}^{n_i-\rho_i}, $$
where $ \mathcal{B}^{(i)} = \{ \beta_1^{(i)}, \beta_2^{(i)}, \ldots, \beta_{n_i}^{(i)} \} $, and define $ \mathcal{A}^{(i)} = \{ \alpha_1^{(i)}, $ $ \alpha_2^{(i)}, $ $ \ldots, $ $ \alpha_{n_i-\rho_i}^{(i)} \} $, for $ i = 1,2, \ldots, \ell $. It follows that the sets $ \mathcal{A}_i $ are linearly independent over $ \mathbb{F}_q $. Furthermore, by the linearity over $ \mathbb{F}_q $ of the operators in Definition~\ref{def linearized operators}, we have that
$$ (\mathcal{D}_{\gamma^{i-1}}^j(\alpha_1^{(i)}), \mathcal{D}_{\gamma^{i-1}}^j(\alpha_2^{(i)}), \ldots, \mathcal{D}_{\gamma^{i-1}}^j(\alpha_{n_i-\rho_i}^{(i)})) $$
$$ = (\mathcal{D}_{\gamma^{i-1}}^j(\beta_1^{(i)}), \mathcal{D}_{\gamma^{i-1}}^j(\beta_2^{(i)}), \ldots, \mathcal{D}_{\gamma^{i-1}}^j(\beta_{n_i}^{(i)})) A_i^T $$for $ j = 0,1, \ldots, k-1 $ and $ i = 1,2, \ldots, \ell $. Thus $ \mathbf{c} A^T $ corresponds to the evaluation codeword in the linearized Reed-Solomon code $ \mathcal{C}^\sigma_{L,k}(\boldsymbol{\mathcal{A}},\gamma) \subseteq \mathbb{F}_{q^m}^{n-\rho} $, where $ \boldsymbol{\mathcal{A}} = (\mathcal{A}^{(1)}, $ $ \mathcal{A}^{(2)}, $ $ \ldots, $ $ \mathcal{A}^{(\ell)}) $. Moreover, the number of sum-rank errors is at most $ t = \lfloor \frac{n-\rho-k}{2} \rfloor $, where $ n-\rho $ is the length of the new code. 

In conclusion, to recover the message we only need to compute $ \boldsymbol{\mathcal{A}} $ to find the new code $ \mathcal{C}^\sigma_{L,k}(\boldsymbol{\mathcal{A}},\gamma) $, and then run the algorithm in Subsection~\ref{subsec quadratic reduction}. Such a computation is equivalent to the multiplication of a vector in $ \mathbb{F}_{q^m}^n $ with a matrix in $ \mathbb{F}_q^{n \times (n-\rho)} $, thus about $ \mathcal{O}(n(n-\rho)) $ multiplications of an element in $ \mathbb{F}_{q^m} $ with an element in $ \mathbb{F}_q $. This complexity can be further reduced to about $ \mathcal{O}(n(n-\rho)/\ell) $ such multiplications by the block diagonal form of $ A $.

\subsection{Including Wire-tapper Observations for Coherent Communication} \label{subsec decoding for coherent with wire-tapper}

Consider a nested coset coding scheme, as in Definition~\ref{definition NLCP}, using linearized Reed-Solomon codes $ \mathcal{C}_2 \subsetneqq \mathcal{C}_1 \subseteq \mathbb{F}_{q^m}^n $. If $ k_1 = \dim(\mathcal{C}_1) $ and $ k_2 = \dim(\mathcal{C}_2) $, we may choose $ \mathcal{W} $ in Definition~\ref{definition NLCP} as the vector space generated by the last $ k_1 - k_2 $ rows of the generator matrix of $ \mathcal{C}_1 $ given as in (\ref{eq gen matrix of lin RS code}). 

The encoding is as follows. The message is $ \mathbf{x}_2 \in \mathbb{F}_{q^m}^{k_1 - k_2} $. We generate uniformly at random $ \mathbf{x}_1 \in \mathbb{F}_{q^m}^{k_2} $, and we encode $ \mathbf{x} = (\mathbf{x}_1, \mathbf{x}_2) \in \mathbb{F}_{q^m}^{k_1} $ using the generator matrix of $ \mathcal{C}_1 $ in (\ref{eq gen matrix of lin RS code}) to obtain the codeword $ \mathbf{c} \in \mathbb{F}_{q^m}^n $. 

Finally, the numbers of errors $ t $ and erasures $ \rho $ are constrained by $ 2t + \rho \leq n-k_1 $, by Theorem~\ref{th optimal for coherent comm}. Thus, we may apply the decoding algorithm in Subsection~\ref{subsec quadratic reduction} to the larger code $ \mathcal{C}_1 $ and we recover $ \mathbf{x} = (\mathbf{x}_1, \mathbf{x}_2) \in \mathbb{F}_{q^m}^{k_1} $. Thus we recover the message plus the random keys, which we may simply discard (as usual for instance in the wire-tap channel of type II \cite{ozarow} or in secret sharing \cite{shamir}).

\subsection{The Non-coherent Case} \label{subsec decoding for non-coherent}

As seen in the previous subsection, the addition of the random keys for security against a wire-tapper influences the encoding, but not the decoding. Hence we assume $ \mu = 0 $ wire-tapper observations in this subsection.

We now argue as in \cite[Sec.~4.4]{silva-thesis}. Let $ \mathcal{C} \subseteq \mathbb{F}_{q^m}^n $ be a linearized Reed-Solomon code of dimension $ k $. Assume that we transmit the codeword $ \mathbf{c} = (\mathbf{c}^{(1)}, \mathbf{c}^{(2)}, \ldots, \mathbf{c}^{(\ell)}) \in \mathcal{C} $ using the lifting process as in Definition~\ref{def lifting}. This means that the receiver obtains the matrices
$$ Y_i = \left( \begin{array}{cc}
M_{\mathcal{A}}(\mathbf{c}^{(i)}) \\
\hline
I_{n_i}
\end{array} \right) A_i^T + E_i $$
$$ = \left( \begin{array}{cc}
M_{\mathcal{A}}(\mathbf{c}^{(i)} A_i^T + \mathbf{e}^\prime_i) \\
\widehat{A}^T_i
\end{array} \right) = \left( \begin{array}{cc}
M_{\mathcal{A}}(\mathbf{y}^{(i)}) \\
\widehat{A}^T_i
\end{array} \right), $$
where $ \Rk(E_i) \leq t_i $, and $ A_i \in \mathbb{F}_q^{N_i \times n_i} $, for $ i = 1,2, \ldots, \ell $. So now the receiver knows $ \widehat{A} = \diag(\widehat{A}_1, $ $ \widehat{A}_2, $ $ \ldots, $ $ \widehat{A}_\ell) $ instead of $ A $, and we may compute the linearized Reed-Solomon code $ \mathcal{C}^\prime = \mathcal{C} \widehat{A}^T $ as in Subsection~\ref{subsec decoding for coherent}. We may also assume that $ \Rk(Y_i) = N_i $, for $ i = 1,2, \ldots, \ell $, and $ \widehat{A} $ also has full rank.

By Theorem~\ref{th correction capability injection} and Proposition~\ref{prop relation subspace and sum-rank distance}, if $ 2t + \rho < n-k $, then a minimum sum-injection distance decoder would give us the message. However, summing in $ i $ Equation (4.54) in \cite[Sec.~4.4]{silva-thesis}, we have that
$$ \dd_{SI}(\Col_{\Sigma}(\mathcal{I}_{\Sigma}(\mathbf{c})), \Col_{\Sigma}(Y)) $$
$$ = \wt_{SR}(\mathbf{y} - \mathbf{c} \widehat{A}^T) + \sum_{i=1}^\ell [ n_i - N_i ]^+, $$
where $ \mathbf{y} = (\mathbf{y}^{(1)}, \mathbf{y}^{(2)}, \ldots, \mathbf{y}^{(\ell)}) $. Since the sum of $ [ n_i - N_i ]^+ $ does not depend on $ \mathbf{c} $, we may instead use a minimum sum-rank distance decoder. Therefore, we have reduced the problem to finding the message corresponding to $ \mathbf{c}^\prime = \mathbf{c} \widehat{A}^T \in \mathcal{C}^\prime = \mathcal{C} \widehat{A}^T $ that minimizes $ \wt_{SR}(\mathbf{y} - \mathbf{c}^\prime) $, and we can find a solution using the algorithm in Subsection~\ref{subsec quadratic reduction} for $ \mathcal{C}^\prime $.

\subsection{Iterative Encoding and Decoding} \label{subsec starting decoding beginning}

We now remark that encoding and decoding linearized Reed-Solomon codes can be done in an iterative manner. 

First, by the block-wise structure of the generator matrix (\ref{eq gen matrix of lin RS code}), we may send the packets corresponding to the $ i $th shot before computing those corresponding to the $ (i+1) $th shot. Second, note that Newton interpolation (Appendix~\ref{app newton interpolation}), the computation of the skew polynomials $ L $ and $ Q $, and the translation to sum-rank decoding as in (\ref{eq transforming basis to P-basis}) and (\ref{eq transforming received words}) are all of iterative nature. Therefore we may perform Steps 1 and 2 in the algorithm iteratively, first in $ i $ then in $ j $, as we receive the packets $ y^{(i)}_j \in \mathbb{F}_{q^m} $. %Assume that we perform $ \ell^\prime < \ell $ shots and the number of sum-rank errors is at most $ \lfloor \frac{\ell^\prime n^\prime - k}{2} \rfloor $. We may then perform Step 3 in the algorithm, whose complexity would be $ \mathcal{O}((\ell^\prime n^\prime - k) \ell^\prime n^\prime) $, and finish the decoding. 

%If $ q $ is sufficiently large, this also implies that $ \ell $ need not be fixed or known beforehand, and we may perform shots of the network until the sink is able to decode. Observe that this is not the case if we treat $ \ell $-shot network coding as $ 1 $-shot network coding with number of outgoing links $ n = \ell n^\prime $. This is especially advantageous if the information rate is high and the number of errors is low.

\section{Conclusion and Open Problems} \label{sec conclusion}

In this work, we have proposed the use of linearized Reed-Solomon codes
for reliability and security in multishot random linear network coding
under a worst-case adversarial model. We have shown that the
corresponding coding schemes achieve the maximum secret message size in
the coherent case, and close to maximum information rate in the
non-coherent case. Moreover, the encoding and decoding can be performed
iteratively and with overall quadratic complexity. Their advantage with
respect to simply using Gabidulin codes is that their field size is roughly
$ \max \{ \ell, q_0 \}^{n^\prime} $ (polynomial in $ \ell $), in contrast to 
$ q_0^{\ell n^\prime} $ (exponential in $ \ell $), where $ n^\prime $ 
is the number of outgoing links at the source, $ \ell $ is the number of shots,
and $ q_0 $ is the base field size of the underlying linear network code. 
This is translated into a reduction of more than one degree in $ \ell $ in the
encoding and decoding complexity in number of operations over $ \mathbb{F}_2 $ (Table \ref{table comparison}).

We now list some open problems for future research:

1) The number of shots is restricted to $ \ell < q $ when using
linearized Reed-Solomon codes. In the Hamming-metric case ($ n^\prime =
1 $), it is well-known that an MDS code satisfies $ \ell < 2q $
\cite{singleton}. Since MSRD codes are MDS, we deduce that $ \ell < 2
q^m / n^\prime $. We conjecture, but leave as an open problem, that MSRD
codes must satisfy $ \ell = \mathcal{O}(q) $ instead of $ \ell =
\mathcal{O}(q^m / n^\prime) = \mathcal{O}(q^m) $, for fixed values of $
m $ and $ n^\prime $.  

2) Faster decoding algorithms exist for Gabidulin codes
\cite{subquadratic, silva-fast}, corresponding to the case $ \ell = 1 $.
We leave as an open problem finding analogous reductions of the decoding
complexity of linearized Reed-Solomon codes.

3) Although unique decoding works analogously for all linearized
Reed-Solomon codes, list-decoding seems to differ in the Hamming-metric
($ n^\prime = 1 $) and rank-metric ($ \ell = 1 $) cases (see
\cite{antonia-list-decoding, list-decodable-rank-metric} and the
references therein). This suggests, but we leave as an open problem, that
the list-decoding of sum-rank codes and linearized Reed-Solomon codes
behaves differently with respect to $ n^\prime $ and $ \ell $.

4) Both the rank-metric list-decodable codes from
\cite{list-decodable-rank-metric} and most maximum rank distance codes
when $ n^\prime > m $  are only linear over $ \mathbb{F}_q $, instead of
$ \mathbb{F}_{q^m} $. We leave as an open problem the study of the security
performance of $ \mathbb{F}_q $-linear codes in multishot network
coding, as done in \cite{rgmw}.

5) Linearized Reed-Solomon codes require using the same base field $ q $
and packet length $ m $ in every shot of the network. We leave as an open
problem the construction of MSRD codes for different base fields and
packet lengths in different shots.

6) It seems natural for future research to study convolutional
sum-rank-metric codes rather than convolutional rank-metric codes. That
is, convolutional codes where we consider an $ \ell $-shot sum-rank
metric in each block rather than the rank metric.  We conjecture that
this should be similar to going from convolutional codes where each
block is simply a field symbol to classical Hamming-metric convolutional
codes.

\appendices

\section{Proofs for Section~\ref{sec measures raliable and secure}} \label{app proofs for section on measures}

In this appendix, we give the proofs of the main results in
Section~\ref{sec measures raliable and secure}. We start with the proof
of Theorem~\ref{theorem correction capability} in Subsection~\ref{subsec
measuring coherent}.

\begin{proof}[Proof of Theorem~\ref{theorem correction capability}]
First assume that the scheme is not $ t $-error and $ \rho $-erasure-correcting. Then there exist integers $ 0 \leq \rho_i \leq n_i $ and full-rank matrices $ A_i \in \mathbb{F}_q^{(n_i - \rho_i) \times n_i} $, for $ i = 1,2, \ldots, \ell $, and there exist vectors $ \mathbf{e}_1, \mathbf{e}_2 \in \mathbb{F}_{q^m}^{n- \rho} $, $ \mathbf{c}_1 \in \mathcal{X}_S $ and $ \mathbf{c}_2 \in \mathcal{X}_T $, where $ S \neq T $, such that
$$ \mathbf{c}_1 A^T + \mathbf{e}_1 = \mathbf{c}_2 A^T + \mathbf{e}_2, $$
where $ \rho = \rho_1 + \rho_2 + \cdots + \rho_\ell $, $ A = \diag(A_1, A_2, \ldots, A_\ell) \in \mathbb{F}_q^{(n - \rho) \times n} $, and $ \wt_{SR}(\mathbf{e}_1), \wt_{SR}(\mathbf{e}_2) \leq t $. By defining $ \mathbf{c} = \mathbf{c}_1 - \mathbf{c}_2 $, we see that 
$$ \wt_{SR}(\mathbf{c} A^T) = \wt_{SR}(\mathbf{e}_2 - \mathbf{e}_1) \leq 2t. $$
Take now full-rank matrices $ B_i \in \mathbb{F}_q^{\rho_i \times n_i} $ such that $ \left( \begin{array}{c}
A_i \\ 
B_i
\end{array} \right) \in \mathbb{F}_q^{n_i \times n_i} $ is invertible, for $ i = 1,2, \ldots, \ell $. If $ B = \diag(B_1, B_2, \ldots, B_\ell) \in \mathbb{F}_q^{\rho \times n} $ and
$$ D = \diag \left( \left( \begin{array}{c}
A_1 \\ 
B_1
\end{array} \right), \left( \begin{array}{c}
A_2 \\ 
B_2
\end{array} \right), \ldots, \left( \begin{array}{c}
A_\ell \\ 
B_\ell
\end{array} \right) \right) \in \mathbb{F}_q^{n \times n}, $$
then $ D $ is invertible and we conclude that
$$ \dd_{SR}(\mathbf{c}_1, \mathbf{c}_2) = \wt_{SR}(\mathbf{c}) = \wt_{SR}(\mathbf{c} D^T) \leq $$
$$ \wt_{SR}(\mathbf{c} A^T) + \wt_{SR}(\mathbf{c} B^T) \leq 2t + \rho, $$
thus $ 2t + \rho \geq \dd_{SR}({\rm Supp}(F)) $.

Assume now that $ 2t + \rho \geq \dd_{SR}({\rm Supp}(F)) $. Take $ \mathbf{c} = \mathbf{c}_1 - \mathbf{c}_2 $ such that $ \mathbf{c}_1 \in \mathcal{X}_S $, $ \mathbf{c}_2 \in \mathcal{X}_T $, $ S \neq T $, and $ \wt_{SR}(\mathbf{c}) \leq 2t + \rho $. There exist full-rank matrices $ B_i \in \mathbb{F}_q^{(2t_i + \rho_i) \times n_i} $ and vectors $ \mathbf{x}_i \in \mathbb{F}_{q^m}^{2t_i + \rho_i} $ such that $ \mathbf{c}^{(i)} = \mathbf{x}_i B_i $, for $ i = 1,2, \ldots, \ell $, where $ t = t_1 + t_2 + \cdots + t_\ell $ and $ \rho = \rho_1 + \rho_2 + \cdots + \rho_\ell $.

Now let $ \mathbf{y}_i \in \mathbb{F}_{q^m}^{2t_i} $ be the first $ 2t_i $ components of $ \mathbf{x}_i \in \mathbb{F}_{q^m}^{2t_i + \rho_i} $, and let $ A_i \in \mathbb{F}_q^{(n_i - \rho_i) \times n_i} $ be a parity-check matrix of the $ \mathbb{F}_q $-linear vector space generated by the last $ \rho_i $ rows in $ B_i \in \mathbb{F}_q^{(2t_i + \rho_i) \times n_i} $, for $ i = 1,2, \ldots, \ell $. Thus, it holds that 
$$ \mathbf{c}^{(i)}A_i^T = \mathbf{x}_i B_i A_i^T = (\mathbf{y}_i, \mathbf{0}) B_i A_i^T, $$
where $ \Rk(M_{\mathcal{A}}(\mathbf{y}_i, \mathbf{0}) B_i A_i^T) \leq \Rk(M_{\mathcal{A}}(\mathbf{y}_i)) \leq 2t_i $, for $ i = 1,2, \ldots, \ell $, and we conclude that $ \wt_{SR}(\mathbf{c} A^T) \leq 2t $, where $ A = \diag(A_1, A_2, \ldots, A_\ell) $. Hence there exists $ \mathbf{e}_1, \mathbf{e}_2 \in \mathbb{F}_{q^m}^{n-\rho} $ with $ \wt_{SR}(\mathbf{e}_1) $, $ \wt_{SR}(\mathbf{e}_1) \leq t $, such that $ \mathbf{c}_1 A^T + \mathbf{e}_1 = \mathbf{c}_2 A^T + \mathbf{e}_2 $. Therefore the codewords $ \mathbf{c}_1 $ and $ \mathbf{c}_2 $ correspond to distinct secret messages, but cannot be distinguished by any decoder. Thus the scheme is not $ t $-error and $ \rho $-erasure-correcting.
\end{proof}

We now prove Theorem~\ref{th correction capability injection} in Subsection~\ref{subsec measuring non-coherent}. We will make repeated use of the following lemma, which is a particular case of \cite[Lemma~15]{on-metrics}.

\begin{lemma} [\textbf{\cite{on-metrics}}] \label{lemma 15 in onmetrics}
For matrices $ X, Y \in \mathbb{F}_q^{m \times n} $, and integers $ \rho \geq 0 $ and $ N \geq n - \rho > 0 $, it holds that
\begin{equation*}
\begin{split}
\min \{ & \Rk(X A^T - Y B^T) \mid \\
& A,B \in \mathbb{F}_q^{N \times n}, \Rk(A), \Rk(B) \geq n - \rho \} \\
 = & [\max \{ \Rk(X), \Rk(Y) \} \\
 & - \dim(\Col (X) \cap \Col (Y) ) - \rho]^+,
\end{split}
\end{equation*}
%$$ \min \{ \Rk(X A^T - Y B^T) \mid A,B \in \mathbb{F}_q^{N \times n}, \Rk(A), \Rk(B) \geq n - \rho \} = $$
%$$ [\max \{ \Rk(X), \Rk(Y) \} - \dim(\langle X \rangle \cap \langle Y \rangle) - \rho]^+, $$
where $ a^+ = \max \{ 0,a \} $, for $ a \in \mathbb{Z} $. 
\end{lemma}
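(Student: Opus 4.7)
The plan is to prove the stated identity by establishing matching upper and lower bounds. Write $ x = \Rk(X) $, $ y = \Rk(Y) $, $ W = \Col(X) \cap \Col(Y) $, $ w = \dim W $, and $ r^* = [\max\{x, y\} - w - \rho]^+ $; without loss of generality assume $ x \geq y $.

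For the lower bound I would pass to the quotient $ \pi \colon \mathbb{F}_q^m \to \mathbb{F}_q^m / W $ and note the crucial property $ \pi(\Col X) \cap \pi(\Col Y) = \{0\} $. Any valid $ A $ satisfies $ \Rk(XA^T) \geq x - \rho $ by Sylvester's rank inequality, and since $ \Col(XA^T) \cap W $ has dimension at most $ w $, one gets $ \Rk(\pi \circ XA^T) \geq x - w - \rho $; symmetrically, $ \Rk(\pi \circ YB^T) \geq y - w - \rho $. The key auxiliary fact is a small sub-lemma: for any two matrices $ P, Q $ of the same shape with $ \Col(P) \cap \Col(Q) = 0 $, one has $ \Rk(P - Q) \geq \max\{\Rk P, \Rk Q\} $. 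This follows by identifying the columns of $ P - Q $ bijectively with pairs in $ \Col P \oplus \Col Q $ under the direct-sum isomorphism $ (p, q) \mapsto p - q $, and observing that the span of these pairs surjects onto each summand. Applying the sub-lemma to $ \pi \circ XA^T $ and $ \pi \circ YB^T $, together with the monotonicity $ \Rk(M) \geq \Rk(\pi \circ M) $, yields $ \Rk(XA^T - YB^T) \geq x - w - \rho $, which combined with the trivial $ \Rk \geq 0 $ gives the desired bound $ r^* $.

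For the upper bound I would build $ A, B $ by prescribing the columns of $ XA^T $ and $ YB^T $ directly as vectors in $ \Col X $ and $ \Col Y $. Any such column prescription lifts to matrices of the required rank: because $ N \geq n - \rho $, there are always enough ``free'' zero-columns in $ XA^T $ (respectively $ YB^T $) to insert a full basis of $ \ker X $ (resp.\ $ \ker Y $) into the corresponding rows of $ A $ (resp.\ $ B $), forcing $ \Rk(A), \Rk(B) \geq n - \rho $. Fix a subspace $ D \subseteq \Col X $ of dimension $ x - \rho $ containing $ W $ (feasible when $ r^* > 0 $, equivalently $ \rho \leq x - w $) and split $ D = W \oplus D' $ with $ \dim D' = r^* $; similarly choose $ E' \subseteq \Col Y $ with $ E' \cap W = 0 $ and $ \dim E' = s^* := [y - w - \rho]^+ \leq r^* $. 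If $ r^* = 0 $, take $ XA^T = YB^T $ with column space any $ (x - \rho) $-dimensional subspace of $ W $, so $ M = 0 $. If $ r^* > 0 $, I would arrange the $ N $ columns in four blocks: $ w $ matched columns spanning $ W $ (difference zero); $ s^* $ paired columns with $ XA^T $-entry $ v_j \in D' $ and $ YB^T $-entry $ e_j \in E' $ (difference $ v_j - e_j $); $ r^* - s^* $ columns with $ XA^T $-entry $ v_j \in D' $ and $ YB^T $-entry $ 0 $ (difference $ v_j $); and the remainder zero on both sides.

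The main obstacle will be confirming that this construction attains $ r^* $ exactly, and not a larger value, in the asymmetric regime $ y > w + \rho $ where $ s^* > 0 $. A naive placement of the ``out-of-$W$'' columns of $ XA^T $ and $ YB^T $ at disjoint column indices would produce $ r^* + s^* $ independent contributions to $ M $ and miss the bound. The pairing trick above resolves this: the $ r^* $ nonzero columns of $ M $ all lie in $ D' \oplus E' $, and their projection onto $ D' $ recovers the independent vectors $ v_1, \dots, v_{r^*} $, so $ \Col(M) $ is exactly $ r^* $-dimensional. Together with the lower bound, this establishes the claimed equality.
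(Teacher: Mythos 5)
The paper does not actually prove this lemma: it is imported verbatim as a particular case of Lemma~15 of the cited reference \cite{on-metrics} (Silva--Kschischang), so there is no in-paper proof to compare yours against. Judged on its own, your argument is correct and self-contained. The lower bound is sound: Sylvester gives $\Rk(XA^T)\geq x-\rho$, passing to the quotient by $W$ costs at most $w$, the images $\pi(\Col X)$ and $\pi(\Col Y)$ do intersect trivially, and your sub-lemma (rank of a difference dominates each rank when the column spaces meet only in $0$, via projection onto a direct summand) is valid. The achievability construction with the four column blocks also works, and your observation that the out-of-$W$ columns must be \emph{paired} rather than placed at disjoint indices is exactly the point that makes the construction tight. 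The one place where your wording overreaches is the liftability claim: it is not always true that there are enough zero columns of $YB^T$ to host a \emph{full} basis of $\ker Y$ (e.g.\ when $y<w+\rho$ and $N=n-\rho$, the $w$ matched columns may leave only $N-w<n-y$ free slots). What is true, and what suffices, is that prescribed columns spanning a space of dimension $d_Y$ lift to some $B$ with $\Rk(B)=\min\{N,\,d_Y+n-y\}$, so the requirement is only $d_Y\geq y-\rho$ together with $N\geq n-\rho$; every block pattern you construct satisfies $d_X\geq x-\rho$ and $d_Y\geq y-\rho$, so the conclusion stands. With that one sentence repaired, the proof is complete.
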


\begin{proof}[Proof of Theorem~\ref{th correction capability injection}]
First assume that the scheme is not $ t $-error and $ \rho $-erasure-correcting. Then there exist integers $ 0 \leq \rho_i \leq n_i $ and matrices $ A_i, B_i \in \mathbb{F}_q^{N_i \times n_i} $ with $ \Rk(A_i), $ $ \Rk(B_i) \geq n_i - \rho_i $, for $ i = 1,2, \ldots, \ell $, and there exist vectors $ \mathbf{e}_1, \mathbf{e}_2 \in \mathbb{F}_{q^m}^N $, $ \mathbf{c}_1 \in \mathcal{X}_S $ and $ \mathbf{c}_2 \in \mathcal{X}_T $, where $ S \neq T $, such that
$$ \mathbf{c}_1 A^T + \mathbf{e}_1 = \mathbf{c}_2 B^T + \mathbf{e}_2, $$
where $ \rho = \rho_1 + \rho_2 + \cdots + \rho_\ell $, $ A = \diag(A_1, A_2, \ldots, A_\ell) $, $ B = \diag(B_1, B_2, \ldots, B_\ell) \in \mathbb{F}_q^{N \times n} $, and $ \wt_{SR}(\mathbf{e}_1) $, $ \wt_{SR}(\mathbf{e}_2) \leq t $. We have that
$$ \wt_{SR}(\mathbf{c}_1 A^T - \mathbf{c}_2 B^T) = \wt_{SR}(\mathbf{e}_2 - \mathbf{e}_1) \leq 2t. $$
By Lemma~\ref{lemma 15 in onmetrics}, we also have that
\begin{equation*}
\begin{split}
\wt_{SR}(\mathbf{c}_1 A^T - \mathbf{c}_2 B^T) \geq & \sum_{i=1}^\ell \left[ \max \{ \wt_R(\mathbf{c}_1^{(i)}), \wt_R(\mathbf{c}_2^{(i)}) \} \right. \\
- & \left. \dim(\Col(\mathbf{c}_1^{(i)}) \cap \Col(\mathbf{c}_2^{(i)})) - \rho_i \right]^+ \\
\geq & \dd_{SI}(\Col_{\Sigma}(\mathbf{c}_1 ) , \Col_{\Sigma}(\mathbf{c}_2)) - \rho,
\end{split}
\end{equation*}
%$$ \wt_{SR}(\mathbf{c}_1 A^T - \mathbf{c}_2 B^T) \geq \sum_{i=1}^\ell \left[ \max \{ \wt_R(\mathbf{c}_1^{(i)}), \wt_R(\mathbf{c}_2^{(i)}) \} \right. $$
%$$ \left. \dim(\Col(\mathbf{c}_1^{(i)}) \cap \Col(\mathbf{c}_2^{(i)})) - \rho_i \right]^+ $$
%$$ \geq \dd_{SI}(\Col_S(\mathbf{c}_1 ) , \Col_S(\mathbf{c}_2)) - \rho, $$
and we conclude that $ 2t + \rho \geq \dd_{SI}( \Col_{\Sigma}({\rm Supp}(F))) $.

Assume now that $ 2t + \rho \geq \dd_{SI}(\Col_{\Sigma} ({\rm Supp}(F))) $. Take $ \mathbf{c}_1 \in \mathcal{X}_S $, $ \mathbf{c}_2 \in \mathcal{X}_T $, $ S \neq T $, such that $ \dd_{SI}(\Col_{\Sigma}( \mathbf{c}_1 ) , \Col_{\Sigma}( \mathbf{c}_2 )) \leq 2t + \rho $. Define now $ m_i = \max \{ \wt_R(\mathbf{c}_1^{(i)}), \wt_R(\mathbf{c}_2^{(i)}) \} - \dim(\Col( \mathbf{c}_1^{(i)} ) \cap \Col( \mathbf{c}_2^{(i)} )) $, for $ i = 1,2, \ldots, \ell $, and
$$ \delta = \left[ \sum_{i=1}^\ell m_i - \rho \right]^+ - \left( \sum_{i=1}^\ell m_i - \rho \right) \geq 0. $$
Next let $ 0 \leq \rho_i \leq m_i $, for $ i = 1,2, \ldots, \ell $, be such that $ \rho = \sum_{i=1}^\ell \rho_i + \delta $, which implies that
$$ \sum_{i=1}^\ell (m_i - \rho_i) = \left[ \sum_{i=1}^\ell m_i - \rho \right]^+. $$
By Lemma~\ref{lemma 15 in onmetrics}, there exist matrices $ A_i, B_i \in \mathbb{F}_q^{N_i \times n_i} $ such that $ \Rk(A_i) $, $\Rk(B_i) \geq n_i - \rho_i $, for $ i = 1,2, \ldots, \ell $, and
$$ \wt_{SR}( \mathbf{c}_1 A^T - \mathbf{c}_2 B^T) = \sum_{i=1}^\ell (m_i - \rho_i) = \left[ \sum_{i=1}^\ell m_i - \rho \right]^+ $$
$$ = \left[ \dd_{SI}(\Col_{\Sigma}(\mathbf{c}_1) , \Col_{\Sigma}(\mathbf{c}_2)) - \rho \right]^+ \leq 2t, $$
where $ A = \diag(A_1, A_2, \ldots, A_\ell) $ and $ B = \diag(B_1, $ $ B_2, $ $ \ldots, $ $ B_\ell) $. As in the previous proof, the codewords $ \mathbf{c}_1 $ and $ \mathbf{c}_2 $ correspond to distinct secret messages, but cannot be distinguished by any decoder. Thus the scheme is not $ t $-error and $ \rho $-erasure-correcting.
\end{proof}

We conclude this appendix with the proof of Lemma~\ref{lemma information leakage calculation}.

\begin{proof}[Proof of Lemma~\ref{lemma information leakage calculation}]
Define the map $ f : \mathbb{F}_{q^m}^n \longrightarrow \mathbb{F}_{q^m}^\mu $ given by
$$ f(\mathbf{c}) = \mathbf{c} B^T, $$
for $ \mathbf{c} \in \mathbb{F}_{q^m}^n $. Observe that $ f $ is a linear map over $ \mathbb{F}_{q^m} $. For the random variable $ X = F(S) $, it follows that
$$ H(X B^T) = H(f(X)) \leq \log_{q^m} | f(\mathcal{C}_1) | $$ 
$$ = \dim(f(\mathcal{C}_1)) = \dim(\mathcal{C}_1) - \dim(\ker (f) \cap \mathcal{C}_1), $$
where the last equality is the well-known first isomorphism theorem. On the other hand, we may similarly compute the conditional entropy as follows. Recall that $ X $ is uniform given $ S $. We have that
$$ H(X B^T \mid S) = H(f(X) \mid S) = \log_{q^m} | f(\mathcal{C}_2) | $$  
$$ = \dim(f(\mathcal{C}_2)) = \dim(\mathcal{C}_2) - \dim(\ker (f) \cap \mathcal{C}_2). $$
We leave for the reader to prove as an exercise that $ \ker(f) = \mathcal{V}_{(\boldsymbol{\mathcal{L}}^\perp)} = \mathcal{V}_{\boldsymbol{\mathcal{L}}}^\perp $, where $ \boldsymbol{\mathcal{L}}^\perp = (\mathcal{L}_1^\perp, \mathcal{L}_2^\perp, \ldots, \mathcal{L}_\ell^\perp) $ (use that $ B_i $ is a parity-check matrix of $ \mathcal{L}_i^\perp $, for $ i = 1,2, \ldots, \ell $). Therefore
$$ I(S; X B^T) = H(X B^T) - H(X B^T \mid S) $$
$$ \leq (\dim(\mathcal{C}_1) - \dim(\mathcal{V}_{\boldsymbol{\mathcal{L}}}^\perp \cap \mathcal{C}_1)) - (\dim(\mathcal{C}_2) - \dim(\mathcal{V}_{\boldsymbol{\mathcal{L}}}^\perp \cap \mathcal{C}_2)) $$
$$ = (\dim(\mathcal{V}_{\boldsymbol{\mathcal{L}}}) - \dim(\mathcal{C}_1^\perp \cap \mathcal{V}_{\boldsymbol{\mathcal{L}}})) - (\dim(\mathcal{V}_{\boldsymbol{\mathcal{L}}}) - \dim(\mathcal{C}_2^\perp \cap \mathcal{V}_{\boldsymbol{\mathcal{L}}})) $$
$$ = \dim(\mathcal{C}_2^\perp \cap \mathcal{V}_{\boldsymbol{\mathcal{L}}}) - \dim(\mathcal{C}_1^\perp \cap \mathcal{V}_{\boldsymbol{\mathcal{L}}}), $$
where the first equality follows from the dimensions formulas for duals, sums and intersections of vector spaces. 

Finally, if $ S $ is uniform in $ \mathcal{S} $, then all inequalities in this proof are equalities.
\end{proof}

\section{Newton Interpolation for Skew Polynomials} \label{app newton interpolation}

In this appendix, we show how to find the skew polynomials $ F_k $ and $ G $ necessary to initialize the algorithm in Subsection \ref{subsec quadratic reduction}. We use the idea behind Newton's interpolation algorithm for conventional polynomials. Newton interpolation for skew polynomials was investigated in \cite{zhang}. An algorithm with quadratic complexity based on K{\"o}tter interpolation was given in \cite[Sec. 4]{skew-interpolation}. The algorithm in this appendix is analogous, but with notation in accordance with the rest of this paper.

Let $ \mathcal{B}_k = \{ b_1, b_2, \ldots, b_k \} \subseteq \mathbb{F}_{q^m} $ be a P-independent set and let $ a_1, a_2, \ldots, a_k \in \mathbb{F}_{q^m} $. By definition of P-independence, the sets $ \mathcal{B}_i = \{ b_1, b_2, \ldots, b_i \} $ are also P-independent, for $ i = $ $1, $ $2, $ $ \ldots, k $. It also holds that $ F_{\mathcal{B}_i}(b_{i+1}) \neq 0 $, for $ i = $ $ 1, $ $2, $ $ \ldots, k-1 $, by definition. 

We have that $ F_{\mathcal{B}_1} = x-b_1 $, and $ G_1 = a_1 $ satisfies that $ \deg(G_1) < 1 $ and $ G_1(b_1) = a_1 $. Now, let $ 1 \leq i \leq k-1 $ and assume that we have computed $ F_{\mathcal{B}_i} $ and $ G_i $ such that $ \deg(G_i) < i $ and $ G_i(b_j) = a_j $, for $ j = 1,2, \ldots, i $. Then by Lemma \ref{lemma product rule}, it holds that
\begin{equation*}
\begin{split}
F_{\mathcal{B}_{i+1}} = & (x - b_{i+1}^{F_{\mathcal{B}_i}(b_{i+1})}) F_{\mathcal{B}_i}, \\
G_{i+1} = & G_i + (a_{i+1} - G_i(b_{i+1})) F_{\mathcal{B}_i}(b_{i+1})^{-1} F_{\mathcal{B}_i},
\end{split}
\end{equation*}
where $ G_{i+1} $ is the unique skew polynomial with $ \deg(G_{i+1}) < i+1 $ and $ G_{i+1}(b_j) = a_j $, for $ j = 1,2, \ldots, i+1 $. Thus the required skew polynomials are $ F_k = F_{\mathcal{B}_k} $ and $ G = G_k $.

This algorithm requires computing the evaluations $ F_{\mathcal{B}_i}(b_{i+1}) $ and $ G_i(b_{i+1}) $, where $ \deg(F_{\mathcal{B}_i}), \deg(G_i) \leq i < k $, which requires $ \mathcal{O}(k) $ multiplications and $ \mathcal{O}(k) $ additions. It requires another $ \mathcal{O}(k) $ multiplications and $ \mathcal{O}(k) $ additions in each of the $ k-1 $ steps. Thus its complexity is of $ \mathcal{O}(k^2) $ multiplications and $ \mathcal{O}(k^2) $ additions in $ \mathbb{F}_{q^m} $.

\ifCLASSOPTIONcaptionsoff
  \newpage
\fi

% trigger a \newpage just before the given reference
% number - used to balance the columns on the last page
% adjust value as needed - may need to be readjusted if
% the document is modified later
%\IEEEtriggeratref{8}
% The "triggered" command can be changed if desired:
%\IEEEtriggercmd{\enlargethispage{-5in}}

% references section

% can use a bibliography generated by BibTeX as a .bbl file
% BibTeX documentation can be easily obtained at:
% http://www.ctan.org/tex-archive/biblio/bibtex/contrib/doc/
% The IEEEtran BibTeX style support page is at:
% http://www.michaelshell.org/tex/ieeetran/bibtex/
\bibliographystyle{IEEEtran}
% argument is your BibTeX string definitions and bibliography database(s)
%\bibliography{multishotbib2}
%
% <OR> manually copy in the resultant .bbl file
% set second argument of \begin to the number of references
% (used to reserve space for the reference number labels box)

% Generated by IEEEtran.bst, version: 1.14 (2015/08/26)

% Generated by IEEEtranS.bst, version: 1.14 (2015/08/26)

% biography section
% 
% If you have an EPS/PDF photo (graphicx package needed) extra braces are
% needed around the contents of the optional argument to biography to prevent
% the LaTeX parser from getting confused when it sees the complicated
% \includegraphics command within an optional argument. (You could create
% your own custom macro containing the \includegraphics command to make things
% simpler here.)
%\begin{IEEEbiography}[{\includegraphics[width=1in,height=1.25in,clip,keepaspectratio]{mshell}}]{Michael Shell}
% or if you just want to reserve a space for a photo:

% insert where needed to balance the two columns on the last page with
% biographies
%\newpage

% You can push biographies down or up by placing
% a \vfill before or after them. The appropriate
% use of \vfill depends on what kind of text is
% on the last page and whether or not the columns
% are being equalized.

%\vfill

% Can be used to pull up biographies so that the bottom of the last one
% is flush with the other column.
%\enlargethispage{-5in}

% that's all folks
\end{document}